\newcommand{\tr}{\mathrm{tr}}
\newcommand{\onen}{\frac{1}{n}}
\newcommand{\sumn}{\sum_{i=1}^n}
\newcommand{\purple}{\color{black}}
\newtheorem{theorem}{Theorem}
\newtheorem{lemma}{Lemma}
\theoremstyle{definition}
\newtheorem{remark}{Remark}
\begin{document}
\title{Optimal Subsampling Algorithms for Big Data Regressions}
\author{Mingyao Ai$^1$, Jun Yu$^{1,2}$, Huiming Zhang$^1$, HaiYing Wang$^3$\\[9mm]
  LMAM, School of Mathematical Sciences and Center\\
  for Statistical Science, Peking University $^1$\\[2mm]
  School of Mathematics and Statistics, Beijing Institute of Technology $^2$\\[2mm]
Department of Statistics University of Connecticut $^3$\\
  haiying.wang@uconn.edu}
\date{}
\maketitle
\begin{abstract}
To fast approximate maximum likelihood estimators with massive data, this paper studies the {\bf o}ptimal {\bf s}ubsampling {\bf m}ethod under the {\bf A}-optimality {\bf c}riterion (OSMAC) for generalized linear models. The consistency and asymptotic normality of the estimator from a general subsampling algorithm are established, and optimal subsampling probabilities under the A- and L-optimality criteria are derived. Furthermore, using Frobenius norm matrix concentration inequalities, finite sample properties of the subsample estimator based on optimal subsampling probabilities are also derived. Since the optimal subsampling probabilities depend on the full data estimate, an adaptive two-step algorithm is developed. Asymptotic normality and optimality of the estimator from this adaptive algorithm are established.
The proposed methods are illustrated and evaluated through numerical experiments on simulated and real datasets.
\end{abstract}

\noindent {\it Key words:} generalized linear models; massive data; matrix concentration inequality. 

\section{Introduction}\label{sec:intro}

Nowadays, massive data sets are ubiquitous in many scientific fields and practices such as in astronomy, economics, and industrial problems. Extracting useful information from these large data sets is a core challenge for different communities including computer science, machine learning, and statistics. Over last decades, progresses have been made through various investigations to meet this challenge. However, computational limitations still exist due to the faster growing pace of data volumes. For this, subsampling is a popular technique to extract useful information from massive data. This paper focuses on this technique and will develop optimal subsampling strategies for generalized linear models (GLMs).  Typically the maximum likelihood estimators (MLE) 
 are found numerically by using the Newton-Raphson method.
However, fitting a GLM on massive data is not an easy task through the iterative Newton-Raphson method, and it requires $O(p^2n)$ time in each iteration of the optimization procedure.

An efficient way to solve this problem  is  the subsampling method \citep[see][as an example]{drineas2006sampling} as this method essentially downsizes the data volume.
\cite{Drineas2011Faster} proposed to make a randomized Hadamard transform on data and then use uniform subsampling to take random subsamples to approximate ordinary least square estimators in linear regression models. \cite{Ma2015A}, \cite{Ma2015review} developed an effective subsampling method for linear regression models, which uses normalized statistical leverage scores of the covariate matrix as non-uniform subsampling probabilities. \cite{Jia2014Influence} studied leverage sampling for GLMs based on generalized statistical leverage scores. \cite{Wang2017Optimal} and \cite{yao2018optimal} developed an optimal subsampling procedure to minimize the asymptotic  mean squared error (MSE) of the resultant subsample-estimator given the full data which is based on $A$- or $L$-optimality criterion in the language of optimal design. \cite{Wang2017Information} proposed a new algorithm called information-based optimal subdata selection method for linear regressions on big data.
The basic idea is to select the most informative data points deterministically based on $D$-optimality without relaying on random subsampling. A divide-and-conquer version of the algorithm was developed in \cite{wang2019divide}. Recent developments of big data subsampling method can be found in
\cite{Wang2016review}.

Methodological investigations on subsampling methods with statistical guarantees for massive data regression are still limited when models are complex. {To the best of our knowledge, most of the existing results concern linear regression models such as in \cite{Ma2015A} and \cite{Wang2017Information}. The optimal subsampling method in \cite{Wang2017Optimal} and \cite{yao2018optimal} is designed specifically for logistic and multinomial regression models, respectively.} { However, only linear and logistic regressions are not enough to meet practical needs \citep{Claudia2000Noncanonical}. For example, we may need Poisson  or negative binomial distribution for count data and need Gamma or inverse Gaussian distribution for data with non-negative responses.} In addition, the aforementioned investigations did not consider finite sample properties of subsampled estimators. 
In this paper, we fill these gaps by deriving optimal subsampling probabilities for GLMs, including these with non-canonical link functions { which allow for a wide range of statistical models for regression analysis.} Furthermore, we will derive finite-sample upper bounds for approximation errors that can be practically used to make the balance between the subsample size and prediction accuracy. Due to the non-natural link, our investigation is substantially distinct from the that in \cite{Wang2017Optimal}. For example, the Hessian matrix in the models considered in this paper may be dependent on responses.  

The rest of this paper is organized as follows. Section \ref{sec:model} introduces the model setup and derives asymptotic properties for the general subsampling estimator.
 Section \ref{sec:appr-optim-subs} derives  optimal subsampling strategies based on $A$- and $L$-optimality criteria  for GLMs. Finite-sample error bounds are also derived in this section.
Section \ref{sec:two-step} designs a two-step algorithm to approximate the optimal subsampling procedure and obtains asymptotic properties of the resultant estimator. Section \ref{sec:experiments} illustrates our methodology through numerical simulations and a real data applications.  

\section{Preliminaries}\label{sec:model}

\subsection{Models and Assumptions}\label{sec:glm}
Recall the definition of one parameter exponential family of distributions $f(y|\theta)=h(y)\exp(\theta y -\psi(\theta)), \theta \in \Theta $
as in (5.50) of \cite{Efron2016}, where $\theta$ is called the canonical parameter and $\Theta$ is called the natural parameter space. Here $f(\cdot|\theta)$ is a probability density function for the continuous case or a probability mass function for the discrete case; $h(\cdot)$ is a specific function that does not depend on $\theta$; and the parameter space $\Theta$ is defined as $\Theta:= \{ \theta \in \mathbb{R}:\: \int h(x) \exp(\theta x)\mu(dx)<\infty \}$ with $\mu$ being the dominating measure. 
 The exponential family includes most of the commonly used distributions such as normal, gamma, Poisson, and binomial distributions \citep[see][]{Mccullagh1989Generalized,Efron2016}.

A key tactic for a generalized linear regression model is to express $\theta$ in form of a linear function of regression coefficients.
Let $(\bm x,y)$ be a pair of random variables where $y \in\mathbb{R}$ and $\bm x\in \mathbb{R}^{p}$. The generalized linear regression model assumes that the conditional distribution of $y_i$ given $\bm x_i$ is determined by $\theta_i=u({{\bm\beta}^{T}{\bm x}_{i}})$. Specifically for exponential family, it assumes that the distribution of $y\vert \bm x$ is
\begin{equation}\label{eq:1}
  f(y\vert\bm\beta,\bm x)=h(y)\exp(yu({{\bm\beta}^{T}{\bm x}_{i}})-\psi(u({{\bm\beta}^{T}{\bm x}_{i}}))),
  \quad\text{ with }\quad {\bm\beta}^{T}\bm x \in \Theta.
\end{equation}
The problem of interest is to estimate the unknown $\bm\beta$ from observed data.
As special case when $u(t)=t$, the corresponding models are the so-called GLMs with canonical link functions. Some typical examples of this type GLMs are logistic regression for binary data and Poisson regression for count data. {A commonly used GLM with non-canonical link function is negative binomial regression (NBR), which is often used as an alternative to Poisson regression when data exhibit overdispersion. 
For this model, $u(t) = t - \log (\nu  + e^t)$ and $\psi (u(t)) = \nu \log (\nu + e^t)$ for some size parameter $\nu$.}

\subsection{General Subsampling Algorithm and its Asymptotic Properties}\label{sec:subsamplingalg}
In this subsection, we study a general subsampling algorithm for GLMs and obtain some asymptotic results.

{To facilitate the presentation, denote the full data matrix by $\mathcal{F}_n=(\bm X, \bm y)$, where
$\bm X=(\bm x_1, \dots,\bm x_n)^T$ 
is the covariate matrix and $\bm y=(y_1,\dots,y_n)^T$ is the response vector. In this paper, we assume that $(\bm x_i, y_i)$'s are independently generated from a GLM.}
Let $S$ be a set of subsample with $r$ data points, and define the sampling distribution $\pi_i$ for all data points $i=1,2,...n$ as $\bm \pi$. 
A general subsampling algorithm follows the steps below.
\begin{enumerate}
  \item Assign a sampling distribution $\bm \pi$ such that in each draw, the $i$-th element in the full dataset $\mathcal{F}_n$ has the inclusion probability $\pi_i$.
  \item Sample with replacement $r$ times to form the subsample set $S := \{(y_{i}^*,\bm {x}_{i}^*,\pi_{i}^*),i=1,\ldots,r\}$, where $\bm x^{*}_i$, $y^{*}_i$, and $\pi_i^{*}$ stand for covariates, responses, and subsampling probabilities in the subsample, respectively.
  \item Based on the subsample set $S$, calculate the weighted log-likelihood estimator by maximizing the following function
  \begin{equation}\label{eq:reweigt2}
      L^*(\bm\beta)=\frac{1}{r}\sum^r_{t=1}\frac{1}{{\pi}_{i}^*}[y_{i}^*u({\bm \beta}^{T}\bm x_{i}^*)-\psi(u({\bm \beta}^{T}\bm x_{i}^*))].
    \end{equation}
\end{enumerate}

An important feature of the above algorithm  is that subsample estimator is essentially a weighted MLE and the corresponding weights are inverses of subsampling probabilities.
 This is analogous to the Hansen-Hurwitz estimator \citep{Hansen1943On} in classic sampling techniques.
 For an overview see \cite{Sarndal1992Model}.
 Although \cite{Ma2015A} showed that the unweighted subsample estimator is asymptotically unbiased for $\bm\beta$ in leveraging sampling, an unweighted subsample estimator is in general biased if the sampling distribution $\bm \pi$ depends on the responses. The inverse-probability weighting scheme is to remove bias, and we restrict our analysis on the weighted estimator here.

Let $\dot\psi(t)$ and $\ddot\psi(t)$ be the first and the second derivatives of $\psi(t)$, respectively. To characterize asymptotic properties of subsampled estimators, we require some regularity assumptions listed below. 
 \begin{itemize}
\item [] (H.1):  Assume that ${\bm\beta}^{T}\bm x$ lies in the interior of a compact set $K \in \Theta$ almost surely.

\item [] (H.2): The regression coefficient $\bm\beta$ is a inner point of the compact domain $\Lambda _B=\left\lbrace  \bm\beta \in \mathbb{R}^{p}: \|\bm\beta\| \le B \right\rbrace$ for some constant $B$.

\item [] (H.3): Central moments condition:  $n^{-1}\sum^n_{i=1}|y_i-\dot{\psi}(u(\bm\beta^T\bm x_i))|^4 ={O_P}(1)$
for all $\bm\beta \in \Lambda _B$.

\item [] (H.4): As $n\rightarrow\infty$, the observed information matrix \[\begin{array}{l}
    \mathcal{J}_X:=\frac{1}{n}\sum\limits_{i = 1}^n {\{ \ddot u(} {\hat{\bm\beta}_{\rm MLE} ^T}{\bm x_i}){\bm x_i}\bm x_i^T[\dot \psi (u({\hat{\bm\beta}_{\rm MLE} ^T}{\bm x_i})) - {y_i}] \\
    \hspace{5cm}+ \ddot \psi (u({\hat{\bm\beta}_{\rm MLE} ^T}{\bm x_i})){\dot u^2}({\hat{\bm\beta}_{\rm MLE} ^T}{\bm x_i}){\bm x_i}\bm x_i^T]\}
\end{array}\]
goes to a positive-definite matrix in probability.

\item [] (H.5): Require that the full sample { covariates} have finite 6th-order moments, i.e., $E{{{\left\| {{\bm x_1}} \right\|}^6}} {\rm{ \le }}\infty.$

\item [] (H.6): Assume
${n^{-2}}\sum^n_{i=1}
     \|\bm x_i\|^k/\pi_i={O_P}(1)$ for $k=2,4$.

\item [] (H.7): For $\gamma=0$ and some $\gamma>0$, assume
$$\frac{1}{n^{2+\gamma}}
    \sum^n_{i=1}\frac{|y_i-\dot{\psi}_i(u(\hat{\bm\beta}_{\rm MLE}^T\bm x_i))|^{2+\gamma}
    \|\dot{u}(\hat{\bm\beta}_{\rm MLE}^T\bm x_i)\bm x_i\|^{2+\gamma}}{\pi_i^{1+\gamma}}={O_P}(1)
.$$

\end{itemize}

Here, assumptions (H.1) and (H.2) are the set of assumptions used in \cite{clemencon2014}.  
The set in (H.2) is also called admissible set which 
premises for consistency estimation for GLMs with full data \citep[see][]{Fahrmeir1985}. 
{These two assumptions ensure that $E(y_i|\bm x_i)<\infty$ for all $i$.}
{Assumption (H.4) imposes a condition on the covariates to make sure that the MLE based on the full dataset is consistent. To obtain the Bahadur representation of the subsampled estimator, (H.3) and (H.5) are needed.} {  Assumptions (H.6) and (H.7) are moment conditions on covariates and sub-sampling probabilities. 
Assumption (H.7) is required by the Lindeberg-Feller central limit theorem.
Specifically for the uniform subsampling with $\pi_i=n^{-1}$ or more generally when $\max_{i=1,\ldots,n}(n\pi_i)^{-1}=O_P(1)$, (H.7) is implied by that $n^{-1}
    \sum^n_{i=1}{|y_i-\dot{\psi}_i(u(\hat{\bm\beta}_{\rm MLE}^T\bm x_i))|^{2+\gamma}
    \|\dot{u}(\hat{\bm\beta}_{\rm MLE}^T\bm x_i)\bm x_i\|^{2+\gamma}}={O_P}(1),$ which is guaranteed by the conditions that $E|y|^{4+2\gamma}=O(1)$ 
under (H.1) and (H.5).} 

 The theorem below presents the consistency of the estimator from the subsampling algorithm to the full data MLE.
 \begin{theorem}\label{thm:as-general-alg}
  If Assumptions (H.1)--(H.6) hold, then as $n\rightarrow\infty$ and $r\rightarrow\infty$, $\tilde{\bm\beta}$ is consistent to $\hat{\bm\beta}_{\rm MLE}$ in conditional probability
  given $\mathcal{F}_n$. Moreover, the rate of convergence is $r^{-1/2}$.
  That is, with probability approaching one, for any $\epsilon>0$, there exist  finite $\Delta_\epsilon$ and $r_\epsilon$ such that
  \begin{equation}\label{eq:18}
    P(\|\tilde{\bm\beta}-\hat{\bm\beta}_{\rm MLE}\|\ge
    r^{-1/2}\Delta_\epsilon|\mathcal{F}_n)<\epsilon
  \end{equation}
  for all $r>r_\epsilon$.
\end{theorem}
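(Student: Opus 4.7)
The plan is to exploit the concavity of $L^*(\bm\beta)$ (which follows from the convexity of $\psi$) together with the fact that, conditional on $\mathcal{F}_n$, $L^*(\bm\beta)$ is an inverse-probability-weighted unbiased estimator of the full-data log-likelihood: $E[L^*(\bm\beta)\mid\mathcal{F}_n]=\sum_{i=1}^n\pi_i\cdot\pi_i^{-1}[y_i\bm\beta^T\bm x_i-\psi(\bm\beta^T\bm x_i)]=L(\bm\beta)$. I would proceed in two stages: first establish consistency $\tilde{\bm\beta}\to\hat{\bm\beta}_{\rm MLE}$ in conditional probability given $\mathcal{F}_n$, and then upgrade to the $r^{-1/2}$ rate by linearizing the subsample score equation at $\hat{\bm\beta}_{\rm MLE}$. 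For consistency, I would work with the rescaled objective $n^{-1}L^*(\bm\beta)$: under (H.5) each inverse weight $\pi_i^{-1}$ is $O_P(n)$, so each summand in $L^*$ is $O_P(n)$; after dividing by $n$ the average of $r$ iid draws has conditional variance $O_P(r^{-1})$ at each $\bm\beta$, with limit $n^{-1}L(\bm\beta)$. Because both $n^{-1}L$ and $n^{-1}L^*$ are concave in $\bm\beta$ and $\hat{\bm\beta}_{\rm MLE}$ is the unique maximizer of the former (by (H.4)), the standard convexity lemma for maximizers of concave random processes (Hjort and Pollard) delivers $\tilde{\bm\beta}\to\hat{\bm\beta}_{\rm MLE}$ in conditional probability.

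For the rate, I would use the first-order condition $\dot L^*(\tilde{\bm\beta})=\bm 0$ and Taylor expand,
\begin{equation*}
\bm 0=\dot L^*(\hat{\bm\beta}_{\rm MLE})+\ddot L^*(\bar{\bm\beta})(\tilde{\bm\beta}-\hat{\bm\beta}_{\rm MLE}),
\end{equation*}
with $\bar{\bm\beta}$ on the segment between $\tilde{\bm\beta}$ and $\hat{\bm\beta}_{\rm MLE}$. Two conditional estimates close the argument. First, $\dot L^*(\hat{\bm\beta}_{\rm MLE})$ has conditional mean zero since $\dot L(\hat{\bm\beta}_{\rm MLE})=\bm 0$, and its conditional second moment is $r^{-1}\sum_i\pi_i^{-1}\{y_i-\dot\psi(\hat{\bm\beta}_{\rm MLE}^T\bm x_i)\}^2\|\bm x_i\|^2=O_P(n^2/r)$ by (H.1), (H.3) and (H.5); hence $\dot L^*(\hat{\bm\beta}_{\rm MLE})=O_P(nr^{-1/2})$. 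Second, an analogous variance bound on $\ddot L^*(\hat{\bm\beta}_{\rm MLE})$, together with the Stage~1 consistency of $\bar{\bm\beta}$ and uniform continuity of $\ddot\psi$ on the compact set secured by (H.1) and (H.3), yields $\ddot L^*(\bar{\bm\beta})=-n\mathcal{J}_X+o_P(n)$, which by (H.4) is invertible with inverse of order $n^{-1}$. Plugging into the Taylor identity gives $\tilde{\bm\beta}-\hat{\bm\beta}_{\rm MLE}=-[\ddot L^*(\bar{\bm\beta})]^{-1}\dot L^*(\hat{\bm\beta}_{\rm MLE})=O_P(r^{-1/2})$, which is exactly \eqref{eq:18}.

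The main obstacle will be the variance control in the score step: one has to turn the rather weak inverse-weight bound (H.5) into the correct joint rate in $n$ and $r$, keeping in mind that the weights $1/\pi_{i_t}^*$ are themselves random through the sampled indices, so any naive $\max_i\pi_i^{-1}$ bound must be combined carefully with summation over $i$ rather than over the $r$ draws. The Hessian step then follows by the same style of variance computation together with an equicontinuity estimate on $\ddot L^*$ obtained from the boundedness of $\dddot\psi$ on the compact parameter set; the concavity of $L^*$ needed to identify $\tilde{\bm\beta}$ with the root of $\dot L^*$ is immediate from convexity of $\psi$ and positivity of the weights.
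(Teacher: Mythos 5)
Your proposal is correct and follows essentially the same route as the paper: pointwise conditional $L^2$ convergence of $n^{-1}L^*$ to $n^{-1}L$ combined with concavity/compactness and uniqueness of $\hat{\bm\beta}_{\rm MLE}$ for consistency, then a Taylor expansion of the subsample score at $\hat{\bm\beta}_{\rm MLE}$ with a conditional variance bound on the score (order $r^{-1/2}$ after normalizing by $n$) and convergence of the subsample Hessian to $\mathcal{J}_X$ to obtain the rate. The only differences are cosmetic: you invoke the Hjort--Pollard convexity lemma where the paper cites Theorem 5.9 of van der Vaart and its remark, and you use a mean-value form of the Hessian at an intermediate point (with an equicontinuity bound from $\dddot\psi$) where the paper uses the integral-form Lagrange remainder and its Lemma 3, the moment control of the residuals being supplied in both cases by the exponential-family moment bound implied by (H.1)--(H.3).
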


Besides consistency, we derive the asymptotic distribution of the approximation error, and prove that the approximation error,
$\tilde{\bm\beta}-\hat{\bm\beta}_{\rm MLE}$, is asymptotically normal in conditional distribution.

\begin{theorem}\label{thm:CLT}
  If Assumptions (H.1)--(H.7) hold, then as $n\rightarrow\infty$ and $r\rightarrow\infty$, conditional on $\mathcal{F}_n$ in probability,
  \begin{equation}\label{normal}
    V^{-1/2}(\tilde{\bm\beta}-\hat{\bm\beta}_{\rm MLE})
    \longrightarrow N(0,I)
  \end{equation}
  in distribution, where $V=\mathcal{J}_X^{-1}V_c\mathcal{J}_X^{-1}=O_{p}(r^{-1})$ and
  \begin{equation}\label{varorder}
    V_c=\frac{1}{rn^2}
    \sum_{i=1}^n\frac{\{y_i-\dot{\psi}(u(\hat{\bm\beta}_{\rm MLE}^T\bm x_i))\}^2{\dot u^2}(\hat{\bm\beta}_{\rm MLE}^T\bm x_i)\bm x_i\bm x_i^T}{\pi_i}.
  \end{equation}
\end{theorem}

\section{Optimal Subsampling Strategies}\label{sec:appr-optim-subs}

In this section, we will consider how to specify subsampling distribution $\bm\pi=\{\pi_i\}_{i=1}^{n}$ with theoretical backup. 

\subsection{Optimal Subsampling Strategies Based on Optimal Design Criteria}\label{sec:mV}
Based on $A$-optimality criterion in the theory of design of experiments \citep[see][]{pukelsheim2006optimal}, optimal subsampling  is to choose subsampling probabilities such that the asymptotic MSE of $\tilde{\bm\beta}$ is minimized. This idea was proposed  in \cite{Wang2017Optimal} and we call the resulting subsampling strategy {\it mV-optimal}.

\begin{theorem}\label{thm:3}
  { The} subsampling strategy is {\it mV-optimal} if the subsampling probability is chosen such that
  \begin{equation}\label{eq:pi-amse-w}
    \pi_i^{\mathrm{mV}}=
    \frac{|y_i-\dot{\psi}(u(\hat{\bm\beta}_{\rm MLE}^T\bm x_i))|\|\mathcal{J}_X^{-1}{\dot u}(\hat{\bm\beta}_{\rm MLE}^T\bm x_i)\bm x_i\|}
    {\sum_{j=1}^n|y_j-\dot{\psi}(u(\hat{\bm\beta}_{\rm MLE}^T\bm x_i))|\|\mathcal{J}_X^{-1}{\dot u}(\hat{\bm\beta}_{\rm MLE}^T\bm x_j)\bm x_j\|},\;
    i=1,2,...,n.
  \end{equation}
\end{theorem}

  The optimal subsampling probability $\bm{\pi}^{\mathrm{mV}}$  has a meaningful interpretation from the view-point of optimal design of experiments \citep{pukelsheim2006optimal}.

  Note that under mild condition the  ``empirical information matrix'' ${\mathcal{J}^e_X}=\frac{1}{n}\sum_{i=1}^n[y_i-\dot{\psi}(u(\hat{\bm\beta}_{\rm MLE}^T\bm x_i))]^2{\dot u^2}(\hat{\bm\beta}_{\rm MLE}^T\bm x_i)\bm x_i\bm x_i^T$ and $\mathcal{J}_X$ converge to the same limit, the Fisher information matrix of model~\eqref{eq:1}. This means that ${\mathcal{J}^e_X}-\mathcal{J}_X=o_P(1)$. Thus, $\mathcal{J}_X$ can be replaced by ${\mathcal{J}^e_X}$ in $\bm{\pi}^{\mathrm{mV}}$, because Theorem~\ref{thm:CLT} still holds if $\mathcal{J}_X$ is replaced by ${\mathcal{J}^e_X}$ in \eqref{varorder}.
    Let $\eta_{\bm x_i}=[y_i-\dot{\psi}(u(\hat{\bm\beta}_{\rm MLE}^T\bm x_i))]^2{\dot u^2}(\hat{\bm\beta}_{\rm MLE}^T\bm x_i)\bm x_i\bm x_i^T$, the contribution of the $i$-th observation to the empirical information matrix, and ${\mathcal{J}^e_{X\bm x_i\alpha}}=(1-\alpha){\mathcal{J}^e_X}+\alpha\eta_{\bm x_i}$, which can be interpreted as a movement of the information matrix in a direction determined by the $i$-th observation. The directional derivative of $\text{tr}({\mathcal{J}^e_X}^{-1})$ through the direction determined by the $i$th observation is $F_i=\lim_{\alpha\rightarrow0+} \alpha^{-1}\{\text{tr}({\mathcal{J}^e_X}^{-1})-\text{tr}({\mathcal{J}^e_{X\bm x_i\alpha}}^{-1})\}$. This directional derivative is used to measure the relative gain in estimation efficiency under the $A$-optimality by adding the $i$-th observations into the sample. 
Thus, the optimal subsampling strategy prefers to select the data points with large values of directional derivatives, i.e., data points that will result in a larger gain under the $A$-optimality. 

The optimal subsampling strategy  derived from $mV$-optimal critria requires the calculation of $\|\mathcal{J}_X^{-1}{\dot u}(\hat{\bm\beta}_{\rm MLE}^T\bm x_i)\bm x_i\|$ for $i=1,2,...,n$, which takes $O(np^2)$ time.
 To reduce the calculating time, \cite{Wang2017Optimal}  proposed a modified optimality criterion to minimize $\tr(V_c)$.
 This criterion essentially is the L-optimality criterion in optimal experimental design \citep[see][]{pukelsheim2006optimal}, which is to improve the quality of $\mathcal{J}_X\tilde{\bm\beta}$.
 It is easy to see that  only $O(np)$ time is needed to calculate the optimal sampling probabilities. We call the resulting subsampling strategy {\it mVc-optimal}.

\begin{theorem}\label{thm:5}
  { The} subsampling strategy is {\it mVc-optimal} if the subsampling probability is chosen such that
  \begin{equation}\label{eq:optimalBpi-w}
    \pi_i^{\mathrm{mVc}}=\frac{|y_i-\dot{\psi}(u(\hat{\bm\beta}_{\rm MLE}^T\bm x_i))|\|{\dot u}(\hat{\bm\beta}_{\rm MLE}^T\bm x_i)\bm x_i\|}
    {\sum_{j=1}^n|y_j-\dot{\psi}(u(\hat{\bm\beta}_{\rm MLE}^T\bm x_j))|\|{\dot u}(\hat{\bm\beta}_{\rm MLE}^T\bm x_j)\bm x_j\|},\ i=1,2, ..., n.
  \end{equation}
\end{theorem}

Note that in order to calculate  $\|\mathcal{J}_X^{-1}{\dot u}(\hat{\bm\beta}_{\rm MLE}^T\bm x_i)\bm x_i\|$ for $i=1,2,...,n$, we need $O(np^2)$ time while  only $O(np)$ time is needed to evaluate $\|{\dot u}(\hat{\bm\beta}_{\rm MLE}^T\bm x_i)\bm x_i\|$.
Note that $\mathcal{J}_X$ and $V_c$ are nonnegative definite, and $V=\mathcal{J}_X^{-1}V_c\mathcal{J}_X^{-1}$. Simple matrix algebra yields that ${\rm{tr}}(V) = {\rm{tr}}({V_c}{\cal J}_X^{ - 2}) \le {\sigma _{\max }}({\cal J}_X^{ - 2}){\rm{tr}}({V_c})$, where $\sigma_{\max}(A)$ denotes the maximum singular value of matrix $A$. 
Since $\sigma _{\max }({\cal J}_X^{ - 2})$ does not depend on $\bm{\pi}$, minimizing $\tr(V_c)$ minimizes an upper bound of $\tr(V)$.
 In fact, for two given subsampling strategies  $\bm{\pi}^{(1)}$ and $\bm{\pi}^{(2)}$, if $V_c(\bm{\pi}^{(1)})\le V_c(\bm{\pi}^{(2)})$ in the sense of Loewner-ordering, then it follows that $V(\bm{\pi}^{(1)})\le V(\bm{\pi}^{(2)})$.
Thus the alternative optimality criterion greatly reduces the computing time 
without losing too much estimation accuracy. 

Due to the score function for the log-likelihood,
it is interesting that $\pi_i^{\mathrm{mVc}}$'s in Theorem \ref{thm:5} are proportional to $\|\{y_i - \dot\psi(\hat{\bm\beta}_{\rm MLE}^T\bm x_i)\}{\dot u}(\hat{\bm\beta}_{\rm MLE}^T\bm x_i)\bm x_i\|$, norms of gradients of the log-likelihood at individual data points evaluated at the full data MLE. This is trying to find the subsample that best approximate the full data score function at the full data MLE.

We now illustrate Theorem~\ref{thm:3} and Theorem~\ref{thm:5} with some commonly used GLMs. 
Note that $u(\cdot)$ is the identity function for GLMs with nature link functions such as logistic and Poisson regressions.
For logistic regression,
\[\pi _i^{{\rm{mV}}} = \frac{|y_i - p_i|  \left\| {{{\mathcal{J}_X^{ - 1}}}{\bm x_i}} \right\|}{{\sum_{j = 1}^n {\left| {{y_j} - p_j} \right|  \left\| {{{\mathcal{J}_X^{ - 1}}}{\bm x_j}} \right\|} }},\qquad
  \pi _i^{{\rm{mVc}}} = \frac{{\left| {{y_i} - p_i} \right|  \left\| {{\bm x_i}} \right\|}}{{\sum_{j = 1}^n {\left| {{y_j} - p_j} \right|  \left\| {{\bm x_j}} \right\|} }},\]
with $p_i={{{\exp(\bm{\hat\beta}_{\rm MLE}^T{\bm x_i})}}}/\{{1 + {\exp(\bm{\hat\beta}_{\rm MLE}^T{\bm x_i})}}\}$ and $\mathcal{J}_X=n^{-1}\sum_{k=1}^n p_k(1-p_k){\bm x_k}\bm x_k^T$. These are the same as the results in \cite{Wang2017Optimal}. For Poisson regression,
\[\pi _i^{{\rm{mV}}} = \frac{{\left| {{y_i} - {\lambda_i}} \right|  \left\| {{{\mathcal{J}_X^{ - 1}}}{\bm x_i}} \right\|}}{{\sum_{j = 1}^n {\left| {{y_j} - {\lambda_j}} \right|  \left\| {{{\mathcal{J}_X^{ - 1}}}{\bm x_j}} \right\|} }},\qquad
  \pi _i^{{\rm{mVc}}} = \frac{{\left| {{y_i} - {\lambda_i}} \right|  \left\| {{\bm x_i}} \right\|}}{{\sum_{j = 1}^n {\left| {{y_j} - {\lambda_j}} \right|  \left\| {{\bm x_j}} \right\|} }},\]
with $\lambda_i=\exp({\bm{\hat\beta}_{\rm MLE}^T{\bm x_i}})$ and $\mathcal{J}_X=n^{-1}\sum_{k=1}^n\exp({\bm{\hat\beta}_{\rm MLE}^T{\bm x_k}}){\bm x_k}\bm x_k^T.$
NBR does not have a canonical link function, and the conditional distribution of the response is modeled by a two-parameter distribution
\begin{eqnarray}\nonumber
f(y_{i}|\nu,\mu_{i})=\frac{\Gamma(\nu+y_{i})}{\Gamma(\nu)y_{i}!}\left(\frac{\mu_{i}}{\nu+\mu_{i}}\right)^{y_{i}}\left(\frac{\nu}{\nu+\mu_{i}}\right)^{\nu}, \quad i=1,2,\ldots,n,
\end{eqnarray}
where the size parameter $\nu$ can be estimated as a nuisance parameter. The optimal subsampling probabilities for NBR with size parameter $\nu$ are
\[\begin{array}{l}
    \pi _i^{{\rm{mV}}} = \frac{{\left| {{y_i} - \mu_i} \right|  \left\| {{{\mathcal{J}_X^{ - 1}}}\frac{{\nu {\bm x_i}}}{{\nu  + \mu_i}}} \right\|}}{{\sum_{j = 1}^n {\left| {{y_j} - \mu_j} \right|  \left\| {{\mathcal{J}_X^{-1}}\frac{{\nu {\bm x_j}}}{{\nu  + \mu_j}}} \right\|} }},\qquad
    \pi _i^{{\rm{mVc}}} = \frac{{\left| {{y_i} - \mu_i} \right|  \left\| \frac{\nu {\bm x_i}}{\nu  + \mu_i} \right\|}}{{\sum_{j = 1}^n {\left| {{y_j} - \mu_j} \right|  \left\| \frac{\nu {\bm x_i}}{\nu  + \mu_j} \right\|} }}
\end{array},\]
with $\mu_i=\exp({\bm{\hat\beta}_{\rm MLE}^T{\bm x_i}})$ and $\mathcal{J}_X=n^{-1}\sum_{k = 1}^n \{\nu (\nu  + {y_i})\mu_i\}/(\nu  + \mu_i)^2{\bm x_k}\bm x_k^T. $

\subsection{Non-asymptotic Properties} 
\label{sec:r-size}

We derive some finite sample properties of the subsample estimators based on optimal subsampling probabilities $\bm{\pi}^{\mathrm{mV}}$ and $\bm{\pi}^{\mathrm{mVc}}$ in this section. Results are presented in forms of excess risks for approximating the mean responses and they hold for fixed $r$ and $n$ without requiring any quantity to go to infinity. These results show factors that affect the approximation accuracy.

Since $\dot{\psi}(u(\bm x_i^T\bm\beta))$ is the conditional expectation of the response $y_i$ given $\bm x_i$, we aim to characterize the quantity of $\tilde{\bm\beta}$ in prediction by examining $\|\dot{\psi}(u(\bm X_d^T\hat{\bm\beta}_{\rm MLE}))-\dot{\psi}(u(\bm X_d^T\tilde{\bm\beta}))\|$. This quantity is the distance between the estimated conditional mean responses based on the full data and that based on the subsamples. Intuitively, it measures the goodness of fit in using subsample estimator to predict the mean responses.
Note that we can always improve the accuracy of the estimator by increasing the subsample size $r$. Here we want to have a closer look at the effects of different quantities such as covariate matrix and data dimension and the effect of subsample size $r$ on approximation accuracy.

 Let $\sigma_{\max}(A)$ and $\sigma_{\min}(A)$ be the maximum and minimum non-zero singular values of matrix $A$, respectively,  $\kappa(A):=\sigma_{\max}(A)/\sigma_{\min}(A)$. { Denote} $\dot{\psi}(u(\bm X^T{\bm \beta}))$, the vector whose the $i$-th element is $\dot{\psi}(u(\bm x_i^T{\bm \beta}))$ and define $\dot u(\bm X^T{\bm \beta}): = {\rm{diag}}\{ \dot u(\bm x_1^T{\bm \beta}), \cdots ,\dot u(\bm x_n^T{\bm \beta})\} $. 
  For the estimator $\tilde{\bm \beta}$ obtained from the algorithm in Section 2 based on the subsampling probabilities, $\bm{\pi}^{\mathrm{mV}}$ and $\bm{\pi}^{\mathrm{mVc}}$,
  the following theorem holds.

 \begin{theorem}\label{thm:r-size}
 Let $\bm{\tilde X}$ denotes the design matrix consisting of  subsample covariates with each sampled element rescaled by $1/\sqrt{r\pi_i^*}$.  Assume that $\sigma_{\min}^2(\dot{u}(\bm X^T\tilde{\bm \beta})\bm{\tilde{ X}})\ge 0.5\sigma^2_{\min}(\dot{u}(\bm X^T\tilde{\bm \beta})\bm X)$, and
 both $\sigma_{\max}(\dot{u}(\bm X^T\tilde{\bm \beta})\bm X)/\sqrt{n}$ and $\sigma_{\min}(\dot{u}(\bm X^T\tilde{\bm \beta})\bm X)/\sqrt{n}$ are bounded.
 For any given $\epsilon\in(0,1/3)$,  with probability at least $1-\epsilon$, we have
 \begin{align}\label{eq:r-size}
 &~~~~\|\dot{\psi}(u(\bm X^T\hat{\bm\beta}_{\rm MLE}))-\dot{\psi}(u(\bm X^T\tilde{\bm\beta}))\|\nonumber\\
 & \le 2C_{\dot{u}}{[1+\frac{4\alpha\sqrt{\log(1/\epsilon)}}{\sqrt{r}}]{\sqrt p \kappa^2(\dot{u}(\bm X^T\tilde{\bm \beta})\bm X)}}\|[\bm y-\dot{\psi}(u(\bm X^T\hat{\bm\beta}_{\rm MLE}))]\|.
 \end{align}
 where 
 $\alpha=\kappa(\mathcal{J}_X^{-1})$ for $\bm{\pi}^{\mathrm{mV}}$ and $\alpha=1$ for $\bm{\pi}^{\mathrm{mVc}}$ and ${C_{\dot u}} = \mathop {\sup }\limits_{r \in K \subset \Theta } \left| {\dot u(r)} \right|$.
 \end{theorem}

 Theorem \ref{thm:r-size} not only indicates that the accuracy increases with subsample size $r$, which agrees with the results in Theorem \ref{thm:as-general-alg}, but also enables us to have a closer look at the effects of different quantities such as covariate matrix and data dimension and the effect of subsample size $r$ on approximation accuracy.
 Heuristically, the condition number of $\dot{u}(\bm X^T\tilde{\bm \beta})\bm X$ measures the collinearity of covariates in the full data covariate matrix; $p$ shows the curse of dimensionality; and $\|\bm y-\dot{\psi}(u(\bm X^T\hat{\bm\beta}_{\rm MLE}))\|$ measures the goodness of fit of the underlying model on the full data.

The result in (\ref{eq:r-size}) also indicates that we should choose $r\propto p$ to control the error bound, hence it seems reasonable to choose the subsample size as $r=cp$. This agrees with the
recommendation of choosing a sample size as large as 10 times of the number of covariates in \cite{Chapman1994Arctic} 
and \cite{Loeppky2009Choosing} for designed experiments.  
However, for designed experiments, covariate matrices are often orthogonal or close to be orthogonal, so $\kappa(\dot{u}(\bm X^T\tilde{\bm \beta})\bm X)$ is equal or close to 1 in these cases. For the current paper, full data may not be obtained from well designed experiments so $\dot{u}(\bm X^T\tilde{\bm \beta})\bm X$ may vary a lot. Thus, $\kappa(\dot{u}(\bm X^T\tilde{\bm \beta})\bm X)$ should also be considered in determining required subsample size for a given level of prediction accuracy.  

 The particular constant $0.5$ in Theorem \ref{thm:r-size}'s condition $\sigma_{\min}^2(\dot{u}(\bm X^T\tilde{\bm \beta})\bm{\tilde{ X}})\ge 0.5\sigma^2_{\min}(\dot{u}(\bm X^T\tilde{\bm \beta})\bm X)$  { can be replaced by any constant between 0 and 1. Here we follow the setting of \cite{Drineas2011Faster} and choose 0.5 for convenience.}
 This condition indicates that the rank of $\dot{u}(\bm X^T\tilde{\bm \beta})\bm{\tilde{X}}$ is the same as that of $\dot{u}(\bm X^T\tilde{\bm \beta})\bm X$. More details and interpretations about this condition can be found in \cite{Mahoney2012Randomized}.

 Using similar argument as in the proof of Theorem \ref{thm:r-size},
 it is proved that this condition holds with high probability.

 \begin{theorem}\label{thm:r-size1}
 Let $\dot{u}(\bm X^T\tilde{\bm \beta})\bm{\tilde X}$ denote the design matrix consisting of subsamples with each sampled element rescaled by $1/\sqrt{r\pi_i^*}$.
 Assume that {$|y_i-\dot{\psi}(u(\hat{\bm\beta}_{\rm MLE}^T\bm x_i))|\|{\dot u}(\hat{\bm\beta}_{\rm MLE}^T\bm x_i)\bm x_i\|\ge \gamma\|\bm x_i\|$} for all $i$ and $\sigma_{\max}(\dot{u}(\bm X^T\tilde{\bm \beta})\bm X)/\sqrt{n}$, $\sigma_{\min}(\dot{u}(\bm X^T{\bm \beta})\bm X)/\sqrt{n}$ are bounded. For any given $\epsilon\in(0,1/3)$,
 let  ${c_d} \le 1$ be a constant depending on $\dot{u}(\bm X^T\tilde{\bm \beta})\bm{X}$,   ${C_{\dot u}} = \mathop {\sup }\limits_{r \in K \subset \Theta } \left| {\dot u(r)} \right|$ and $r>64{c_d^2C_{\dot{u}}^2}\log(1/\epsilon)\sigma_{\max}^4(\bm X)p^2/(\alpha^2\delta^2\sigma_{\min}^4(\dot{u}(\bm X^T\tilde{\bm \beta})\bm X))$ { where $\delta$ is some constant depending on $\gamma$ and $\|[\bm y-\dot{\psi}(u(\bm X^T\hat{\bm\beta}_{\rm MLE}))]\|$}.
 Then  with probability at least $1-\epsilon$:
 \[\sigma_{\min}^2(\dot{u}(\bm X^T\tilde{\bm \beta})\bm{\tilde{ X}})\ge 0.5\sigma^2_{\min}(\dot{u}(\bm X^T\tilde{\bm \beta})\bm X),\]
 where $\alpha=\kappa(\mathcal{J}_X^{-1})$ for $\bm{\pi}^{\mathrm{mV}}$ and $\alpha=1$ for $\bm{\pi}^{\mathrm{mVc}}$.
 \end{theorem}

 \section{Practical Consideration and Implementation}\label{sec:two-step}

 For practical implementation, the optimal subsampling probabilities $\pi_i^{\mathrm{mV}}$'s and $\pi_i^{\mathrm{mVc}}$'s cannot be used directly because they depend on the unknown full data MLE, $\hat{\bm\beta}_{\rm MLE}$. 
 As suggested in \cite{Wang2017Optimal}, in order to calculate $\bm \pi^{\mathrm{mV}}$ or $\bm\pi^{\mathrm{mVc}}$, a pilot estimator of $\hat{\bm\beta}_{\rm MLE}$ has to be used. Let $\tilde{\bm\beta}_0$ be a pilot estimator based on a subsample of size $r_0$. It can be used to replace $\hat{\bm\beta}_{\rm MLE}$ in $\bm\pi^{\mathrm{mV}}$ or $\bm\pi^{\mathrm{mVc}}$, which then can be used to sample more informative subsamples.

 From the expression of $\bm\pi^{\mathrm{mV}}$ or $\bm\pi^{\mathrm{mVc}}$, the approximated optimal subsampling probabilities are both proportional to $|y_i-\dot{\psi}(u(\tilde{\bm\beta}_0^T\bm x_i))|$, so data points with $y_i\approx\dot{\psi}(u(\tilde{\bm\beta}_0^T\bm x_i))$ have very small probabilities to be selected and data points with  $y_i=\dot{\psi}(u(\tilde{\bm\beta}_0^T\bm x_i))$ will never be included in a subsample.
 On the other hand, if these data points are included in the subsample, the weighted log-likelihood function in (\ref{eq:reweigt2}) may be dominated by them. As a result, the subsample estimator may be sensitive to these data points. \cite{Ma2015A} also noticed the problem that some extremely small subsampling probabilities may inflate the variance of the subsampling estimator in the context of leveraging sampling.

 To protect the weighted log-likelihood function  from being inflated by these data points in practical implementation, we propose to set a threshold, say $\delta$, for $|y_i-\dot{\psi}(u(\tilde{\bm\beta}_0^T\bm x_i))|$, i.e., use $\max\{|y_i-\dot{\psi}(u(\tilde{\bm\beta}_0^T\bm x_i))|,\delta\}$ to replace $|y_i-\dot{\psi}(u(\tilde{\bm\beta}_0^T\bm x_i))|$. Here, $\delta$ is a small positive number, say $10^{-6}$ as an example. 
 Setting a threshold $\delta$ in subsampling probabilities results in a truncation in the weights for the subsample weighted log-likelihood.
 Truncating the weight function is a commonly used technique in practice for 
 robust estimation. 
 Note that in practical application, an intercept should always be included in a model, so it is typical that $\|{\dot u}(\hat{\bm\beta}_{\rm MLE}^T\bm x_i)\bm x_i\|$ and $\|\mathcal{J}_X^{-1}{\dot u}(\hat{\bm\beta}_{\rm MLE}^T\bm x_i)\bm x_i\|$ are bounded away from 0 and we do not need to set a threshold for them.
{Let $\tilde{V}$ be the version of $V$ with $\hat{\bm\beta}_{\rm MLE}$ substituted by $\tilde{\bm\beta}_0$. It can be shown that
\[\tr(\tilde{V})\le \tr(\tilde{V}^\delta)\le \tr(\tilde{V})+\frac{\delta^2}{n^2r}\sum_{i=1}^n\frac{1}{\pi_i}\|\tilde{\mathcal{J}}_X^{-1}{\dot u}(\hat{\bm\beta}_{\rm MLE}^T\bm x_i)\bm x_i\|^2. \]
Thus, minimizing $\tr(\tilde{V}^\delta)$ is close to minimizing $\tr(\tilde{V})$ if $\delta$ is sufficiently small. The threshold $\delta$ is
to make our subsampling estimator more robust without scarifying the estimation efficiency too much.
Here, we can also approximate $\mathcal{J}_X$ by using the pilot sample. 
To be specific, the $\mathcal{J}_X$ in mV is approximated by $\tilde{\mathcal{J}}_X=(r_0)^{-1}\sum_{i=1}^{r_0}{\{ \ddot u(} {\tilde{\bm\beta}^T}{\bm x_i^*}){\bm x_i^*}\bm x_i^{*T}[\dot \psi (u({\tilde{\bm\beta}^T}{\bm x_i}^*)) - {y_i}^*] + \ddot \psi (u({\tilde{\bm\beta}^T}{\bm x_i}^*)){\dot u^2}({\tilde{\bm\beta}^T}{\bm x_i}^*){\bm x_i}^*\bm x_i^{*T}]\}$ based on the first stage subsamples $\{(\bm x_i^*,y_i^*):i=1,\ldots,r_0\}$. }

For transparent presentation, we combine all the aforementioned practical considerations in this section and present a two-step algorithm as below.
 \begin{enumerate}
   \item Run the general subsampling algorithm with $\bm\pi=\bm\pi^{\rm UNIF}$ and $r=r_0$ to get the pilot subsample set $\tilde{S}_{r_0}$ and a pilot estimator $\tilde{\bm\beta}_0$.
   \item Using $\tilde{\bm\beta}_0$ to calculate approximated subsampling probabilities $\tilde{\bm\pi}^{\rm opt}=\{\tilde\pi_i^{\mathrm{mV}}\}_{i=1}^n$ or $\tilde{\bm\pi}^{\rm opt}=\{\tilde\pi_i^{\mathrm{mVc}}\}_{i=1}^n$, where $\tilde\pi_i^{\mathrm{mV}}$s are proportional to $\max(|y_i-\dot{\psi}(u(\tilde{\bm\beta}_0^T\bm x_i))|,\delta)\|\tilde{\mathcal{J}}_X^{-1}\dot{u}(\tilde{\bm\beta}_0^T\bm x_i)\bm x_i\|$s and $\tilde\pi_i^{\mathrm{mVc}}$s are proportional to $\max(|y_i-\dot{\psi}(u(\tilde{\bm\beta}_0^T\bm x_i))|,\delta)\|\dot{u}(\tilde{\bm\beta}_0^T\bm x_i)\bm x_i\|$.
   \item Sample with replacement for $r$ times based on $\tilde{\bm\pi}^{\rm opt}$ to form the subsample set $S_{r^*} := \tilde{S}_{r_0}\cup \{(y_{i}^*,\bm {x}_{i}^*,\tilde\pi_{i}^{*}),i=1,\ldots,r\}.$
   \item Maximize the following weighted log-likelihood function 
     to obtain the 
  estimator ${\breve{\bm\beta}}$
  \begin{equation}\label{eq:reweigt1}
      L^*(\bm\beta)=\frac{1}{r+r_0}\sum_{i\in S_{r^*}}\frac{1}{{\tilde\pi}_{i}^{*}}[y_{i}^*u({\bm\beta}^{T}\bm x_{i}^*)-\psi(u{\bm \beta}^{T}\bm x_{i}^*))].
  \end{equation}
 \end{enumerate}

 We have the following theorems describing asymptotic properties of $\breve{\bm\beta}$.
\begin{theorem}\label{thm:asy-2step-alg}
Under Assumptions~(H.1)--(H.5), if 
$r_0r^{-1}\rightarrow0$ as $r_0\rightarrow\infty, {r}\rightarrow\infty$ and $n\rightarrow\infty$, then for the estimator $\breve{\bm\beta}$ obtained from  the two-step algorithm,  
   with probability approaching one, for any $\epsilon>0$, there exist   finite $\Delta_\epsilon$ and $r_\epsilon$ such that
  \begin{equation*}
  P(\|{\breve{\bm\beta}}-\hat{\bm\beta}_{\rm MLE}\|\ge {r}^{-1/2}\Delta_\epsilon|\mathcal{F}_n)<\epsilon
\end{equation*}
for all ${r}>r_\epsilon$.
\end{theorem}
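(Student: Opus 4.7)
The plan is to build on the existing Theorem~\ref{thm:as-general-alg} by conditioning on the first-stage (pilot) subsample, and then to treat the pooled weighted log-likelihood in (\ref{eq:reweigt1}) as a convex combination in which the second-stage contribution dominates because $r_0/r\to0$.

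First I would establish consistency of the pilot. Under (H.1)--(H.5) applied to the pilot sampling distribution $\bm\pi_0$ (which, as pointed out after Algorithm~\ref{alg:3}, can be taken to satisfy (H.5)), Theorem~\ref{thm:as-general-alg} yields $\tilde{\bm\beta}_0-\hat{\bm\beta}_{\rm MLE}=O_P(r_0^{-1/2})$, so $\tilde{\bm\beta}_0=\hat{\bm\beta}_{\rm MLE}+o_P(1)$; in particular $\tilde{\mathcal{J}}_X$ is close to $\mathcal{J}_X$, which by (H.4) is positive definite in the limit. Next I would verify that $\tilde{\bm\pi}^{\rm opt}$ satisfies (H.5) in probability. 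The threshold $\delta>0$ forces each numerator in (\ref{eq:2})--(\ref{eq:3}) to be at least $\delta\|\tilde{\mathcal{J}}_X^{-1}\bm x_i\|$ or $\delta\|\bm x_i\|$, while (H.1) plus the fact that an intercept is included (so $\|\bm x_i\|$ and $\|\tilde{\mathcal{J}}_X^{-1}\bm x_i\|$ are bounded away from $0$) gives a positive lower bound uniformly in $i$. For the denominators, (H.1) together with $E|y_i|<\infty$ and boundedness of $\dot\psi(\tilde{\bm\beta}_0^T\bm x_i)$ (on the compact set $\Lambda_B$ from (H.3)) shows that $n^{-1}\sum_j(\cdot)=O_P(1)$. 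Dividing yields $\max_i(n\tilde\pi_i^{\rm opt})^{-1}=O_P(1)$.

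The core step is to control the estimating equation for $\breve{\bm\beta}$ at $\hat{\bm\beta}_{\rm MLE}$. Writing the gradient of (\ref{eq:reweigt1}) as
\begin{equation*}
M^*(\bm\beta)=\frac{r_0}{r_0+r}\,A(\bm\beta)+\frac{r}{r_0+r}\,B(\bm\beta),
\end{equation*}
where $A$ is the pilot-sample average and $B$ the second-stage average of $(\pi_i^*)^{-1}\{y_i^*-\dot\psi(\bm\beta^T\bm x_i^*)\}\bm x_i^*$, one has $E[A(\hat{\bm\beta}_{\rm MLE})\mid\mathcal{F}_n]=E[B(\hat{\bm\beta}_{\rm MLE})\mid\mathcal{F}_n,\tilde{\bm\beta}_0]=0$ because $\hat{\bm\beta}_{\rm MLE}$ solves the full-data score equation. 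The variance arguments already used to prove Theorem~\ref{thm:CLT} give $A(\hat{\bm\beta}_{\rm MLE})=O_P(r_0^{-1/2})$ and $B(\hat{\bm\beta}_{\rm MLE})=O_P(r^{-1/2})$ conditionally. Combining, $M^*(\hat{\bm\beta}_{\rm MLE})=O_P(r_0^{1/2}/(r_0+r))+O_P(r^{1/2}/(r_0+r))$; since $(r_0/r)^{1/2}\to0$, both terms are $O_P(r^{-1/2})$, and in fact the pilot part is $o_P(r^{-1/2})$. Finally, a Taylor expansion of $M^*$ around $\hat{\bm\beta}_{\rm MLE}$ and the fact that the weighted Hessian converges (as in the proof of Theorem~\ref{thm:as-general-alg}, using (H.4) and the verified (H.5)) to a positive-definite limit inverts the score bound into the desired conclusion $\|\breve{\bm\beta}-\hat{\bm\beta}_{\rm MLE}\|=O_P(r^{-1/2})$.

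The main obstacle is the randomness of $\tilde{\bm\pi}^{\rm opt}$, since it depends on $\tilde{\bm\beta}_0$ and therefore on the pilot subsample. The way I would circumvent this is to perform all variance computations conditional on $\mathcal{F}_n$ and on the pilot subsample: given this conditioning, the second-stage draws are i.i.d. from a fixed multinomial, and only (H.5) (shown in probability above) together with (H.1) and (H.3) is needed to reproduce the arguments of Theorems~\ref{thm:as-general-alg} and~\ref{thm:CLT}. The unconditional statement then follows by the standard device that conditional-in-probability bounds transfer to unconditional bounds, as noted in the Remark after Theorem~\ref{thm:as-general-alg}. A secondary care point is that the pooled likelihood is not the pure second-stage likelihood; this is handled by the decomposition of $M^*$ above, where $r_0/r\to0$ makes the pilot summand negligible against $r^{-1/2}$.
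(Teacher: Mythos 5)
Your proposal is correct and follows essentially the same route as the paper's own proof: decompose the pooled weighted likelihood/score into the pilot and second-stage parts with weights $r_0/(r_0+r)$ and $r/(r_0+r)$, use the threshold $\delta$ to show the estimated optimal probabilities satisfy the (H.5)-type bound, condition on $\mathcal{F}_n$ and the pilot so the second-stage draws are i.i.d.\ and the Theorem~\ref{thm:as-general-alg} machinery (argmax consistency plus Taylor expansion of the score) applies, with $r_0/r\rightarrow0$ rendering the pilot contribution negligible. The only notable (and harmless, indeed slightly sharper) difference is that you bound the pilot score at $\hat{\bm\beta}_{\rm MLE}$ as $O_P(r_0^{-1/2})$ so its weighted contribution is $O_P((r_0/r)^{1/2}r^{-1/2})=o_P(r^{-1/2})$, whereas the paper bounds the pilot score term at $\breve{\bm\beta}$ directly; also note the paper's Lemma~\ref{lem:lem3} avoids needing $\|\bm x_i\|$ bounded away from zero by cancelling one factor of $\|\bm x_i\|$ in $\|\bm x_i\|^2/(n\pi_i(\tilde{\bm\beta}_0))$, so your appeal to the intercept is not strictly required.
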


The result of asymptotic normality is presented in the following theorem.

{\begin{theorem}\label{thm:CLT-2step}
Under assumptions (H.1)--(H.5), if $r_0r^{-1}\rightarrow0$, then for the estimator obtained from the two-step algorithm, as $r_0\rightarrow\infty$, $r\rightarrow\infty$ and $n\rightarrow\infty$, conditional on $\mathcal{F}_n$,
  \begin{equation}\label{eq:clt-2}
    V_{opt}^{-1/2}(\breve{\bm\beta}-\hat{\bm\beta}_{\rm MLE}) \rightarrow N(0,I),
  \end{equation}
where $V_{opt}=\mathcal{J}_X^{-1}V_{c,opt}\mathcal{J}_X^{-1}$; and
  {\begin{align}
    V_{c,opt}&=\frac{1}{r}\onen
    \sum_{i=1}^n \frac{\{y_i-\dot{\psi}(u(\hat{\bm\beta}_{\rm MLE}^T\bm x_i))\}^2\dot{u}^2(\hat{\bm\beta}_{\rm MLE}^T\bm x_i)\bm x_i\bm x_i^T}{ {\max(|y_i-\dot{\psi}(u(\hat{\bm\beta}_{\rm MLE}^T\bm x_i))|,{\delta})\|\dot{u}(\hat{\bm\beta}_{\rm MLE}^T\bm x_i)\bm x_i\|}}\\
\nonumber  &  \times\onen\sumn\max(|y_i-\dot{\psi}(u(\hat{\bm\beta}_{\rm MLE}^T\bm x_i))|,{\delta})\|\dot{u}(\hat{\bm\beta}_{\rm MLE}^T\bm x_i)\bm x_i\|,
  \end{align}}
 for subsampling probabilities based on $\tilde{\pi}^{\mathrm{mVc}}_i$, and
  {\begin{align*}
    V_{c,opt}&=\frac{1}{r}\onen
    \sum_{i=1}^n \frac{\{y_i-\dot{\psi}(u(\hat{\bm\beta}_{\rm MLE}^T\bm x_i))\}^2\dot{u}^2(\hat{\bm\beta}_{\rm MLE}^T\bm x_i)\bm x_i\bm x_i^T}{ {\max(|y_i-\dot{\psi}(u(\hat{\bm\beta}_{\rm MLE}^T\bm x_i))|,{\delta})\|\mathcal{J}_X^{-1}\dot{u}(\hat{\bm\beta}_{\rm MLE}^T\bm x_i)\bm x_i\|}}\\
\nonumber  &  \times\onen\sumn\max(|y_i-\dot{\psi}(u(\hat{\bm\beta}_{\rm MLE}^T\bm x_i))|,{\delta})\|\mathcal{J}_X^{-1}\dot{u}(\hat{\bm\beta}_{\rm MLE}^T\bm x_i)\bm x_i\|.
  \end{align*}}
for subsampling probabilities based on $\tilde{\pi}^{\mathrm{mV}}_i$.
\end{theorem}}

In order to get standard error of the corresponding estimator, 
we estimate the variance-covariance matrix of $\breve{\bm\beta}$ by 
 $
   \breve{V}=\breve{\mathcal{J}}_X^{-1}\breve{V}_c\breve{\mathcal{J}}_X^{-1},
 $
where
\begin{align*}
  \breve{\mathcal{J}}_X
  &=\frac{1}{n(r_0+r)}\times\\
  &  \bigg\{\sum\limits_{i=1}^{r_0}  \frac{\ddot u(\breve{\bm\beta}^T\bm x_i^*){\bm x_i^*}\bm x_i^T[\dot \psi (u(\breve{\bm\beta}^T\bm x_i^*)) - {y_i^*}] + \ddot \psi (u(\breve{\bm\beta}^T\bm x_i^*)){\dot u^2}(\tilde{\bm\beta}_0^T{\bm x_i^*}){\bm x_i^*}\bm x_i^T}{{\pi_{i0}^*}}\\
&+\sum\limits_{s=1}^{r}  \frac{\ddot u(\breve{\bm\beta}^T\bm x_s^*){\bm x_s^*}\bm x_s^T[\dot \psi (u(\breve{\bm\beta}^T\bm x_s^*)) - {y_s^*}] + \ddot \psi (u(\breve{\bm\beta}^T\bm x_s^*)){\dot u^2}(\breve{\bm\beta}^T\bm x_s^*){\bm x_s}\bm x_s^T}{{\tilde\pi_s^*}}\bigg \},\\
\breve{V}_c &=\frac{1}{n^2(r_0+r)^2}\bigg\{\sum_{i=1}^{r_0}\frac{\{y_i-\dot{\psi}(u(\breve{\bm\beta}^T\bm x_i^*))\}^2\dot{u}^2(\breve{\bm\beta}^T\bm x_i^*)\bm x_i^*(\bm x_i^*)^T}{(\tilde\pi_{i0}^*)^2}\\
   & +\sum_{i=1}^{r}\frac{\{y_i^*-\dot{\psi}(u(\breve{\bm\beta}^T\bm x_i^*))\}^2\dot{u}^2(\breve{\bm\beta}^T\bm x_i^*)\bm x_i^*(\bm x_i^*)^T}{(\tilde\pi_i^*)^2}\bigg\},
\end{align*}
{$\pi_{i0}^*$'s are the subsampling probabilities used in the first stage}, and $\tilde\pi_i^*=\tilde\pi_i^{\mathrm{mV}*}$ or $\tilde\pi_i^{\mathrm{mVc}*}$ for $i=1,\ldots,r$.

\section{Numerical Studies}\label{sec:experiments}

\subsection{Simulation Studies}\label{sec:Poisson-sim}

In this section, we use simulation to evaluate the finite sample performance of the proposed method in Poisson regression and NBR.
 Computations are performed in {\verb"R"} \citep{Rpackage2018}.  The performance of a sampling strategy $\bm\pi$ is evaluated by the empirical mean squared error (eMSE) of the resultant estimator:
{ $\text{eMSE}=K^{-1}\sum_{k=1}^K\|{\bm\beta}_{\bm\pi}^{(k)}-\hat{\bm\beta}_{\rm MLE}\|,$}
where ${\bm\beta}_{\bm\pi}^{(k)}$ is the estimator from the $k$-th subsample with subsampling probability $\bm\pi$ and $\hat{\bm\beta}_{\rm MLE}$ is the MLE calculated from the whole dataset. We set  $K = 1000$ throughout this section.

\textbf{Poisson regression.}
Full data of size $n=10,000$ are generated from model $y|\bm x \sim \mathcal{P}(\exp(\bm\beta^{T}\bm x))$
with the true value of $\bm\beta$ 
being a $7\times1$ vector
of 0.5. We consider the following four cases to generate the covariates $\bm x_i=(x_{i1}, ..., x_{i7})^{T}$.
\begin{enumerate}[{Case} 1:]
\item The seven covariates are independent and identically distributed (i.i.d) from the standard uniform distribution, namely, $x_{ij}\overset{\text{i.i.d}}{\sim} U(\left[0,1 \right])$ for $j=1, ..., 7$.
\item The first two covariates are highly correlated. Specifically, $x_{ij}\overset{\text{i.i.d}}{\sim} U(\left[0,1 \right])$ for all $j$ except for $x_{i2}=x_{i1}+\varepsilon_{i}$ with $\varepsilon_{i}\overset{\text{i.i.d}}{\sim} U(\left[0,0.1 \right] )$. For this setup, the correlation coefficient between the first two covariates are about 0.8.
\item
This case is the same as the second one except that $\varepsilon_{i}$ $\overset{\text{i.i.d}}{\sim} U(\left[0,1 \right] )$. For this case, the correlation between the first two covariates is close to $0.5$.

\item This case is the same as the third one except that  $x_{ij}\overset{\text{i.i.d}}{\sim} U(\left[-1, 1 \right] )$ for $j=6, 7$. For this case, the bounds for each covariates are not all the same.
\end{enumerate}

We consider both $\tilde{\pi}_i^{\mathrm{mV}}$ and $\tilde{\pi}_i^{\mathrm{mVc}}$, 
 and choose the value of $\delta$ to be $\delta=10^{-6}$. 
 For comparison, we also consider uniform subsampling, i.e., $\pi_i=1/n$ for all $i$ and
 the leverage subsampling strategy in \cite{Ma2015A}
 in which $\pi_i=h_i/\sum_{j=1}^nh_i=h_i/p$ with $h_i=\bm x_i(\bm X^T\bm X)^{-1}\bm x_i$. Here $h_i$'s are the leverage scores for linear regression.  
 For GLMs, leverage scores are defined by using the adjusted covariate matrix, namely, $\tilde{h}_i=\tilde{\bm x}_i(\tilde{\bm X}^T\tilde{\bm X})^{-1}\tilde{\bm x}_i$, where $\tilde{\bm X}=(\tilde{\bm x}_1, ..., \tilde{\bm x}_n)^T$,
 $\tilde{\bm x}_i=\sqrt{-E\{\partial^2\log f(y_i|\tilde\theta_i)/\partial\theta^2\}}\bm x_i,$
 and  $\tilde\theta_i=\tilde{\bm\beta}^T\bm x_i$ with an initial estimate $\tilde{\bm\beta}_0$ 
 \citep[see][]{Lee1987Diagnostic}.
 In this example, simple algebra yields $\tilde{\bm x_i}=\sqrt{\exp({\tilde{\bm\beta}_0^T\bm x_i})}\bm x_i$. 
 For the leverage score subsampling, we considered both $h_i$ and $\tilde{h}_i$.  Here is a summary for the methods to be compared:
 UNIF, uniform subsample;
mV, $\pi_i=\tilde{\pi}_i^{\mathrm{mV}}$;
mVc, $\pi_i=\tilde{\pi}_i^{\mathrm{mVc}}$;
Lev, leverage sampling based on $h_i$;
Lev-A, adjust leverage sampling based on $\tilde{h}_i$.

We first consider the case with the first step sample size fixed. We let $r_0=200$, and second step sample size $r$ be 300, 500, 700, 1000, 1200 and 1400, respectively. For subsampling probabilities that do not depend on unknown parameters, 
they are  implemented with subsample size $r+r_0$ for fair comparisons.

\begin{figure}[!htp]
  \centering
  \begin{subfigure}{0.49\textwidth}
    \includegraphics[width=\textwidth]{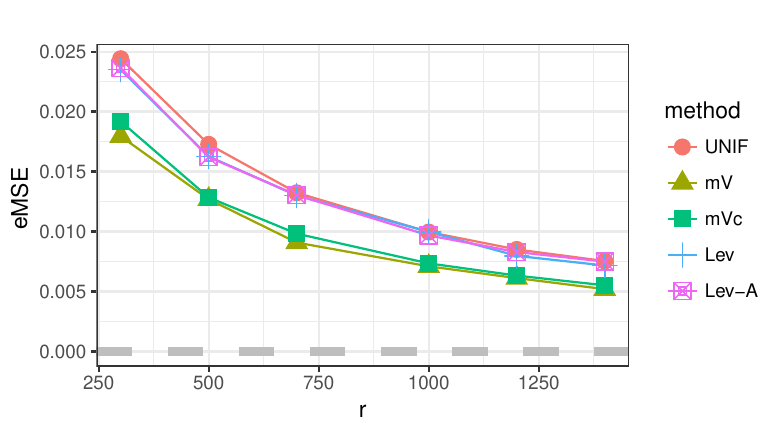}\\[-0.7cm]
    \caption{Case 1 (independent covariates)}
  \end{subfigure}
  \begin{subfigure}{0.49\textwidth}
    \includegraphics[width=\textwidth]{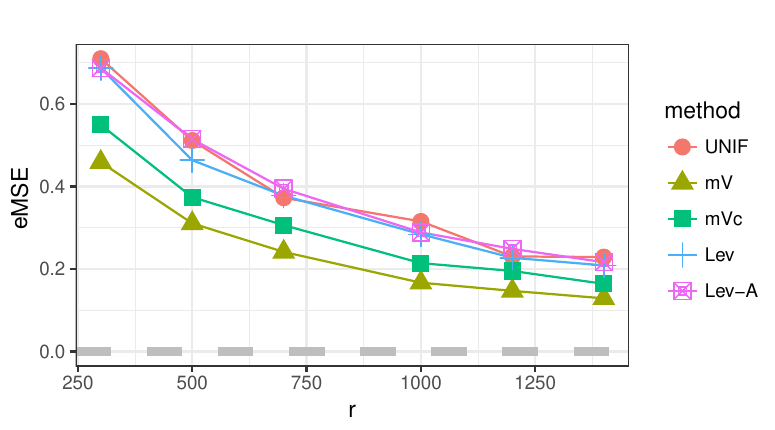}\\[-0.7cm]
    \caption{Case 2 (highly correlated covariates)}
  \end{subfigure}\\[5mm]
  \begin{subfigure}{0.49\textwidth}
    \includegraphics[width=\textwidth]{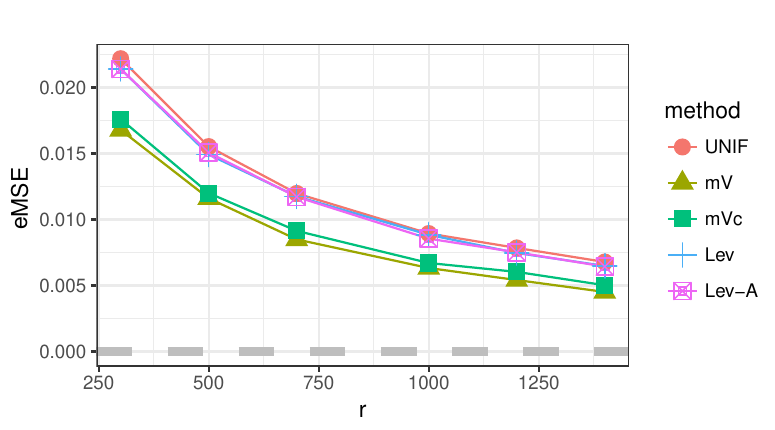}\\[-0.7cm]
    \caption{Case 3 (weakly correlated covariates)}
  \end{subfigure}
  \begin{subfigure}{0.49\textwidth}
    \includegraphics[width=\textwidth]{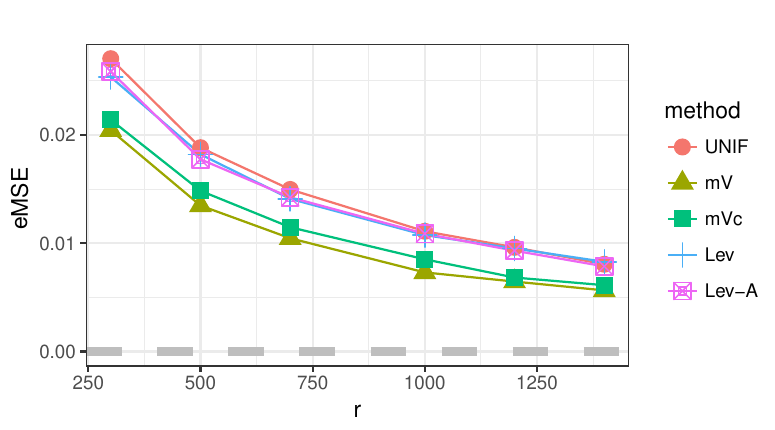}\\[-0.7cm]
    \caption{Case 4 (unequal bounds of covariates)}
  \end{subfigure}\\[5mm]
  \caption{The eMSEs for Poisson regression with different second step subsample size ${r}$ and a fixed first step subsample size $r_0=200$.
     The different distributions of covariates are listed in the beginning of Section 5.
   }
  \label{fig:3}
\end{figure}

Figure \ref{fig:3} gives 
  eMSEs.
 It is seen that for all the four data sets, subsampling methods based on {$\tilde{\bm\pi}^{\mathrm{mV}}$ and  $\tilde{\bm\pi}^{\mathrm{mVc}}$} always result in smaller eMSE than the uniform subsampling, which agrees with the theoretical result that they aim to minimize the asymptotic eMSEs of the resultant estimator. 
If the components of $\bm x$ are independent, $\tilde{\bm\pi}^{\mathrm{mV}}$ and  $\tilde{\bm\pi}^{\mathrm{mVc}}$ have similar performances, while they may perform differently if some covariates are highly correlated. The reason is that $\tilde{\bm\pi}^{\mathrm{mVc}}$ reduces the impact of the data correlation structure since $\|\tilde{\mathcal{J}}_X^{-1}\bm x_i\|^2$ in $\tilde{\bm\pi}^{\mathrm{mV}}$ are replaced by $\|\bm x_i\|^2$ in $\tilde{\bm\pi}^{\mathrm{mVc}}$.

 For Cases 1, 3 and 4, eMSEs are small. This is because the condition number of $\bm X_d$ is quite small ($\approx$ 5) and a small subsample size $r=100$ produces satisfactory results.
 However, for Case 2, the condition number is large ($\approx$ 40), so a larger subsample size is needed to approximate $\hat{\bm\beta}_{\rm MLE}$ accurately. This agrees with the conclusion in Theorem~\ref{thm:r-size}.

 Another contribution of Theorem \ref{thm:CLT-2step} is to enable us to do inference on $\bm\beta$. Note that in subsampling setting, $r$ is much smaller than the full data size $n$. If $r=o(n)$, then $\hat{\bm\beta}_{\rm MLE}$ in Theorem \ref{thm:CLT-2step} can be replaced by the true parameter. 
 As an example, we take $\beta_2$ as a parameter of interest and construct 95\% confidence intervals for it. For this, the estimator given by  $
   \breve{V}=\breve{\mathcal{J}}_X^{-1}\breve{V}_c\breve{\mathcal{J}}_X^{-1},
 $ is used to estimate variance-covariance matrices based on selected subsamples. For comparison, uniform subsampling method is also implemented.

 Table \ref{tab:tabl} reports empirical coverage probabilities and average lengths in Poisson regression model over the four synthetic data sets with 
 the first step subsample size being fixed at $r_0=200$.
It is clear that $\tilde{\bm\pi}^{\mathrm{mV}}$ and  $\tilde{\bm\pi}^{\mathrm{mVc}}$ have similar performances and are uniformly better than the uniform subsampling method. As $r$ increases,  the lengths of confidence intervals decrease uniformly which echos the results of Theorem \ref{thm:CLT-2step}. Confidence intervals in Case 2 are longer than those in other cases with the same subsample sizes. This is due to the fact that the condition number of $\bm X_d$ in Case 2 is bigger than that of $\bm X_d$ in other cases. This indicates that we should select a larger subsample when the condition number of the full dataset is bigger, which echoes the results discussed in Section \ref{sec:r-size}.

\begin{table}[htbp]
  \centering
  \caption{Empirical coverage probabilities and average lengths of confidence intervals for $\beta_2$. 
    The first step subsample size is fixed at $r_0=200$. }
    \setlength{\tabcolsep}{1mm}{
    \begin{tabular}{cccccccc}
    \hline
         & \multicolumn{1}{c}{method} & \multicolumn{2}{c}{mV} &  \multicolumn{2}{c}{mVc} &  \multicolumn{2}{c}{UNIF} \\

      & \multicolumn{1}{c}{r} & \multicolumn{1}{l}{Coverage} & \multicolumn{1}{l}{Length} & \multicolumn{1}{l}{Coverage} & \multicolumn{1}{l}{Length} &  \multicolumn{1}{l}{Coverage} & \multicolumn{1}{l}{Length} \\
    \hline
    \multirow{3}[2]{*}{case 1} & 300   & 0.954 & 0.2037    & 0.955 & 0.2066   & 0.952 & 0.2275  \\
          & 500   & 0.954 & 0.1684    & 0.945 & 0.1713   & 0.942 & 0.1924  \\
          & 1000  & 0.946 & 0.1254    & 0.938 & 0.1281    & 0.953 & 0.1471  \\
    \multirow{3}[2]{*}{case 2} & 300   & 0.961 & 1.9067    & 0.946 & 2.0776    & 0.950  & 2.2549  \\
          & 500   & 0.958 & 1.5470    & 0.948 & 1.7263   & 0.947 & 1.9082  \\
          & 1000  & 0.954 & 1.1379    & 0.948 & 1.2919    & 0.945 & 1.4559  \\
    \multirow{3}[2]{*}{case 3} & 300   & 0.959 & 0.1770    & 0.953  & 0.1816    & 0.939  & 0.2000  \\
          & 500   & 0.942 & 0.1451   & 0.949  & 0.1507    & 0.942  & 0.1693  \\
          & 1000  & 0.954 & 0.1082    & 0.954  & 0.1132    & 0.939  & 0.1291  \\
    \multirow{3}[2]{*}{case 4} & 300   & 0.955 & 0.2097   & 0.951 & 0.2179    & 0.953 & 0.2402  \\
          & 500   & 0.951 & 0.1721    & 0.956 & 0.1803    & 0.942 & 0.2033  \\
          & 1000  & 0.957 & 0.1276    & 0.960  & 0.1347    & 0.943 & 0.1552  \\
    \hline
    \end{tabular}}%
  \label{tab:tabl}%
\end{table}%

{
\textbf{Negative Binomial Regression}. We also perform simulation for the negative binomial regression with $n=100,000$ and summarize the results in Figure \ref{fig:3-nb}.
Here we assume $y_i|\bm x_i \sim \rm{NB}(\mu_i,\nu),\quad \mu_i=\exp(\bm\beta^T\bm x_i)$ with the size parameter $\nu=2$.
Other simulation settings are the same as the Poisson regression example.  
It is worthy to mention that compared with Poison regression the eMSE's are lager for NBR when $r$ is the same. This also coincides with Theorem \ref{thm:r-size} since $C_{\dot u}>1$ for NBR. 
Result for 95\% confidence intervals of $\beta_2$ are also reported in Table \ref{tab:tabl-nb}.

\begin{figure}[!htp]
  \centering
  \begin{subfigure}{0.49\textwidth}
    \includegraphics[width=\textwidth]{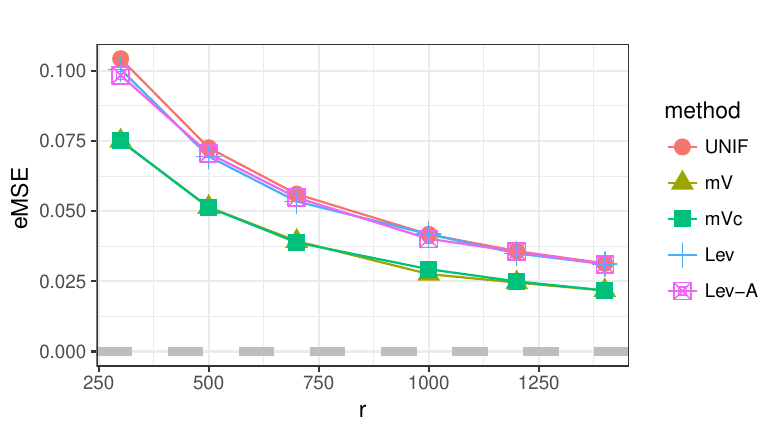}\\[-0.7cm]
    \caption{Case 1 (independent covariates)}
  \end{subfigure}
  \begin{subfigure}{0.49\textwidth}
    \includegraphics[width=\textwidth]{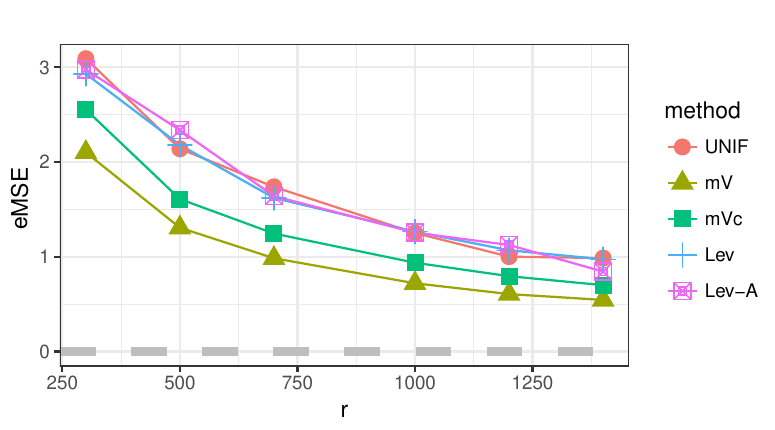}\\[-0.7cm]
    \caption{Case 2 (highly correlated covariates)}
  \end{subfigure}\\[5mm]
  \begin{subfigure}{0.49\textwidth}
    \includegraphics[width=\textwidth]{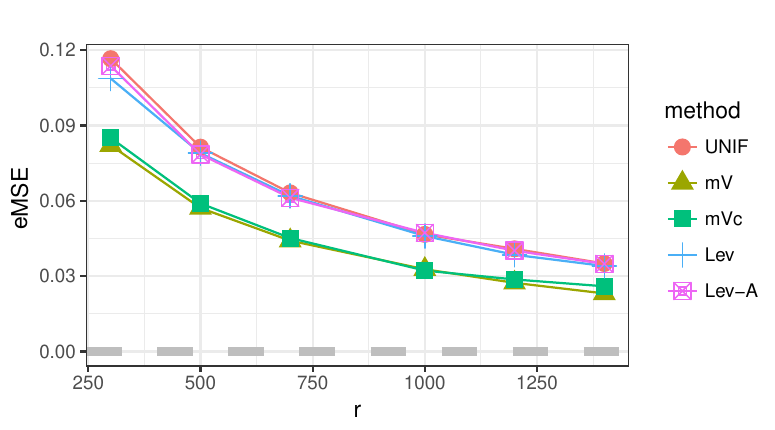}\\[-0.7cm]
    \caption{Case 3 (weakly correlated covariates)}
  \end{subfigure}
  \begin{subfigure}{0.49\textwidth}
    \includegraphics[width=\textwidth]{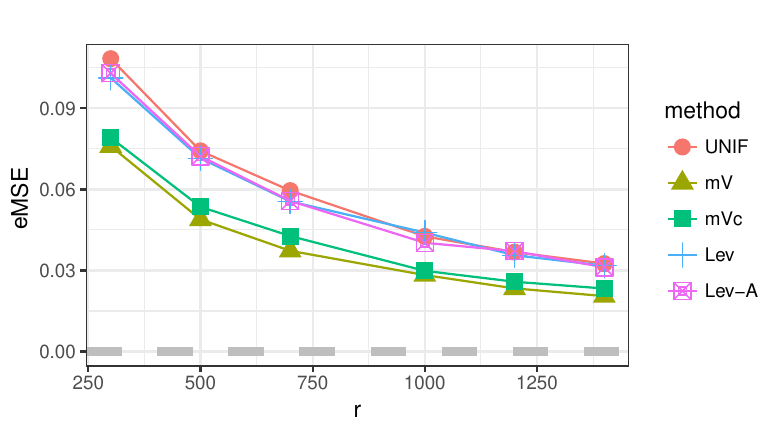}\\[-0.7cm]
    \caption{Case 4 (unequal bounds of covariates)}
  \end{subfigure}\\[5mm]
  \caption{The eMSEs for NBR with different second step subsample size ${r}$ and a fixed first step subsample size $r_0=200$.
     The different distributions of covariates are listed in the beginning of Section 5.}
  \label{fig:3-nb}
\end{figure}

\begin{table}[htbp]
  \centering
  \caption{Empirical coverage probabilities and average lengths of confidence intervals for 
    $\beta_2$ in NBR with $\nu=2$. 
    The first step subsample size is fixed at $r_0=200$.}
    \setlength{\tabcolsep}{1.5mm}{
    \begin{tabular}{cccccccc}
    \hline
          & method & \multicolumn{2}{c}{mV} & \multicolumn{2}{c}{mVc} & \multicolumn{2}{c}{UNIF} \\
     & \multicolumn{1}{c}{r} & \multicolumn{1}{l}{Coverage} & \multicolumn{1}{l}{Length} & \multicolumn{1}{l}{Coverage} & \multicolumn{1}{l}{Length} &  \multicolumn{1}{l}{Coverage} & \multicolumn{1}{l}{Length} \\
    \hline
    \multirow{3}[2]{*}{case1} & 300   & 0.952  & 0.2122  & 0.955  & 0.2147  & 0.947  & 0.2354  \\
          & 500   & 0.952  & 0.1758  & 0.954  & 0.1776  & 0.946  & 0.1991  \\
          & 1000  & 0.951  & 0.1305  & 0.933  & 0.1331  & 0.940  & 0.1520  \\
    \multirow{3}[2]{*}{case2} & 300   & 0.947  & 2.0228  & 0.963  & 2.2160  & 0.943  & 2.3913  \\
          & 500   & 0.953  & 1.6468  & 0.952  & 1.8423  & 0.946  & 2.0225  \\
          & 1000  & 0.957  & 1.2065  & 0.947  & 1.3849  & 0.942  & 1.5439  \\
    \multirow{3}[2]{*}{case3} & 300   & 0.950  & 0.1878  & 0.950  & 0.1925  & 0.942  & 0.2110  \\
          & 500   & 0.949  & 0.1546  & 0.954  & 0.1595  & 0.944  & 0.1786  \\
          & 1000  & 0.953  & 0.1150  & 0.957  & 0.1197  & 0.943  & 0.1361  \\
    \multirow{3}[2]{*}{case4} & 300   & 0.956  & 0.2288  & 0.953  & 0.2366  & 0.953  & 0.2573  \\
          & 500   & 0.968  & 0.1876  & 0.963  & 0.1956  & 0.936  & 0.2176  \\
          & 1000  & 0.950  & 0.1396  & 0.952  & 0.1469  & 0.940  & 0.1662  \\
    \hline
    \end{tabular}}%
  \label{tab:tabl-nb}%
\end{table}%

Now we investigate the effect of different sample size allocations between the two steps.
Since the results for Poisson and NBR have similar performances, we only report the results for Poisson regression to save space.
Here, we calculate eMSEs for various proportions of first step subsamples with fixed total subsample sizes. The results are given in Figure \ref{fig:4} with total subsample size $r_0+r$ = 800 and 1200, respectively. Since results are similar for all the cases, we only present results for Case 4 here.
It is worthy noting that the two-step method outperforms the uniform subsampling method for all the four cases for both Poisson and NBR,  when $r_0/r\in[0.1,0.9]$. 
This indicates that the two-step approach is more efficient than the uniform subsampling.  
The two-step approach works the best when $r_0/r$ is around 0.2.

\begin{figure}[!htp]
  \centering
  \begin{subfigure}{0.49\textwidth}
    \includegraphics[width=\textwidth]{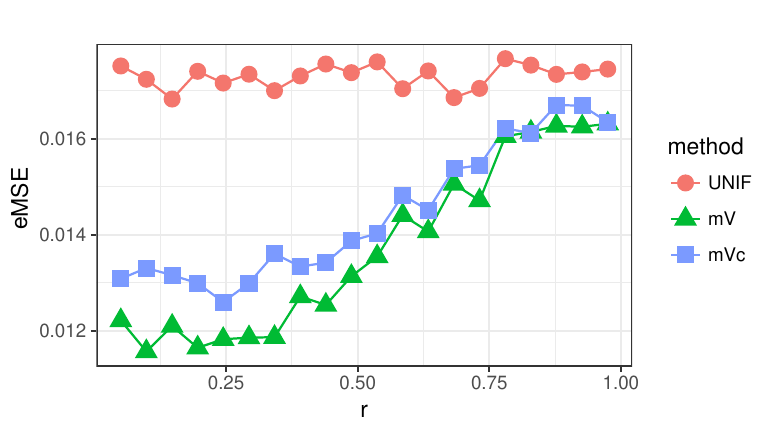}\\[-0.7cm]
    \caption{Case 4 ($r_0+r=800$)}
  \end{subfigure}
  \begin{subfigure}{0.49\textwidth}
    \includegraphics[width=\textwidth]{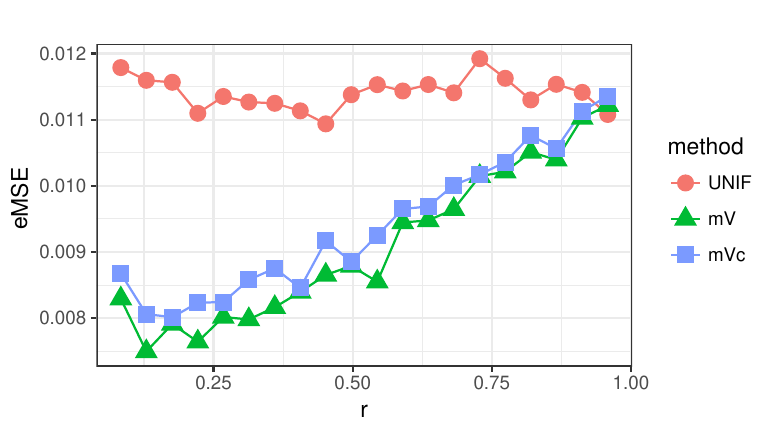}\\[-0.7cm]
    \caption{Case 4 ($r_0+r=1200$)}
  \end{subfigure}\\[5mm]
  \caption{The eMSEs vs proportions of the first step subsample with fixed
    total subsample sizes $r+r_0$ in Poisson regression.}
  \label{fig:4}
\end{figure}

{To explore the influence of $\delta$ in $\tilde{\pi}_i^{\mathrm{mV}}$ and $\tilde{\pi}_i^{\mathrm{mVc}}$, we calculate eMSEs for various $\delta$ ranging from $10^{-6}$ to 1  with fixed total subsample sizes.
Since the results for Poisson and NBR are similar, we only report the results for Poisson regression here.
Figure \ref{fig:q4} presents the results for Case 4 with total subsample size $r_0+r$ = 800 and 1200, respectively.
Judging from Figure \ref{fig:q4}, we see that the eMSE is not sensitive to the choice of $\delta$ when $\delta$ is not big, say $\delta=1$ for instance.
\begin{figure}[!htp]
  \centering
  \begin{subfigure}{0.49\textwidth}
    \includegraphics[width=\textwidth]{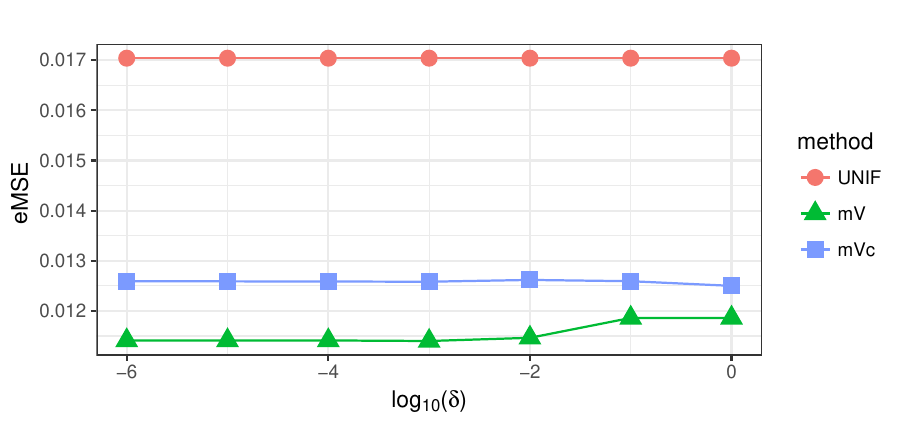}\\[-0.7cm]
    \caption{Case 4 ($r_0+r=800$)}
  \end{subfigure}
  \begin{subfigure}{0.49\textwidth}
    \includegraphics[width=\textwidth]{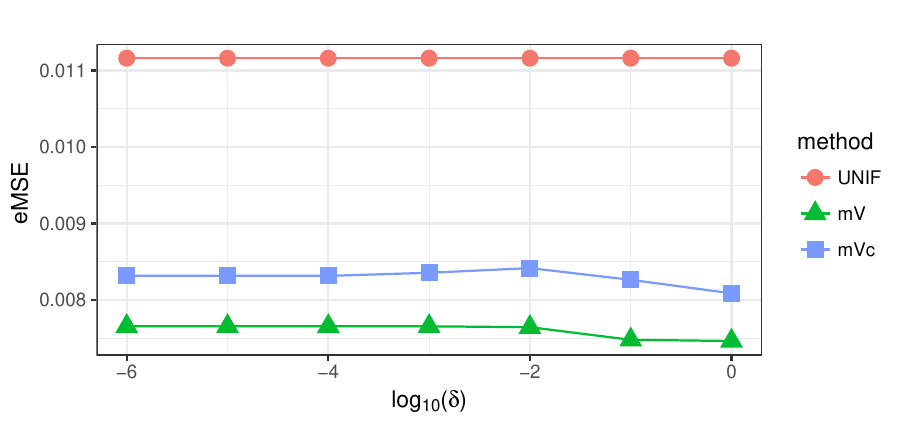}\\[-0.7cm]
    \caption{Case 4 ($r_0+r=1200$)}
  \end{subfigure}\\[5mm]
  \caption{The eMSEs vs  $\delta$ ranging from $10^{-6}$ to 1 with fixed
    total subsample sizes $r+r_0$ in Poisson regression. Logarithm is taken on $\delta$ for better presentation.}
  \label{fig:q4}
\end{figure}

To evaluate the computational efficiency of the subsampling
strategies, we record the computing times of the five subsampling strategies (uniform, $\pi^{\rm mV}$, $\pi^{\rm mVc}$, leverage score and adjust leverage score) by using \verb"Sys.time()" function in R  to record start and end times of the corresponding code.  Each subsampling strategy has been evaluated 50 times.
All methods are implemented with the R programming language.  
Computations were carried out on a
desktop running Window 10 with an Intel I7 processor and 32GB
memory.   Table~\ref{tab:tab2} shows the results
for Case 4 with different $r$ and a fixed $r_0=400$. The computing time for using the full data
is also given  for comparisons.

{
It is not surprising to observe  that the uniform subsampling algorithm requires the
least computing time because it does not require an additional step to
calculate the subsampling probability.
The algorithm based on $\bm{\pi}^{\mathrm{mV}}$ requires longer computing time than the algorithm based on
$\bm{\pi}^{\mathrm{mVc}}$, which agrees with the theoretical analysis in Section~\ref{sec:two-step}.
{The leverage score sampling takes nearly the same time as the mV method since leverage scores are computed directly by the definition.}
Note that $p=7$ is not big enough to use the fast computing method mentioned in \cite{Drineas2011Fast}.
For fairness, we also consider the case with $p=80,n=100,000$, which is suitable to use the fast computing method for the Lev and Lev-A methods.
The first seven variables are generated as Case 4 and the rest are generated independently from $U(\left[0,1 \right])$. Here $r_0$ is also selected as 400 and the corresponding results are reported in Table \ref{tab:tab3}.
In order to see the estimation effects we also present eMSEs in 
 {Tables \ref{tab:addtab1} and \ref{tab:addtab2}, respectively.}

{From Table \ref{tab:tab3} it is clear that all the subsampling algorithms take significantly less computing time compared to the full data approach. The Lev and Lev-A are faster than the mV method since the fast algorithm runs in $O(pn\log n)$ time to get the subsampling probabilities, as opposed to the $O(p^2n)$ time required by the mV method. However, the mVc method is still faster than Lev and Lev-A since the time complexity is just $O(pn)$ in computing the subsampling probabilities.
As the dimension increases, the computational  advantage of $\bm\pi^{\rm mVc}$ is even more significant.}

\begin{table}[htbp]
  \centering
  \caption{Computing  time (in second) for Poisson regression in Case 4 with  different $r$ and a fixed $r_0=400$.}
  \setlength{\tabcolsep}{3.5mm}{
    \begin{tabular}{ccccccc}
    \hline
     r     & FULL  & UNIF  & mV    & mVc   & Lev   & Lev-A \\
    \hline
    1000  & 0.187  & 0.003  & 0.020  & 0.016  & 0.024  & 0.031  \\
    1500  & 0.195  & 0.005  & 0.022  & 0.017  & 0.022  & 0.033  \\
    2000  & 0.193  & 0.007  & 0.021  & 0.018  & 0.026  & 0.036  \\
    2500  & 0.194  & 0.004  & 0.027  & 0.022  & 0.024  & 0.036  \\
    \hline
    \end{tabular}}%
  \label{tab:tab2}%
\end{table}%

\begin{table}[htbp]
  \centering
  \caption{Empirical MSE for Poisson regression demonstrated in Table \ref{tab:tab2}. The numbers in the parentheses are the standard errors.} 
  \scriptsize
    \begin{tabular}{cccccc}
    \hline
    \multicolumn{1}{c}{r} & \multicolumn{1}{c}{UNIF}       & \multicolumn{1}{c}{MV}       & \multicolumn{1}{c}{MVc} &        \multicolumn{1}{c}{Lev}        & \multicolumn{1}{c}{Lev-A}   \\
    \hline
    1000  & 0.0091  (0.0065)  & 0.0064  (0.0041)  & 0.0088  (0.0051)  & 0.0088  (0.0065)  & 0.0095  (0.0068)  \\
    1500  & 0.0071  (0.0054)  & 0.0047  (0.0034)  & 0.0049  (0.0038)  & 0.0067  (0.0049)  & 0.0070  (0.0051)  \\
    2000  & 0.0056  (0.0043)  & 0.0037  (0.0026)  & 0.0040  (0.0031)  & 0.0054  (0.0041)  & 0.0054  (0.0040)  \\
    2500  & 0.0045  (0.0032)  & 0.0030  (0.0021)  & 0.0033  (0.0025)  & 0.0044  (0.0034)  & 0.0047  (0.0036)  \\
    \hline
    \end{tabular}%
  \label{tab:addtab1}%
\end{table}%

\begin{table}[htbp]
  \centering
  \caption{Computing  time (in second) for Poisson regression with $n=100,000$, dimension $p=80$, different values of $r$, and a fixed $r_0=400$.}
  \setlength{\tabcolsep}{3.5mm}{
    \begin{tabular}{ccccccc}
    \hline
     r     & FULL  & UNIF  & mV   & mVc  & Lev   & Lev-A \\
    \hline
    1000  & 11.738  & 0.129  & 0.638  & 0.218  & 0.475  & 0.557  \\
    1500  & 11.659  & 0.163  & 0.689  & 0.253  & 0.514  & 0.595  \\
    2000  & 11.698  & 0.203  & 0.725  & 0.296  & 0.552  & 0.637  \\
    2500  & 12.005  & 0.240  & 0.777  & 0.339  & 0.602  & 0.681  \\
    \hline
    \end{tabular}}%
  \label{tab:tab3}%
\end{table}%

\begin{table}[htbp]
  \centering
  \caption{Empirical MSE for Poisson regression demonstrated in Table \ref{tab:tab3}. The numbers in the parentheses are the standard errors.}
\scriptsize
    \begin{tabular}{cccccc}
    \hline
    \multicolumn{1}{c}{r} & \multicolumn{1}{c}{UNIF}        & \multicolumn{1}{c}{MV}        & \multicolumn{1}{c}{MVc}        & \multicolumn{1}{c}{Lev}        & \multicolumn{1}{l}{Lev-A}   \\
    \hline
    1000  & 0.1003 (0.0174)  & 0.0786 (0.0135)  & 0.0782 (0.0136) & 0.1011 (0.0172)  & 0.1021 (0.0192) \\
    1500  & 0.0729 (0.0121) & 0.0582 (0.0100) & 0.0579 (0.0101) & 0.0722 (0.0125) & 0.0732 (0.0127) \\
    2000  & 0.0562 (0.0095) & 0.0472 (0.0085) & 0.0470 (0.0085) & 0.0565 (0.0094) & 0.0577 (0.0099) \\
    2500  & 0.0466 (0.0079)  & 0.0392 (0.0070)  & 0.0395 (0.0067) & 0.0463 (0.0078) & 0.0471 (0.0078) \\
    \hline
    \end{tabular}%
  \label{tab:addtab2}%
\end{table}%
}
}

\subsection{Real Data Studies} \label{sec:realdata}

In the following, we illustrate our methods described in Section \ref{sec:two-step} by applying them to a data set from musicology.
This data set contains 1,019,318 unique users' music play counts in the Echo Nest which is available at {http://labrosa.ee.columbia.edu/millionsong/tasteprofile}.
One of the challenge of this dataset is to build  a music recommendation system.
As a basic step, it is interesting to predict the play counts  by using the song information which has been collected in the Million Song Dataset \citep{Bertin-Mahieux2011}.
Since the major mode and minor mode usually express different feelings, the play counts may perform differently under this two modes. Thus we only focus on major mode in this example.
Besides the mode of music, the following six features are selected to describe the song characteristics: $x_1$, the duration of the track; $x_2$, the overall loudness of the song; $x_3$, tempo in BPM; $x_4$, artist hotttnesss; $x_5$, the song hotttnesss; $x_6$, the hottness of the album which is selected as the max value of the song hotttnesss in the album.
Here, $x_1$, $x_2$ and $x_3$ are features of a specified song; $x_4$, $x_5$ and $x_6$ are features of the artist, audience and  album respectively.
The last three features are subjective assessments by The Echo Nest, and all of them are on a scale between 0 and 1.
Since the first three variables in the data set are on different scales, we normalize them first.
In addition, we drop the \verb"NA" values in the dataset. After data cleaning, we have $n=205,032$ data points.
As a first attempt to capture the relationship between the play counts and all regressors described above, we fit the basic Poisson regression model and report the result in Figure \ref{fig:81}. 

Another way of modeling count data is to use NBR. 
For comparison, we also report the results from NBR in Figure \ref{fig:82}
with the size parameter select as $\theta=1.4$ which indicates over-dispersion of the data.

Similar to the case of synthetic data sets, we also compare our method with uniform subsampling and the leverage score subsampling methods, and report the results for $r$ varying from 600 to 2800.
The empirical  MSEs  are reported in Figure \ref{fig:8}. It is clear to that as $r$ increase, the eMSE decreases quickly for all methods. 
  Moreover,  $\bm{\pi}^{\mathrm{mV}}$ and $\bm{\pi}^{\mathrm{mVc}}$ perform similarly, and are uniformly better than the uniform subsampling and leverage score subsampling for larger values of $r$. It also worth to mention that the MSE in negative binomial regression is less than the Poisson regression. This is because the ratio of square Winsorized mean and Winsorized variance of $\bm y$ is around 1.4 which implies the data is over-dispersed. This echoes the results in Theorem \ref{thm:r-size} which advise us to include more subsamples for worse goodness of fitting. 

\begin{figure}[!htp]
  \centering
   \begin{subfigure}{0.49\textwidth}
    \includegraphics[width=\textwidth]{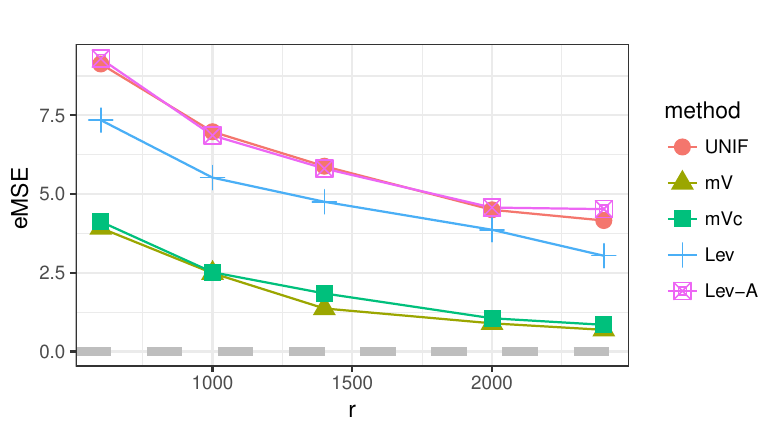}\\[-0.7cm]
    \caption{Poisson Regression}\label{fig:81}
  \end{subfigure}
  \begin{subfigure}{0.49\textwidth}
    \includegraphics[width=\textwidth]{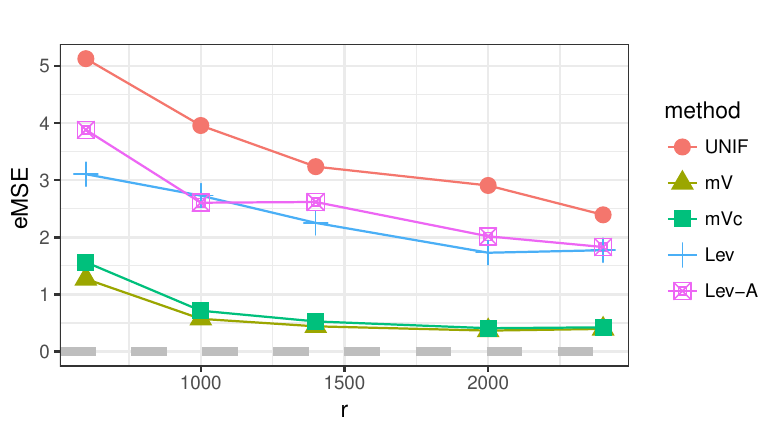}\\[-0.7cm]
    \caption{Negative Binomial Regression}\label{fig:82}
  \end{subfigure}\\[5mm]
  \caption{Empirical MSEs for different second step subsample size ${r}$ with
    the first step subsample size being fixed at $r_0=400$.}
  \label{fig:8}
\end{figure}

\section*{Supplementary Materials}
Technical proofs and additional simulation results are included in the online supplementary material.
\par

\section*{Acknowledgement}
{ The authors sincerely thank the editor, the associate editor and two referees for their valuable comments
which led to significant improvement of the manuscript.}
The authors would like to thank Prof. Jinzhu Jia for his helpful suggestions and discussions. Ai'work is supported by NNSF of China grants 11671019 and LMEQF. Wang's work is partially supported by NSF grant 1812013 and an UConn REP grant.

\par

\appendix
\section{Proofs}

{ To prove Theorem \ref{thm:as-general-alg}, we begin with the following remark and lemma.}

\begin{remark}\label{rm:analytic}
By the fact that $\lambda (\theta ): {=} \exp (\psi (\theta ))$ is analytic in {the interior of $\Theta$} (see Theorem 2.7 in \cite{brown1986fundamentals}), Cauchy's integral formula tells us that all its higher derivatives exist and are continuous. Therefore, the derivatives $\dot\psi(t)$,~$\ddot\psi(t)$,~$\dddot{\psi}(t)$ are continuous in $t$, and $\dot\psi(t)$,$\ddot\psi(t)$ are bounded on the compact set, which follows by a well-known property that every real-valued continuous function on a compact set is necessarily bounded.
\end{remark}

\begin{lemma}\label{lem:lem1}
If Assumptions (H.1)--(H.4) and (H.6) hold, then as $r\rightarrow\infty$ and $n\rightarrow\infty$, conditionally on $\mathcal{F}_n$ in probability,
\begin{align}
  \check{\mathcal{J}}_X-\mathcal{J}_X
  &=O_{P|\mathcal{F}_n}(r^{-1/2}),\label{eq:ob-infor}\\
  \frac{1}{n}L^*(\bm\beta)-\frac{1}{n}L(\bm\beta)
  &=O_{P|\mathcal{F}_n}(r^{-1/2}),\label{eq:L-L}\\
  \frac{1}{n}\frac{\partial L^*(\hat{\bm\beta}_{\rm MLE})}{\partial\bm\beta}
  &=O_{P|\mathcal{F}_n}(r^{-1/2}),\label{eq:ob-score}
\end{align}
where
\begin{align*}
  \check{\mathcal{J}}_X=
  -\frac{1}{n}\frac{\partial^2 L^*(\hat{\bm\beta}_{\rm MLE})}
  {\partial\bm\beta\partial\bm\beta^T}
  &=\frac{1}{nr}\sum_{i=1}^r\frac{\ddot{\psi}(u({{\bm\beta}^{T}{\bm x^*}_{i}}))
         \dot{u}({\bm\beta}^{T}{\bm x^*}_{i})\bm x_i^*[\dot{u}(\hat{\bm\beta}_{\rm MLE}^T{\bm x_{i}^*})\bm x_i^*]^T}{{\pi}_i^*}\\
         &+\frac{1}{{nr}}\sum\limits_{i = 1}^r {\frac{{\ddot u({\hat{\bm\beta}_{\rm MLE}^T{\bm x_{i}^*}}){\bm x_i^*}{\bm x^*}_i^T[\dot \psi (u({\hat{\bm\beta}_{\rm MLE}^T{\bm x}_{i}^*})) - {y_i^*}]}}{{\pi _i^*}}}.
\end{align*}
\end{lemma}
\begin{proof}
By the definition of conditional expectation and towering property of flirtations, it yields that
\[E(\check{\mathcal{J}}_X|\mathcal{F}_n)=\mathcal{J}_X.\]
For any component $\check{\mathcal{J}}_X^{j_1j_2}$ of $\check{\mathcal{J}}_X$ where $1\le j_1,j_2\le p$,
\begin{align}\label{eq:jxjx}
 & {\rm E}\left(\check{\mathcal{J}}_X^{j_1j_2}-\mathcal{J}_X^{j_1j_2}\Big|\mathcal{F}_n\right)^2\\
  =&\frac{1}{r}\sum^n_{i=1}\pi_i
     \bigg\{\frac{\ddot{\psi}(u(\hat{\bm\beta}_{\rm MLE}^T\bm x_i))\dot{u}^2(\hat{\bm\beta}_{\rm MLE}^T\bm x_i)x_{ij_1} x_{ij_2}}{n\pi_i}\nonumber\\
     &+{\frac{{\ddot u(\hat{\bm\beta}_{\rm MLE}^T{\bm x_i}){x_{i{j_1}}}{x_{i{j_2}}}[\dot \psi (u(\hat{\bm\beta}_{\rm MLE}^T{\bm x_i})) - {y_i}]}}{{n\pi _i}}}-\mathcal{J}_X^{j_1j_2}\bigg\}^2\nonumber\\
  = &\frac{1}{r}\sum^n_{i=1}\pi_i
     \bigg\{\frac{\ddot{\psi}(u(\hat{\bm\beta}_{\rm MLE}^T\bm x_i))\dot{u}^2(\hat{\bm\beta}_{\rm MLE}^T\bm x_i)x_{ij_1} x_{ij_2}}{n\pi_i}\nonumber\\
     &+{\frac{{\ddot u({\hat{\bm\beta}_{\rm MLE}^T}{\bm x_i}){x_{i{j_1}}}{x_{i{j_2}}}[\dot \psi (u({\hat{\bm\beta}_{\rm MLE}^T}{\bm x_i})) - {y_i}]}}{{n\pi _i}}}\bigg\}^2
     -\frac{1}{r}\left(\mathcal{J}_X^{j_1j_2}\right)^2\nonumber\\
  \le &\frac{2}{r}\sum^n_{i=1}\pi_i \bigg[
     \left\{\frac{\ddot{\psi}(u(\hat{\bm\beta}_{\rm MLE}^T\bm x_i))\dot{u}^2(\hat{\bm\beta}_{\rm MLE}^T\bm x_i)x_{ij_1} x_{ij_2}}{n\pi_i}\right\}^2\nonumber\\
     &+\left\{{\frac{{\ddot u({\hat{\bm\beta}_{\rm MLE}^T}{\bm x_i}){x_{i{j_1}}}{x_{i{j_2}}}[\dot \psi (u({\hat{\bm\beta}_{\rm MLE}^T}{\bm x_i})) - {y_i}]}}{{n\pi _i}}}\right\}^2\bigg]\nonumber\\
  \le &\frac{2}{r}\sum^n_{i=1}\pi_i \bigg[O_P(1)
     \left\{\frac{x_{ij_1} x_{ij_2}}{n\pi_i}\right\}^2 +O_P(1)\left\{{\frac{{{x_{i{j_1}}}{x_{i{j_2}}}[\dot \psi (u({\hat{\bm\beta}_{\rm MLE}^T}{\bm x_i})) - {y_i}]}}{{n\pi _i}}}\right\}^2\bigg]\nonumber\\
 =&{O_{P|{\mathcal{F}_n}}}(\frac{1}{r})\nonumber
\end{align}
where the last inequality stems form (H.1) and the last equality holds by assumptions (H.3) and (H.6). From Chebyshev's inequality, it is proved that Equation (\ref{eq:ob-infor}) holds.

To prove Equation (\ref{eq:ob-score}), let $t_i(\bm\beta)=y_i u(\bm\beta^T\bm x_i)-\psi(u(\bm\beta^T\bm x_i)$, $t_i^*(\bm\beta)=y_i^*u(\bm\beta^T\bm x_i^*)-\psi(u(\bm\beta^T\bm x_i^*))$, then
\begin{equation*}
L^*(\bm\beta)=\frac{1}{r}\sum^r_{i=1} \frac{t_i^*(\bm\beta)}{\tilde\pi_i^*}, \quad \text{and} \quad L(\bm\beta)=\sum^n_{i=1} t_i(\bm\beta).
\end{equation*}
Under the conditional distribution of the subsample given $\mathcal{F}_n$,
\[
E\left\{\frac{L^*(\bm\beta)}{n}-\frac{L(\bm\beta)}{n}\bigg|\mathcal{F}_n\right\}^2
    = \frac{1}{rn^2}\sum^n_{i=1}  \frac{t_i^2(\bm\beta)}{\pi_i}-\frac{1}{r}\left(\frac{1}{n}\sum^n_{i=1} t_i(\bm\beta)\right)^2.
\]
Combining the facts that the parameter space is compact and $u(t)$ is continuous function, by assumption (H.1) we have that $u(\bm\beta^T\bm x_i)$ are uniformly bounded. Then, it can be shown that by
Remark~\ref{rm:analytic}
{\normalsize\[\begin{split}
|t_i(\bm\beta)|&\le |y_i u(\bm\beta^T\bm x_i)-\dot{\psi}(u(\bm\beta^T\bm x_i))u(\bm\beta^T\bm x_i)|+|\dot{\psi}(u(\bm\beta^T\bm x_i))u(\bm\beta^T\bm x_i)-\psi(u(\bm\beta^T\bm x_i))|\\
&\le |[y_i-\dot{\psi}(u(\bm\beta^T\bm x_i))] u(\bm\beta^T\bm x_i)|+|\dot{\psi}(u(\bm\beta^T\bm x_i))u(\bm\beta^T\bm x_i)|+|\psi(u(\bm\beta^T\bm x_i))|,
\end{split}\]}

Therefore, 
we have
\[\begin{split}
\left(\frac{1}{n}\sum^n_{i=1} t_i(\bm\beta)\right)^2
&\le \left(\frac{1}{n}\sum^n_{i=1} O_P(1)|y_i-\dot{\psi}(u(\bm\beta^T\bm x_i))|\right)^2 \\
&\quad\quad+ \frac{O_P(1)}{n}\sum^n_{i=1}|y_i-\dot{\psi}(u(\bm\beta^T\bm x_i))| +O_P(1).
\end{split} \]
From Assumptions (H.1), we have $\mathop {\sup}\limits_n n^{-1}\sum^n_{i=1} t_i(\bm\beta) < \infty $.
Thus,
\begin{equation}\label{eq:l2error}
E\left\{\frac{L^*(\bm\beta)}{n}-\frac{L(\bm\beta)}{n}\bigg|\mathcal{F}_n\right\}^2=O_{P|\mathcal{F}_n}(r^{-1/2}).
\end{equation}
Now the desired result (\ref{eq:L-L}) follows from Chebyshev's Inequality.

Similarly, we can show that
\[\text{Var}\left(\frac{1}{n}\frac{\partial L^*(\hat{\bm\beta}_{\rm MLE})}{\partial\bm\beta}\right)=O_P(r^{-1}).\]
Thus (\ref{eq:ob-score}) is true.
\end{proof}

\subsection{Proof of {Theorem~\ref{thm:as-general-alg}}}
\begin{proof}
As $r\rightarrow\infty$, by (\ref{eq:l2error}), we have that $n^{-1}L^*(\bm\beta)-n^{-1}L(\bm\beta)\rightarrow0$ in conditional probability given $\mathcal{F}_n$. Note that the parameter space is compact and $\hat{\bm\beta}_{\rm MLE}$ is the unique global maximum of the continuous convex function $L(\bm\beta)$. Thus, from Theorem 5.9 and its remark of \cite{Vaart2000Asymptotic}, by \eqref{eq:ob-score} we have
\begin{equation}\label{eq:12}
  \|\tilde{\bm\beta}-\hat{\bm\beta}_{\rm MLE}\|=o_{P|\mathcal{F}_n}(1).
\end{equation}
as $n\rightarrow\infty, r\rightarrow\infty$, conditionally on $\mathcal{F}_n$ in probability.

Using Taylor's theorem for random variables \citep[see][Chapter 4]{Ferguson1996A},
\begin{align}
0=\frac{\dot{L}^*_j(\tilde{\bm\beta})}{n}
=&\frac{\dot{L}^*_j(\hat{\bm\beta}_{\rm MLE})}{n}
   +\frac{1}{n} \frac{\partial\dot{L}^*_j(\hat{\bm\beta}_{\rm MLE})}{\partial \bm\beta^T}(\tilde{\bm\beta}-\hat{\bm\beta}_{\rm MLE})
   +\frac{1}{n} R_j,
   \label{eq:taylor-expansion}
\end{align}
where $\dot{L}^*_j({\bm\beta})$ is the partial derivative of $L^*({\bm\beta})$ with respect to $\beta_j$, and the 
remainder
{\normalsize\begin{equation*}
  \frac{1}{n}R_j=\frac{1}{n}(\tilde{\bm\beta}-\hat{\bm\beta}_{\rm MLE})^T
   \int_0^1\int_0^1\frac{\partial^2\dot{L}^*_j\{\hat{\bm\beta}_{\rm MLE}
   +uv(\tilde{\bm\beta}-\hat{\bm\beta}_{\rm MLE})\}}{\partial \bm\beta\partial
   \bm\beta^T}v du dv\
(\tilde{\bm\beta}-\hat{\bm\beta}_{\rm MLE}).
\end{equation*}}
By calculus, we get
{\normalsize\begin{align*}
\frac{\partial^2\dot{L}^*_j(\bm\beta)}
  {\partial\bm\beta\partial\bm\beta^T}&=\frac{1}{r}\sum\limits_{i = 1}^r {\{ \frac{{\dddot{u}({\bm\beta ^T}{\bm x_i^*})[{y_i^*} - \dot \psi (u({\bm\beta ^T}{\bm x_i^*}))]}}{{\pi _i^*}}}  - \frac{{\ddot u({\bm \beta ^T}{\bm x_i^*})\dot u({\beta ^T}{\bm x_i^*})\dot \psi (u({\bm \beta ^T}{\bm x_i^*}))}}{{\pi _i^*}}\\
  & - \frac{{2\ddot \psi (u({\bm \beta ^T}{\bm x_i^*}))\ddot u({\bm \beta ^T}{\bm x_i^*})\dot u({\bm \beta ^T}{\bm x_i^*})}}{{\pi _i^*}} - \frac{{\dddot{\psi}(u({\bm \beta ^T}{\bm x_i^*})){{\dot u}^2}({\bm \beta ^T}{\bm x_i^*})}}{{\pi _i^*}}\} {x_{ij}^*}{\bm x_i^*}{\bm x_i^*}^T.
\end{align*}}
From (H.1) and Remark~\ref{rm:analytic}, we have
\begin{align*}
 & \frac{1}{n}\left\|\frac{\partial^2\dot{L}^*_j(\bm\beta)}
  {\partial\bm\beta\partial\bm\beta^T}\right\|_S\\
 =&\frac{1}{nr}\bigg\|\sum_{i=1}^r \bigg(  \frac{{\ddot u({\bm\beta ^T}{\bm x^*_i})\dot \psi (u({\bm\beta ^T}{\bm x^*_i}))}}{{\pi _i^*}} + \frac{{2\ddot \psi (u({\bm \beta ^T}{\bm x^*_i}))\ddot u({\bm \beta ^T}{\bm x^*_i})}}{{\pi _i^*}}+ \\
 &\frac{{\dddot{\psi} (u({\bm \beta ^T}{\bm x^*_i}))\dot u({\bm \beta ^T}{\bm x^*_i})}}{{\pi _i^*}}\bigg)
 {\dot u({\bm \beta ^T}{\bm x^*_i})}x^*_{ij}\bm x^*_i{\bm x^*_i}^T\bigg\|_S\\
  &+\frac{1}{rn}{\left\| {\sum\limits_{i = 1}^r {\frac{{\dddot{u}({\bm \beta ^T}{\bm x^*_i})[{y^*_i} - \dot \psi (u({\bm \beta ^T}{\bm x^*_i}))]}}{{\pi _i^*}}} } \right\|_S}\\
 \le &\frac{C_3}{rn}\sum_{i=1}^r\frac{\|\bm x^*_i\|^3}{\pi^*_i}+\frac{C_4}{rn}{\left\| {\sum\limits_{i = 1}^r {\frac{{|[{y^*_i} - \dot \psi (u({\bm \beta ^T}{\bm x^*_i})]x^*_{ij}|}}{{\pi _i^*}}} } \bm x^*_i{\bm x^*_i}^T\right\|_S}.
\end{align*}
for all $\bm\beta \in \Lambda _B$, where $C_3$ and $C_4$ are some constants by using Remark~\ref{rm:analytic}.

As $\tau\rightarrow\infty$, assumption (H.5) gives,
\begin{align*}
  P\left(\frac{1}{nr}\sum_{i=1}^r\frac{\|\bm x^*_i\|^3}{\pi^*_i}\ge\tau
  \Bigg|\mathcal{F}_n\right) \le\frac{1}{nr\tau}\sum_{i=1}^r
  E\left(\frac{\|\bm x^*_i\|^3}{\pi^*_i}\Bigg|\mathcal{F}_n\right)
  =\frac{1}{n\tau}\sum_{i=1}^n\|\bm x_i\|^3\overset{P}{\rightarrow} 0.
\end{align*}

Also note that as $\tau\to\infty$,
\begin{align*}
& P\left( {\frac{1}{{nr}}\sum\limits_{i = 1}^r {{{\left\| {\frac{{|[{y_i} - \dot \psi (u({\bm \beta ^T}{\bm x_i}))]{x_{ij}}|{\bm x_i}\bm x_i^T}}{{\pi _i^*}}} \right\|}_S}}  \ge \tau \Big|{{\cal F}_n}} \right)\\
& \le \frac{1}{{nr\tau }}E\left( {\sum\limits_{i = 1}^r {{{\left\| {\frac{{|[{y_i} - \dot \psi (u({\bm \beta ^T}{\bm x_i}))]{x_{ij}}|{\bm x_i}\bm x_i^T}}{{\pi _i^*}}} \right\|}_S}} \Big|{{\cal F}_n}} \right)\\
& \le \frac{1}{{n\tau }}\sum\limits_{i = 1}^n {{{\left\| {{{|[{y_i} - \dot \psi (u({\bm\beta ^T}{\bm x_i}))]{x_{ij}}|{\bm x_i}\bm x_i^T}}} \right\|}_S}}\\
&  \le \frac{1}{{\tau n}}\sum\limits_{i = 1}^n {|[{y_i} - \dot \psi (u({\bm \beta ^T}{\bm x_i}))]|}  \times {\left\| {{x_{ij}}{\bm x_i}\bm x_i^T} \right\|_S}\\
& \le \frac{1}{\tau }\sqrt {\frac{1}{n}\sum\limits_{i = 1}^n {|{y_i} - \dot \psi (u({\bm \beta ^T}{\bm x_i})){|^2}} }  \cdot \sqrt {\frac{1}{n}\sum\limits_{i = 1}^n {\left\| {{x_{ij}}{\bm x_i}\bm x_i^T} \right\|_S^2} }\\
& {\rm{ = }}\frac{1}{\tau }{O_P}(1) \overset{P}{\to} 0,
\end{align*}
where the last equality is due to (H.3) and (H.5) by noticing that
\begin{align*}
&\frac{1}{n}\sum\limits_{i = 1}^n {\left\| {{x_{ij}^*}{\bm x_i^*}\bm x_i^{*T}} \right\|_S^2}  \le \frac{1}{n}\sum\limits_{i = 1}^n {{{\left| {{x_{ij}^*}} \right|}^2}\left\| {{\bm x_i^*}\bm x_i^{*T}} \right\|_S^2}\\
\le& \frac{1}{n}\sum\limits_{i = 1}^n {{{\left\| {{\bm x_i}} \right\|}^2}\left\| {{\bm x_i}\bm x_i^T} \right\|_S^2}
  \le \frac{1}{n}\sum\limits_{i = 1}^n {{{\left\| {{\bm x_i}} \right\|}^6}} .
\end{align*}
Thus we have
\[\frac{1}{n}\left\|\frac{\partial^2\dot{L}^*_j(\bm\beta)}
  {\partial\bm\beta\partial\bm\beta^T}\right\|_S=O_{P|\mathcal{F}_n}(1).\]
From (H.1)-(H.3) and Remark~\ref{rm:analytic}, it is known that 
\begin{align*}
&  \frac{1}{n}\left\|\int_0^1\int_0^1\frac{\partial^2\dot{L}^*_j\{\hat{\bm\beta}_{\rm MLE}
   +uv(\tilde{\bm\beta}-\hat{\bm\beta}_{\rm MLE})\}}{\partial \bm\beta\partial\bm\beta^T}vdudv\ \right\|\\
& \le \frac{1}{n}\int_0^1\int_0^1\left\|\frac{\partial^2\dot{L}^*_j\{\hat{\bm\beta}_{\rm MLE}
   +uv(\tilde{\bm\beta}-\hat{\bm\beta}_{\rm MLE})\}}{\partial \bm\beta\partial\bm\beta^T}\right\| vdudv =O_{P|\mathcal{F}_n}(1).
\end{align*}

Combining the above equations with the Taylor's expansion (\ref{eq:taylor-expansion}), we have
\begin{equation}\label{eq:13-2}
  \tilde{\bm\beta}-\hat{\bm\beta}_{\rm MLE}=
  -\check{\mathcal{J}}_X^{-1}\left\{\frac{\dot {L}^*(\hat{\bm\beta}_{\rm MLE})}{n}
    +O_{P|\mathcal{F}_n}(\|\tilde{\bm\beta}-\hat{\bm\beta}_{\rm MLE}\|^2)\right\}.
\end{equation}
From  Lemma~\ref{lem:lem1} and Assumption (H.4), it is obvious that $\check{\mathcal{J}}_X^{-1} =O_{P|\mathcal{F}_n}(1)$. Therefore,
\begin{equation*}
  \tilde{\bm\beta}-\hat{\bm\beta}_{\rm MLE}
  =O_{P|\mathcal{F}_n}(r^{-1/2})+o_{P|\mathcal{F}_n}(\|\tilde{\bm\beta}-\hat{\bm\beta}_{\rm MLE}\|),
\end{equation*}
which implies that
$\tilde{\bm\beta}-\hat{\bm\beta}_{\rm MLE}=O_{P|\mathcal{F}_n}(r^{-1/2})$.
\end{proof}

\subsection{Proof of {Theorem~\ref{thm:CLT}}}
\begin{proof}
  Note that
\begin{equation}\label{eq:linder1}
  \frac{\dot L^*(\hat{\bm\beta}_{\rm MLE})}{n}
  =\frac{1}{r}\sum^r_{i=1}\frac{\{y^*_i-\dot{\psi}(u(\hat{\bm\beta}_{\rm MLE}^T\bm x_i^*))\}\dot{u}(\hat{\bm\beta}_{\rm MLE}^T\bm x_i^*)\bm x^*_i}{n\pi^*_i}
  =:\frac{1}{r}\sum^r_{i=1}\eta_i.
\end{equation}
It can be seen that given $\mathcal{F}_n$, $\eta_1, \ldots, \eta_r$ are i.i.d  random variables with mean $\mathbf{0}$ and variance
\begin{align}\label{eq:linder2}
  &\text{var}(\eta_1|\mathcal{F}_n)=
 \frac{1}{n^2}\sum^n_{i=1}\frac{\{y_i-\dot{\psi}_i(u(\hat{\bm\beta}_{\rm MLE}^T\bm x_i))\}^2
\dot{u}^2(\hat{\bm\beta}_{\rm MLE}^T\bm x_i) \bm x_i\bm x_i^T}{\pi_i}. 
\end{align}
Then from (H.7) with $\gamma=0$, we know that $\text{var}(\eta_i|\mathcal{F}_n)=O_P(1)$ as $n\rightarrow\infty$.

Meanwhile, for some $\gamma>0$ and every $\varepsilon>0$,
\begin{equation}\label{eq:dbclt}
\begin{aligned}
&\sum^r_{i=1} E\{\|r^{-1/2}\eta_i\|^2
    I(\|\eta_i\|>r^{1/2}\varepsilon)|\mathcal{F}_n\}\\
    \le&\frac{1}{r^{1+\gamma/2}\varepsilon^{\gamma}}
    \sum^r_{i=1}E\{\|\eta_i\|^{2+\gamma}
    I(\|\eta_i\|>r^{1/2}\varepsilon)|\mathcal{F}_n\}\\
\le&\frac{1}{r^{1+\gamma/2}\varepsilon^{\gamma}}
    \sum^r_{i=1} E(\|\eta_i\|^{2+\gamma}|\mathcal{F}_n)\\
=& \frac{1}{r^{\gamma/2}}\frac{1}{\varepsilon^{\gamma}}\frac{1}{n^{2+\gamma}}
    \sum^n_{i=1}\frac{|y_i-\dot{\psi}_i(u(\hat{\bm\beta}_{\rm MLE}^T\bm x_i))|^{2+\gamma}
    \|\dot{u}(\hat{\bm\beta}_{\rm MLE}^T\bm x_i)\bm x_i\|^{2+\gamma}}{\pi_i^{1+\gamma}}.
\end{aligned}
\end{equation}
From (H.7) for some $\gamma>0$, we obtain
\begin{align*}
\sum^r_{i=1} E\{\|r^{-1/2}\eta_i\|^2
    I(\|\eta_i\|>r^{1/2}\varepsilon)|\mathcal{F}_n\} \le \frac{1}{r^{\gamma/2}}\frac{1}{\varepsilon^{\gamma}}
O_P(1)\cdot O_P(1)=o_P(1),
\end{align*}
This shows that the Lindeberg-Feller conditions are satisfied in probability.
From (\ref{eq:linder1}) and (\ref{eq:linder2}), by the Lindeberg-Feller central limit theorem \citep[Proposition 2.27 of][]{Vaart2000Asymptotic}, conditionally on $\mathcal{F}_n$, 
\begin{equation*}
  \frac{1}{n}V_c^{-1/2}\dot{L}^*(\hat{\bm\beta}_{\rm MLE})=
  \frac{1}{r^{1/2}}\{\text{var}(\eta_i|\mathcal{F}_n)\}^{-1/2}\sum^r_{i=1}\eta_i
  \rightarrow N(0,{I}),
\end{equation*}
in distribution.
From Lemma~\ref{lem:lem1}, (\ref{eq:13-2}) and Theorem \ref{thm:as-general-alg}, we have
\begin{equation}\label{eq:15}
  \tilde{\bm\beta}-\hat{\bm\beta}_{\rm MLE}=
  -\frac{1}{n}\check{\mathcal{J}}_X^{-1}\dot{L}^*(\hat{\bm\beta}_{\rm MLE})+O_{P|\mathcal{F}_n}(r^{-1}).
\end{equation}
From (\ref{eq:ob-infor}) in Lemma~\ref{lem:lem1}, it follows that
\begin{align}\label{eq:7}
  \check{\mathcal{J}}_X^{-1}-\mathcal{J}_X^{-1}
  &=-\mathcal{J}_X^{-1}(\check{\mathcal{J}}_X-\mathcal{J}_X)\check{\mathcal{J}}_X^{-1}
  =O_{P|\mathcal{F}_n}(r^{-1/2}).
\end{align}
Based on Assumption (H.4) and (\ref{eq:linder2}), it can be proved that
\begin{equation*}
  V={\mathcal{J}}_X^{-1}V_c{\mathcal{J}}_X^{-1}
  =\frac{1}{r}{\mathcal{J}}_X^{-1}\left(rV_c\right)
  {\mathcal{J}}_X^{-1}=O_{P}(r^{-1}).
\end{equation*}
Thus,
\begin{align*}
  &V^{-1/2}(\tilde{\bm\beta}-\hat{\bm\beta}_{\rm MLE})
=-V^{-1/2}n^{-1}\check{\mathcal{J}}_X^{-1}\dot{L}^*(\hat{\bm\beta}_{\rm MLE})
    +O_{P|\mathcal{F}_n}(r^{-1/2})\\
=&-V^{-1/2}\mathcal{J}_X^{-1}n^{-1}\dot{L}^*(\hat{\bm\beta}_{\rm MLE})
    -V^{-1/2}(\check{\mathcal{J}}_X^{-1}-\mathcal{J}_X^{-1})n^{-1}\dot{L}^*(\hat{\bm\beta}_{\rm MLE})
    +O_{P|\mathcal{F}_n}(r^{-1/2})\\
=&-V^{-1/2}\mathcal{J}_X^{-1}V_c^{1/2}V_c^{-1/2}n^{-1}\dot{L}^*(\hat{\bm\beta}_{\rm MLE})
    +O_{P|\mathcal{F}_n}(r^{-1/2}).
\end{align*}
So the result in \eqref{normal} of Theorem~\ref{thm:CLT} follows by applying Slutsky's Theorem \citep[Theorem 6, Section 6 of][]{Ferguson1996A} and the fact that
\begin{equation*}
  V^{-1/2}\mathcal{J}_X^{-1}V_c^{1/2}(V^{-1/2}\mathcal{J}_X^{-1}V_c^{1/2})^T
  =V^{-1/2}\mathcal{J}_X^{-1}V_c^{1/2}V_c^{1/2}\mathcal{J}_X^{-1}V^{-1/2}={I}.
\end{equation*}
\end{proof}

\subsection{Proof of {Theorem~\ref{thm:3}}}

\begin{proof}
Note that
\begin{align*}
  &\text{tr}({V})=\text{tr}(\mathcal{J}_X^{-1}{V}_c\mathcal{J}_X^{-1})\\
=&\frac{1}{n^2r}\sum_{i=1}^n\text{tr}\left[\frac{1}{\pi_i}
    \{y_i-\dot{\psi}(u(\hat{\bm\beta}_{\rm MLE}^T\bm x_i))\}^2\mathcal{J}_X^{-1}\dot{u}(\hat{\bm\beta}_{\rm MLE}^T\bm x_i)\bm x_i[\dot{u}(\hat{\bm\beta}_{\rm MLE}^T\bm x_i)\bm x_i]^T\mathcal{J}_X^{-1}\right]\\
=&\frac{1}{n^2r}\sum_{i=1}^n\left[\frac{1}{\pi_i}\{y_i-\dot{\psi}(u(\hat{\bm\beta}_{\rm MLE}^T\bm x_i))\}^2
    \|\mathcal{J}_X^{-1}\dot{u}(\hat{\bm\beta}_{\rm MLE}^T\bm x_i)\bm x_i\|^2\right]\\
=&\frac{1}{n^2r}(\sum_{i=1}^n\pi_i)\sum_{i=1}^n\left[\pi_i^{-1}\{y_i-\dot{\psi}(u(\hat{\bm\beta}_{\rm MLE}^T\bm x_i))\}^2
    \|\mathcal{J}_X^{-1}\dot{u}({\bm\beta}^{T}{\bm x}_{i})\bm x_i\|^2\right]\\
\ge&\frac{1}{n^2r}\left[\sum_{i=1}^n|y_i-\dot{\psi}(u(\hat{\bm\beta}_{\rm MLE}^T\bm x_i))|\|\mathcal{J}_X^{-1}\dot{u}(\hat{\bm\beta}_{\rm MLE}^T\bm x_i)\bm x_i\|\right]^2,
\end{align*}
where the last inequality follows from the Cauchy-Schwarz inequality, and the equality in it holds if and only if $$\pi_i\propto |y_i-\dot{\psi}(u(\hat{\bm\beta}_{\rm MLE}^T\bm x_i))|\|\mathcal{J}_X^{-1}\bm x_i\|I\{|y_i-\dot{\psi}(u(\hat{\bm\beta}_{\rm MLE}^T\bm x_i))|\|\mathcal{J}_X^{-1}\bm x_i\|>0\}.$$
 Here we define $0/0=0$, and this is equivalent to removing data points with $|y_i-\dot{\psi}(u(\hat{\bm\beta}_{\rm MLE}^T\bm x_i))|=0$ in the expression of $V_c$.
\end{proof}

\subsection{Proof of {Theorems~\ref{thm:r-size} and~\ref{thm:r-size1}}}

 Let ${\left\| A \right\|_F}: =  (\sum_{i = 1}^m {\sum_{j = 1}^n {A_{ij}^2} })^{1/2}$ denote the Frobenius norm. For a given $m \times n$ matrix $A$ and an $n \times p$ matrix $B$, we want to get an approximation to the
product $AB$. In the following fast Monte Carlo algorithm in \cite{drineas2006fast}, we do $r$ independent trials. In each trial we randomly sample an element of $\{ 1,2, \cdots ,n\} $ with given discrete distribution $P=:\{ {p_i}\} _{i = 1}^n$. Then we extract an $m \times r$ matrix $C$ from the columns of $A$, and extract an $r \times n$ matrix $R$ from the corresponding rows of $B$. If the $P$ is chosen appropriately in the sense that $CR$ is a nice approximation to $AB$,
then the F-norm matrix concentration inequality in Lemma \ref{lem-fMC} holds with high probability.

\begin{lemma}\label{lem-fMC} (Theorem 2.1 in \cite{drineas2006sampling}) Let $A^{(i)}$ be the $i$-th row of $A \in {R^{m \times n}}$ as row vector and $B_{(j)}$ be the $j$-th column of $B \in {R^{n \times p}}$ as column vector. Suppose sampling probabilities $\{ {p_i}\} _{i = 1}^n,(\sum\nolimits_{i = 1}^n {{p_i} = 1} )$ are such that
\[{p_i} \ge \beta \frac{{\left\| {{A^{(i)}}} \right\|\left\| {{B_{(j)}}} \right\|}}{{\sum\nolimits_{j = 1}^n {\left\| {{A^{(i)}}} \right\|\left\| {{B_{(j)}}} \right\|} }}\]
for some $\beta  \in (0,1]$. Construct $C$ and $R$ with Algorithm 1 in \cite{drineas2006fast}, and assume that $\varepsilon  \in (0,1/3)$. Then, with probability at least $1 - \varepsilon $, we have
\[{\left\| {AB - CR} \right\|_F} \le \frac{{4\sqrt {\log (1/\varepsilon )} }}{{\beta \sqrt c }}{\left\| A \right\|_F}{\left\| B \right\|_F}.\]
\end{lemma}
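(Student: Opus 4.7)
The plan is to prove this by the standard two-stage argument: first bound the mean squared error of the randomized product $CR$ in expectation, then upgrade to a high-probability tail bound via a bounded-differences (McDiarmid) concentration inequality.

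First, I would unpack the sampling estimator. In Algorithm 1 of Drineas et al., one draws indices $i_1,\ldots,i_c$ i.i.d.\ from $\{p_i\}_{i=1}^n$ and sets the $t$-th column of $C$ equal to $A_{(i_t)}/\sqrt{c\,p_{i_t}}$ and the $t$-th row of $R$ equal to $B^{(i_t)}/\sqrt{c\,p_{i_t}}$, where $A_{(i)},B^{(i)}$ are the $i$-th column of $A$ and row of $B$. A direct computation gives $\mathbb{E}[CR]=AB$ entrywise, and by independence of the $c$ trials,
\begin{equation*}
\mathbb{E}\|CR-AB\|_F^2
=\frac{1}{c}\sum_{i=1}^n\frac{\|A_{(i)}\|^2\|B^{(i)}\|^2}{p_i}-\frac{1}{c}\|AB\|_F^2.
\end{equation*}
Inserting the importance-sampling lower bound $p_i\ge\beta\|A_{(i)}\|\|B^{(i)}\|/\sum_j\|A_{(j)}\|\|B^{(j)}\|$, applying the Cauchy--Schwarz identity $\sum_j\|A_{(j)}\|\|B^{(j)}\|\le\|A\|_F\|B\|_F$, and dropping the non-positive $-\|AB\|_F^2/c$ term yields the clean variance estimate $\mathbb{E}\|CR-AB\|_F^2\le\|A\|_F^2\|B\|_F^2/(\beta c)$.

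Second, to obtain the tail bound I would view $f(i_1,\ldots,i_c):=\|CR-AB\|_F$ as a function of the independent indices and check a bounded-differences property. Replacing $i_t$ by an arbitrary $i'_t$ only changes one rank-one summand of $CR$, so by the triangle inequality the change in $f$ is at most
\begin{equation*}
\Delta_t\;\le\;\frac{\|A_{(i_t)}\|\|B^{(i_t)}\|}{c\,p_{i_t}}+\frac{\|A_{(i'_t)}\|\|B^{(i'_t)}\|}{c\,p_{i'_t}}\;\le\;\frac{2\|A\|_F\|B\|_F}{\beta c},
\end{equation*}
where the second step again uses the importance-sampling hypothesis together with $\sum_j\|A_{(j)}\|\|B^{(j)}\|\le\|A\|_F\|B\|_F$. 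McDiarmid's inequality then gives
\begin{equation*}
\Pr\bigl(f\ge\mathbb{E}f+\eta\bigr)\le\exp\!\Bigl(-\frac{\beta^2 c\,\eta^2}{2\|A\|_F^2\|B\|_F^2}\Bigr).
\end{equation*}
Combining this with $\mathbb{E}f\le\sqrt{\mathbb{E}f^2}\le\|A\|_F\|B\|_F/\sqrt{\beta c}$ by Jensen, choosing $\eta$ so that the exponent equals $\log(1/\varepsilon)$, and absorbing the two contributions into a single constant yields the bound $f\le 4\sqrt{\log(1/\varepsilon)}\,\|A\|_F\|B\|_F/(\beta\sqrt{c})$ with probability at least $1-\varepsilon$, provided $\varepsilon\in(0,1/3)$ so that $\sqrt{\log(1/\varepsilon)}$ dominates the $1$ coming from $\mathbb{E}f$.

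The main obstacle is the constant in front. Verifying the bounded-differences step $\Delta_t=O(\|A\|_F\|B\|_F/(\beta c))$ is where the sampling hypothesis is used in an essential, non-asymptotic way, and one has to be careful that the single-coordinate change is measured in Frobenius norm of a rank-one correction, not spectrally. Tightening the resulting McDiarmid exponent together with the mean estimate to exactly reproduce the constant $4$ in the statement is a book-keeping exercise but the only step that requires any care; the expectation computation and the verification of the difference bound itself are both routine manipulations of the sampling scheme.
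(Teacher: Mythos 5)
Your proof is correct: the expectation bound, the bounded-differences computation for a single resampled index (each rank-one correction having Frobenius norm at most $\|A\|_F\|B\|_F/(\beta c)$ by the importance-sampling hypothesis and Cauchy--Schwarz), and the final constant $1+\sqrt{2}<4$ obtained using $\beta\le 1$ and $\log(1/\varepsilon)>1$ all check out. The paper itself gives no proof of this lemma---it is imported verbatim as Theorem 2.1 of Drineas, Mahoney and Muthukrishnan (2006)---and your argument is essentially the one in that source, which combines the same variance calculation with an Azuma/Doob-martingale concentration step equivalent to your McDiarmid application.
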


Now we prove Theorems ~\ref{thm:r-size} and \ref{thm:r-size1} by applying the above Lemma \ref{lem-fMC}.

\begin{proof}
Note the fact that the maximum likelihood estimate $\hat{\bm \beta}_{\rm MLE}$ of the parameter vector $\bm\beta$ satisfy the following estimation equation
\begin{equation}\label{eq:equation1}
 \bm X^T[\bm y-\dot{\psi}(u(\bm X^T{\bm \beta}))]\dot{u}(\bm X^T{\bm \beta})=\bm 0,
\end{equation}
{\purple where $\dot{\psi}(u(\bm X^T{\bm \beta}))$ denotes the $n \times n$  diagonal matrix whose  $i$-th element in its diagonal is $\dot{\psi}(u(\bm x_i^T{\bm \beta}))$.}

 Without of loss of generality, we only show the case with probability $\bm\pi^{\rm mV}$, since the proof for $\bm\pi^{\rm mVc}$ is quite similar.
 Let $S$ be an $n\times r$ matrix whose $i$-th column is $1/\sqrt{r{\pi}^{\mathrm{mV}}_{j_i}}\bm e_{j_i}$, where $\bm e_{j_i}\in \mathbb{R}^n$ denotes the all-zeros vector except that its $j_i$-th entry is set to one. Here $j_i$ denotes the $j_i$-th data point chosen from the $i$-th independent random subsampling with probabilities $\bm\pi^{\mathrm{mV}}$.
 Then $\tilde{\bm\beta}$ satisfies the following equation
\begin{equation}\label{eq:equation2}
 \bm X^TSS^T[\bm y-\dot{\psi}(u(\bm X^T{\bm \beta}))]\dot{u}(\bm X^T{\bm \beta})=\bm 0.
\end{equation}
 Let $\|\cdot\|_F$ denote the Frobenius norm, we have
 \begin{align*}
  &\quad~ \sigma_{\min}(\bm X^TSS^T\dot{u}(\bm X^T\tilde{\bm \beta}))\|[\dot{\psi}(u(\bm X^T\tilde{\bm\beta}))-\dot{\psi}(u(\bm X^T\hat{\bm\beta}_{\rm MLE}))]\|\\
  &\le\|{\bm X}^TSS^T\dot{u}(\bm X^T\tilde{\bm \beta})[\dot{\psi}(u(\bm X^T\tilde{\bm\beta}))-\dot{\psi}(u(\bm X^T\hat{\bm\beta}_{\rm MLE}))]\|_F\\
  &\le\|{\bm X}^TSS^T\dot{u}(\bm X^T\tilde{\bm \beta})[\dot{\psi}(u(\bm X^T\tilde{\bm\beta}))-\bm y]\|_F\\
  &+\|\bm X^TSS^T\dot{u}(\bm X^T\tilde{\bm \beta})[\bm y-\dot{\psi}(u(\bm X^T\hat{\bm\beta}_{\rm MLE}))]\|_F\\
  &=\|{\bm X}^TSS^T\dot{u}(\bm X^T\tilde{\bm \beta})[\bm y-\dot{\psi}(u(\bm X^T\hat{\bm\beta}_{\rm MLE}))]\|_F~~~~[by~(\ref{eq:equation2})]\\
  &\le \|\bm X^T\dot{u}(\bm X^T\tilde{\bm \beta})[\bm y-\dot{\psi}(u(\bm X^T\hat{\bm\beta}_{\rm MLE}))]\|_F\\
  &+\left\|\bm X^T\dot{u}(\bm X^T\tilde{\bm \beta})[\bm y-\dot{\psi}(u(\bm X^T\hat{\bm\beta}_{\rm MLE}))]-{\bm X}^TSS^T\dot{u}(\bm X^T\tilde{\bm \beta})[\bm y-\dot{\psi}(u(\bm X^T\hat{\bm\beta}_{\rm MLE}))]\right\|_F\\
  &\le \|\bm X^T\dot{u}(\bm X^T\tilde{\bm \beta})[\bm y-\dot{\psi}(u(\bm X^T\hat{\bm\beta}_{\rm MLE}))]\|_F\\
  &+\frac{4\kappa(\mathcal{J}_X^{-1})\sqrt{\log(1/\epsilon)}}{\sqrt{r}}\|\bm X\|_F\|\dot{u}(\bm X^T\tilde{\bm \beta})[\bm y-\dot{\psi}(u(\bm X^T\hat{\bm\beta}_{\rm MLE}))]\|\\
  &\le \sigma_{\max}(\bm X)\sqrt{p}\|\dot{u}(\bm X^T\tilde{\bm \beta})[\bm y-\dot{\psi}(u(\bm X^T\hat{\bm\beta}_{\rm MLE}))]\|\\
  &+\frac{4\kappa(\mathcal{J}_X^{-1})\sqrt{\log(1/\epsilon)}}{\sqrt{r}}\sigma_{\max}(\bm X)\sqrt{p}\|\dot{u}(\bm X^T\tilde{\bm \beta})[\bm y-\dot{\psi}(u(\bm X^T\hat{\bm\beta}_{\rm MLE}))]\|\\
  &\le[1+\frac{4\kappa(\mathcal{J}_X^{-1})\sqrt{\log(1/\epsilon)}}{\sqrt{r}}]\sigma_{\max}(\bm X)\sqrt{p}\|\dot{u}(\bm X^T\tilde{\bm \beta})[\bm y-\dot{\psi}(u(\bm X^T\hat{\bm\beta}_{\rm MLE}))]\|\\
  &\le C_{\dot{u}}[1+\frac{4\kappa(\mathcal{J}_X^{-1})\sqrt{\log(1/\epsilon)}}{\sqrt{r}}]\sigma_{\max}(\bm X)\sqrt{p}\|[\bm y-\dot{\psi}(u(\bm X^T\hat{\bm\beta}_{\rm MLE}))]\|
\end{align*}
 where the fourth last inequality follows from Lemma \ref{lem-fMC} by putting $A=\bm X^T\dot{u}(\bm X^T\tilde{\bm \beta}), B=\dot{u}(\bm X^T\tilde{\bm \beta})[\bm y-\dot{\psi}(u(\bm X^T\hat{\bm\beta}_{\rm MLE}))], C={\bm X}^T{\bm S}, R={\bm S}^T\dot{u}(\bm X^T\tilde{\bm \beta})(\bm y-\dot{\psi}(u(\bm X^T\hat{\bm\beta}_{\rm MLE})))$ and $\beta=1/\kappa(\mathcal{J}_X^{-1})$, and last equality stems from (H.1) and Remark \ref{rm:analytic} with ${C_{\dot u}} = \mathop {\sup }\limits_{r \in K \subset \Theta } \left| {\dot u(r)} \right|$.

 Hence,
 \begin{align}\label{eq:mid-r-size}
 & \|\dot{\psi}(u(\bm X^T\hat{\bm\beta}_{\rm MLE}))-\dot{\psi}(u(\bm X^T\tilde{\bm\beta}))\|\nonumber\\
 \le& C_{\dot{u}}\frac{[1+\frac{4\kappa(\mathcal{J}_X^{-1})\sqrt{\log(1/\epsilon)}}{\sqrt{r}}]{\sqrt p {\sigma _{\max }}(\bm X)}}{\sigma_{\min}(\dot{u}(\bm X^T\tilde{\bm \beta})\bm X^TSS^T)}\|[\bm y-\dot{\psi}(u(\bm X^T\hat{\bm\beta}_{\rm MLE}))]\|.
 \end{align}

{\purple Then by following the facts that
 $$\sigma_{\min}(\dot{u}(\bm X^T\tilde{\bm \beta})\bm X^TSS^T\bm X\dot{u}(\bm X^T\tilde{\bm \beta}))\le \sigma_{\max}(\dot{u}(\bm X^T\tilde{\bm \beta})\bm X)\sigma_{\min}(\dot{u}(\bm X^T\tilde{\bm \beta})\bm X^TSS^T)$$
 and $\sigma^2_{\min}(\dot{u}(\bm X^T\tilde{\bm \beta})\bm{\tilde{ X}}^T)=\sigma_{\min}(\dot{u}(\bm X^T\tilde{\bm \beta})\bm X^TSS^T\bm X\dot{u}(\bm X^T\tilde{\bm \beta}))\ge 0.5\sigma^2_{\min}(\bm X)$, it holds that
 \begin{equation}\label{eq:Xss}
 \sigma_{\min}(\dot{u}^2(\bm X^T\tilde{\bm \beta})\bm X^TSS^T)\ge 0.5\sigma^2_{\min}(\dot{u}(\bm X^T\tilde{\bm \beta})\bm X)/\sigma_{\max}(\dot{u}(\bm X^T\tilde{\bm \beta})\bm X).
 \end{equation}
Combing the result \eqref{eq:Xss} with (\ref{eq:mid-r-size}), the desired result holds
{\normalsize \begin{align}
 & \|\dot{\psi}(u(\bm X^T\hat{\bm\beta}_{\rm MLE}))-\dot{\psi}(u(\bm X^T\tilde{\bm\beta}))\|\nonumber\\
 & \le 2C_{\dot{u}}{[1+\frac{4\alpha\sqrt{\log(1/\epsilon)}}{\sqrt{r}}]{\sqrt p \kappa^2(\dot{u}(\bm X^T\tilde{\bm \beta})\bm X)}}\|[\bm y-\dot{\psi}(u(\bm X^T\hat{\bm\beta}_{\rm MLE}))]\|.
 \end{align}}
}

Now, we turn to prove Theorem \ref{thm:r-size1}.

Note that
\[{p_i} \ge \alpha \frac{\min_{j}|y_j-\dot{\psi}(u(\hat{\bm\beta}_{\rm MLE}^T\bm x_j))||{\dot u}(\hat{\bm\beta}_{\rm MLE}^T\bm x_j)|}{\sqrt{\sum_j|y_j-\dot{\psi}(u(\hat{\bm\beta}_{\rm MLE}^T\bm x_j))|^2|{\dot u}^2(\hat{\bm\beta}_{\rm MLE}^T\bm x_j)|}}\frac{\|\bm x_i\|}{\sum_{j}\|\bm x_j\|}=\delta\frac{\|\bm x_i\|}{\sum_{j}\|\bm x_j\|},\]
with some $0<\delta:=\frac{\alpha\gamma}{\sqrt{\sum_j|y_j-\dot{\psi}(u(\hat{\bm\beta}_{\rm MLE}^T\bm x_j))|^2|{\dot u}^2(\hat{\bm\beta}_{\rm MLE}^T\bm x_j)|}}\le 1.$

{\purple According to the Weyl~inequality, we have
 \begin{align*}
&|\sigma_{\min}(\dot{u}(\bm X^T\tilde{\bm \beta})\bm X^T\bm \dot{u}(\bm X^T\tilde{\bm \beta})\bm X)-\sigma_{\min}(\dot{u}(\bm X^T\tilde{\bm \beta})\bm X^TSS^T\bm X\dot{u}(\bm X^T\tilde{\bm \beta}))|\\
\le& \|\dot{u}(\bm X^T\tilde{\bm \beta})\bm X^T\bm \dot{u}(\bm X^T\tilde{\bm \beta})X- \dot{u}(\bm X^T\tilde{\bm \beta})\bm X^TSS^T\bm X\dot{u}(\bm X^T\tilde{\bm \beta})\|_{S}\\
\le& \|\dot{u}(\bm X^T\tilde{\bm \beta})\|_S\|(\bm X^T\bm X-\bm X^TSS^T\bm X)\|_{S}\|\dot{u}(\bm X^T\tilde{\bm \beta})\|_S\\
 \le& {c_d} C_{\dot{u}}^2\|(\bm X^T\bm X-\bm X^TSS^T\bm X)\|_F\\
 \le& {c_d} \frac{4\sqrt{\log(1/\epsilon)}C_{\dot{u}}^2}{\delta\sqrt{r}}\|\bm X\|_F^2\\
 \le& {c_d} \frac{4\sqrt{\log(1/\epsilon)}C_{\dot{u}}^2}{\delta\sqrt{r}}p\sigma_{\max}^2(\bm X).
 \end{align*}

Using 
 the above inequality,  if we set
 \[r>64{c_d^2C_{\dot{u}}^2}\log(1/\epsilon)\sigma_{\max}^4(\bm X)p^2/(\delta^2\sigma_{\min}^4(\dot{u}(\bm X^T\tilde{\bm \beta})\bm X)),\]
 it holds that
  \begin{align*}
&|\sigma_{\min}(\dot{u}(\bm X^T\tilde{\bm \beta})\bm X^TSS^T\bm X\dot{u}(\bm X^T\tilde{\bm \beta}))-\sigma_{\min}(\dot{u}(\bm X^T\tilde{\bm \beta})\bm X^T\bm X\bm \dot{u}(\bm X^T\tilde{\bm \beta}))|\\
 \le& 0.5\sigma_{\min}(\dot{u}(\bm X^T\tilde{\bm \beta})\bm X^T\bm X\bm \dot{u}(\bm X^T\tilde{\bm \beta})).
 \end{align*}
 Thus the following equation holds with probability at least $1-\varepsilon$:
 \[\sigma_{\min}(\dot{u}(\bm X^T\tilde{\bm \beta})\bm X^TSS^T\bm X\dot{u}(\bm X^T\tilde{\bm \beta}))\ge 0.5\sigma^2_{\min}(\dot{u}(\bm X^T\tilde{\bm \beta})\bm X).\]
}
 \end{proof}
 \newpage

\subsection{Proof of {Theorem~\ref{thm:asy-2step-alg}}}

For the average weighted log-likelihood in Step 2 of two-step algorithm, we have
\begin{align*}
L_{{\tilde{\bm\beta}_0}}^{\rm{two-step}}(\bm\beta ): &= \frac{1}{{r + {r_0}}}\sum\limits_{i = 1}^{r + {r_0}} {\frac{{t_i^*(\bm\beta )}}{{\pi _i^*({\tilde{\bm\beta}_0})}}}  = \frac{1}{{r + {r_0}}}[\sum\limits_{i = 1}^{{r_0}} {\frac{{t_i^*(\bm\beta )}}{{\pi _i^*({\tilde{\bm\beta}_0})}} + \sum\limits_{i = {r_0} + 1}^{r + {r_0}} {\frac{{t_i^*(\bm\beta )}}{{\pi _i^*({\tilde{\bm\beta}_0})}}} } ]\\
&=\frac{{ {{r_0}} }}{{r + {r_0}}} \cdot \frac{1}{{ {{r_0}} }}\sum\limits_{i = 1}^{{r_0}} {\frac{{t_i^*(\bm\beta )}}{{\pi _i^*({\tilde{\bm\beta}_0})}} + \frac{{ r }}{{r + {r_0}}} \cdot \frac{1}{{ r }}\sum\limits_{i = {r_0} + 1}^{r + {r_0}} {\frac{{t_i^*(\bm\beta )}}{{\pi _i^*({\tilde{\bm\beta}_0})}}} },
\end{align*}
where $\pi _i^*({\tilde{\bm\beta}_0})$ in the first item stands for the initial subsampling strategy which satisfies (H.5).

For the sake of brevity, we begin with  the case with probability $\bm\pi^{\rm mV}$.
Denote the log-likelihood in the first and second steps by
\begin{equation*}
L^{*0}_{\tilde{\bm\beta}_0}(\bm\beta)=\frac{1}{{r_0}}\sum_{i=1}^{{r_0}} \frac{t_i^*(\bm\beta)}{\pi_i^*(\tilde{\bm\beta}_0)},\text{\quad and\quad}L^*_{\tilde{\bm\beta}_0}(\bm\beta)=\frac{1}{{r}}\sum_{i=1}^{{r}} \frac{t_i^*(\bm\beta)}{\pi_i^*(\tilde{\bm\beta}_0)},
\end{equation*}
respectively, {where $\pi_i(\tilde{\bm\beta}_0)=\check{\pi}_i^{\mathrm{opt}}$ in $L^*_{\tilde{\bm\beta}_0}(\bm\beta)$, and it has been calculated in the two-step algorithm in Section 4.} 

 To proof of Theorem~\ref{thm:asy-2step-alg}, we begin with the following Lemma \ref{lem:lem3}.

\begin{lemma}\label{lem:lem3}
If Assumptions~(H.1)--(H.4) holds, then as $n\rightarrow\infty$, conditionally on $\mathcal{F}_n$ in probability,
\begin{align}
  \check{\mathcal{J}}_X^{\tilde{\bm\beta}_0}-\mathcal{J}_X&=O_{P|\mathcal{F}_n}({r}^{-1/2}),\label{eq:4}\\
   \frac{1}{n}\frac{\partial L^*_{\tilde{\bm\beta}_0}(\hat{\bm\beta}_{\rm MLE})}{\partial\bm\beta} &=O_{P|\mathcal{F}_n}({r}^{-1/2}),\label{eq:28}
\end{align}
where {
\begin{eqnarray*}
\check{\mathcal{J}}_X^{\tilde{\bm\beta}_0}=
  -\frac{1}{n}\frac{\partial^2 L^*_{\tilde{\bm\beta}_0}(\hat{\bm\beta}_{\rm MLE})}
  {\partial\bm\beta\partial\bm\beta^T}
  &=&\frac{1}{nr}\sum_{i=1}^r\frac{\ddot{\psi}(u({{\bm\beta}^{T}{\bm x^*}_{i}}))
         \dot{u}({\bm\beta}^{T}{\bm x^*}_{i})\bm x_i^*[\dot{u}(\hat{\bm\beta}_{\rm MLE}^T{\bm x_{i}^*})\bm x_i^*]^T}{{\pi}_i^*(\tilde{\bm\beta}_0)}\\
         &&+\frac{1}{{nr}}\sum\limits_{i = 1}^r {\frac{{\ddot u({\hat{\bm\beta}_{\rm MLE}^T{\bm x_{i}^*}}){\bm x_i^*}{\bm x^*}_i^T[\dot \psi (u({\hat{\bm\beta}_{\rm MLE}^T{\bm x}_{i}^*})) - {y_i^*}]}}{{\pi _i^*(\tilde{\bm\beta}_0)}}}.
\end{eqnarray*}
}
\end{lemma}

\begin{proof}
Using the same arguments in Lemma \ref{lem:lem1}, we have {
\begin{equation}\label{eq:twoterms}
\begin{split}
&  {\rm E}\left(\check{\mathcal{J}}_X^{\tilde{\bm\beta}_0,j_1j_2}-\mathcal{J}_X^{j_1j_2}\Big|\mathcal{F}_n,\tilde{\bm\beta}_0\right)^2 \le \frac{{O_P(1)}}{r}\bigg[ \sum\limits_{i = 1}^n {\frac{{{{\dot u}^2}(\hat{\bm \beta} _{{\rm{MLE}}}^T{\bm x_i}){{({x_{i{j_1}}}{x_{i{j_2}}})}^2}}}{{{n^2}{\pi _i(\tilde{\bm\beta}_0)}}}}\\
&  +  \sum\limits_{i = 1}^n {\frac{{{{\{{{\dot u}^2}(\hat {\bm\beta} _{{\rm{MLE}}}^T{\bm x_i}){x_{i{j_1}}}{x_{i{j_2}}}[\dot \psi (u(\hat {\bm\beta} _{{\rm{MLE}}}^T{\bm x_i})) - {y_i}]\}}^2}}}{{{n^2}{\pi _i(\tilde{\bm\beta}_0)}}}}  \bigg].
  \end{split}
\end{equation}}
Now we substitute expression of ${\pi _i(\tilde{\bm\beta}_0)}$ in the two-step algorithm: $\tilde{\bm\pi}^{\mathrm{mV}}$ and $\tilde{\bm\pi}^{\mathrm{mVc}}$. Here we only give the proof of the case $\tilde{\bm\pi}^{\mathrm{mV}}$, and the proof of the case $\tilde{\bm\pi}^{\mathrm{mVc}}$ is analogous thus we omit it. For the first terms in \eqref{eq:twoterms}, note that ${\sigma _{\max }}(\tilde{\cal J}_X^{ - 1}),{\sigma _{\min }}(\tilde{\cal J}_X^{ - 1})$ are bounded from Lemma \ref{lem:lem1} and (H.4), it implies
\begin{align*}
& \sum\limits_{i = 1}^n {\frac{{{{\dot u}^2}(\hat {\bm\beta} _{{\rm{MLE}}}^T{\bm x_i}){{({x_{i{j_1}}}{x_{i{j_2}}})}^2}}}{{{n^2}{\pi _i(\tilde{\bm\beta}_0)}}}} \\
&\le \sum\limits_{i = 1}^n {\frac{{{{\left\| {{{\dot u}^2}(\hat {\bm\beta} _{{\rm{MLE}}}^T{\bm x_i}){\bm x_i}} \right\|}^2}\sum\limits_{j = 1}^n {\max } (|{y_j} - \dot \psi (u(\tilde{\bm \beta} _0^T{\bm x_j}))|,\delta )\left\| {{\cal J}_X^{ - 1}\dot u(\tilde{\bm\beta} _0^T{x_i}){x_i}} \right\|}}{{{n^2}\max (|{y_j} - \dot \psi (u(\tilde{\bm \beta} _0^T{\bm x_j}))|,\delta )\left\| {{\cal J}_X^{ - 1}\dot u(\tilde{\bm \beta} _0^T{\bm x_i}){\bm x_i}} \right\|}}} \\
&\le \sum\limits_{i = 1}^n {\frac{{{{\left\| {{{\dot u}^2}(\hat {\bm\beta} _{{\rm{MLE}}}^T{\bm x_i}){\bm x_i}} \right\|}^2}\sum\limits_{j = 1}^n {\max } (|{y_j} - \dot \psi (u(\tilde{\bm \beta} _0^T{\bm x_j}))|,\delta ){\sigma _{\max }}({\cal J}_X^{ - 1})\left\| {{{\dot u}^2}(\tilde{\bm \beta} _0^T{\bm x_i}){\bm x_i}} \right\|}}{{{n^2}\delta {\sigma _{\min }}({\cal J}_X^{ - 1})\left\| {{{\dot u}^2}(\tilde{\bm \beta} _0^T{\bm x_i}){\bm x_i}} \right\|}}} \\
&\le \kappa ({\cal J}_X^{ - 1})\sum\limits_{i = 1}^n {\frac{{\| {{{\dot u}^2}(\hat{\bm \beta} _{{\rm{MLE}}}^T{\bm x_i}){\bm x_i}} \|}}{{{n^2}\delta }}}\bigg[ \sum\limits_{j = 1}^n \frac{{|{y_j} - \dot \psi (u(\tilde{\bm \beta} _0^T{\bm x_j}))|\| {{{\dot u}^2}(\tilde {\bm\beta} _0^T{\bm x_i}){\bm x_j}} \|}}{n}\\
  &+ \sum\limits_{j = 1}^n {\frac{{\delta \| {{{\dot u}^2}(\tilde{\bm \beta} _0^T{\bm x_i}){\bm x_j}} \|}}{n}}   \bigg]\\
& \le \kappa ({\cal J}_X^{ - 1})\sum\limits_{i = 1}^n {\frac{{\| {{{\dot u}^2}(\hat{\bm \beta} _{{\rm{MLE}}}^T{x_i}){x_i}} \|}}{{n\delta }}} \bigg[ \sqrt {\sum\limits_{j = 1}^n {\frac{{|{y_j} - \dot \psi (u(\tilde{\bm \beta} _0^T{\bm x_j})){|^2}}}{n}} } \sqrt {\sum\limits_{j = 1}^n {\frac{{{{\| {{{\dot u}^2}(\tilde{\bm \beta} _0^T{\bm x_i}){\bm x_j}} \|}^2}}}{n}} }\\
&  + \sum\limits_{j = 1}^n {\frac{{\delta \| {{{\dot u}^2}( \tilde{\bm\beta} _0^T{\bm x_i}){\bm x_j}} \|}}{n}}  \bigg]\\
& \le O_P(1)\kappa ({\cal J}_X^{ - 1})\sum\limits_{i = 1}^n {\frac{{\| {{\bm x_i}} \|}}{{n\delta }}} \left[ {\sqrt {\sum\limits_{j = 1}^n {\frac{{|{y_j} - \dot \psi (u(\tilde{\bm \beta} _0^T{\bm x_j})){|^2}}}{n}} } \sqrt {\sum\limits_{j = 1}^n {\frac{{{{\| {{\bm x_j}} \|}^2}}}{n}} }  + \sum\limits_{j = 1}^n {\frac{{\delta \| {{\bm x_j}} \|}}{n}} } \right]\\
& = {O_P}(1).
\end{align*}
where the last equality is from (H.3) and (H.5).

For the second terms in \eqref{eq:twoterms},  we have
\begin{align*}
& \sum\limits_{i = 1}^n {\frac{{{{({{\dot u}^2}(\hat{\bm \beta} _{{\rm{MLE}}}^T{\bm x_i}){x_{i{j_1}}}{x_{i{j_2}}}[\dot \psi (u(\hat{\bm \beta} _{{\rm{MLE}}}^T{\bm x_i})) - {y_i}])}^2}}}{{{n^2}{\pi _i(\tilde{\bm\beta}_0)}}}} \\
& \le \sum\limits_{i = 1}^n {\frac{{{{\left\| {{{\dot u}^2}(\hat{\bm \beta} _{{\rm{MLE}}}^T{\bm x_i}){\bm x_i}} \right\|}^2}{{\left| {\dot \psi (u(\hat{\bm \beta} _{{\rm{MLE}}}^T{\bm x_i})) - {y_i}} \right|}^2}}}{{{n^2}\delta \left\| {{\cal J}_X^{ - 1}\dot u(\tilde{\bm \beta} _0^T{\bm x_i}){\bm x_i}} \right\|}}}\\
 &\times \sum\limits_{j = 1}^n {(|{y_j} - \dot \psi (u(\tilde{\bm \beta} _0^T{\bm x_j}))|{\rm{ + }}\delta )} \left\| {{\cal J}_X^{ - 1}\dot u(\tilde{\bm \beta} _0^T{\bm x_j}){\bm x_j}} \right\|\\
& \le \kappa ({\cal J}_X^{ - 1})\sum\limits_{i = 1}^n \bigg[{\frac{{\left\| {{{\dot u}^2}(\hat{\bm \beta} _{{\rm{MLE}}}^T{\bm x_i}){\bm x_i}} \right\|{{\left| {\dot \psi (u(\hat{\bm \beta} _{{\rm{MLE}}}^T{\bm x_i})) - {y_i}} \right|}^2}}}{{n\delta }}}\\
 &\times \frac{{\sum\limits_{j = 1}^n {(|{y_j} - \dot \psi (u(\tilde{\bm \beta} _0^T{\bm x_j}))|{\rm{ + }}\delta )} \left\| {\dot u(\tilde{\bm \beta} _0^T{\bm x_j}){\bm x_j}} \right\|}}{n}\bigg]\\
& = \kappa ({\cal J}_X^{ - 1})\sum\limits_{i = 1}^n {\frac{{\left\| {{{\dot u}^2}(\hat{\bm \beta} _{{\rm{MLE}}}^T{\bm x_i}){\bm x_i}} \right\|{{\left| {\dot \psi (u(\hat{\bm \beta} _{{\rm{MLE}}}^T{\bm x_i})) - {y_i}} \right|}^2}}}{{n\delta }}} {O_P}(1)\\
& \le \frac{{\kappa ({\cal J}_X^{ - 1})}}{\delta }\sqrt {\sum\limits_{i = 1}^n {\frac{{{{\left\| {{{\dot u}^2}(\hat{\bm \beta} _{{\rm{MLE}}}^T{\bm x_i}){\bm x_i}} \right\|}^2}}}{n}} } \sqrt {\sum\limits_{i = 1}^n {\frac{{{{\left| {\dot \psi (u(\hat{\bm \beta} _{{\rm{MLE}}}^T{\bm x_i})) - {y_i}} \right|}^4}}}{n}} } {O_P}(1)\\
& = {O_P}(1)
\end{align*}
where the second last and last equality is from (H.3) and (H.5).

Direct calculation yields
\begin{equation*}
E{\left( { \check{\mathcal{J}}_X^{\tilde{\bm\beta}_0,j_1j_2}- {\cal J}_X^{{j_1}{j_2}}|{{\cal F}_n}} \right)^2} = {E_{\tilde{\bm \beta} _0^{}}}E{\left( { \check{\mathcal{J}}_X^{\tilde{\bm\beta}_0,j_1j_2}- {\cal J}_X^{{j_1}{j_2}}|{{\cal F}_n},\tilde{\bm \beta} _0^{}} \right)^2}=O_P(r^{-1})
\end{equation*}
where $E_{\tilde{\bm\beta}_0}$ means that the expectation is taken with respect to the distribution of $\tilde{\bm\beta}_0$ given $\mathcal{F}_n$. 

On the other hand, following the same arguments in Lemma \ref{lem:lem1}, we can have
\[
E\left\{\frac{L^*_{\tilde{\bm\beta}_0}(\bm\beta)}{n}-\frac{L(\bm\beta)}{n}\bigg|\mathcal{F}_n,\tilde{\bm\beta}_0\right\}^2
  =O_P(r^{-1}).
\]

Then $E\left\{n^{-1}{L^*_{\tilde{\bm\beta}_0}(\bm\beta)}-n^{-1}{L(\bm\beta)}|\mathcal{F}_n\right\}^2=O_P(r^{-1})$.

Similarly, we can see that $\text{Var}(n^{-1}{\partial L^*_{\tilde{\bm\beta}_0}(\hat{\bm\beta}_{\rm MLE})}/{\partial\bm\beta})=O_P(r^{-1})$.
 Thus, the desired result holds.
\end{proof}

Now we prove Theorem~\ref{thm:asy-2step-alg}.
\begin{proof}
Using the same arguments in Lemma \ref{lem:lem3} 
, we have
{
\[\begin{split}
&E\left\{\frac{L^{\rm{two-step}}_{\tilde{\bm\beta}_0}(\bm\beta)}{n}-\frac{L(\bm\beta)}{n}\bigg|\mathcal{F}_n\right\}^2\le 2(\frac{{ {{r_0}} }}{{r + {r_0}}})^2E\left\{\frac{L^{*0}_{\tilde{\bm\beta}_0}(\bm\beta)}{n}-\frac{L(\bm\beta)}{n}\bigg|\mathcal{F}_n\right\}^2\\
&+2(\frac{{ {{r}} }}{{r + {r_0}}})^2E\left\{\frac{L^{*}_{\tilde{\bm\beta}_0}(\bm\beta)}{n}-\frac{L(\bm\beta)}{n}\bigg|\mathcal{F}_n\right\}^2
  =O_P(r^{-1}).
  \end{split}
\]
}
Therefore $E\{n^{-1}{L^{\rm{two-step}}_{\tilde{\bm\beta}_0}(\bm\beta)}-n^{-1}{L(\bm\beta)}|\mathcal{F}_n\}^2\rightarrow 0$ as $r_0/r\rightarrow 0,r\rightarrow\infty$ and $n^{-1}{L^{\rm{two-step}}_{\tilde{\bm\beta}_0}(\bm\beta)}-n^{-1}{L(\bm\beta)}\rightarrow 0$ in conditional probability given $\mathcal{F}_n$. Also note that the parameter space is compact and $\hat{\bm\beta}_{\rm MLE}$ is the unique global maximum of the continuous convex function $L(\bm\beta)$. Thus, from Theorem 5.9 and its remark of \cite{Vaart2000Asymptotic}, we have
\[
  \|{\breve{\bm\beta}}-\hat{\bm\beta}_{\rm MLE}\|=o_{P|\mathcal{F}_n}(1).
\]
 Using Taylor's theorem,
 \begin{align*}
0&=\frac{\dot{L}^{\rm{two-step}}_{\tilde{\bm\beta_0},j}(\breve{\bm\beta})}{n}=\frac{{ {{r_0}} }}{{r + {r_0}}}\frac{\dot{L}^{*0}_{\tilde{\bm\beta_0},j}(\breve{\bm\beta})}{n}+\frac{{ {{r}} }}{{r + {r_0}}}\frac{\dot{L}^{*}_{\tilde{\bm\beta_0},j}(\breve{\bm\beta})}{n}\\
&=\frac{{ {{r}} }}{{r + {r_0}}}\left\{\frac{\dot{L}^*_{\tilde{\bm\beta_0},j}(\hat{\bm\beta}_{\rm MLE})}{n}
   +\frac{1}{n} \frac{\partial\dot{L}^*_{\tilde{\bm\beta_0},j}(\hat{\bm\beta}_{\rm MLE})}{\partial \bm\beta^T}(\breve{\bm\beta}-\hat{\bm\beta}_{\rm MLE})
   +\frac{1}{n} R_{\tilde{\bm\beta_0},j}\right\}\\
   &+\frac{{ {{r_0}} }}{{r + {r_0}}}\frac{\dot{L}^{*0}_{\tilde{\bm\beta_0},j}(\breve{\bm\beta})}{n},
\end{align*}
where $\dot{L}^*_{\tilde{\bm\beta_0},j}({\bm\beta})$ is the partial derivative of $L^*_{\tilde{\bm\beta_0},j}({\bm\beta})$ with respect to $\beta_j$.

By similar argument in the Proof of Theorem~\ref{thm:as-general-alg}, the Lagrange remainder have the rate
{\normalsize \begin{align*}
  \frac{1}{n}R_{\tilde{\bm\beta_0},j}
  &:=\frac{1}{n}(\breve{\bm\beta}-\hat{\bm\beta}_{\rm MLE})^T
   \int_0^1\int_0^1\frac{\partial^2\dot{L}^*_j\{\hat{\bm\beta}_{\rm MLE}
   +uv(\breve{\bm\beta}-\hat{\bm\beta}_{\rm MLE})\}}{\partial \bm\beta\partial
   \bm\beta^T}v du dv\
(\breve{\bm\beta}-\hat{\bm\beta}_{\rm MLE})\\
&=O_{P|\mathcal{F}_n}(\|\breve{\bm\beta}-\hat{\bm\beta}_{\rm MLE}\|^2).
\end{align*}}
Note that the subsampling probabilities in the first stage satisfies the condition (H.1)-(H.7), thus from Theorem \ref{thm:CLT}, it holds that
\begin{align*}
\frac{\dot{L}^{*0}_{\tilde{\bm\beta_0},j}(\breve{\bm\beta})}{n}
=&\frac{\dot{L}^{*0}_{\tilde{\bm\beta_0},j}(\hat{\bm\beta}_{\rm MLE})}{n}
   +\frac{1}{n} \frac{\partial\dot{L}^{*0}_{\tilde{\bm\beta_0},j}(\hat{\bm\beta}_{\rm MLE})}{\partial \bm\beta^T}(\breve{\bm\beta}-\hat{\bm\beta}_{\rm MLE})
   +O_{P|\mathcal{F}_n}(\|\breve{\bm\beta}-\hat{\bm\beta}_{\rm MLE}\|^2).
\end{align*}
Therefore
\begin{align*}
\frac{1}{n}\frac{\partial{L}^{*0}_{\tilde{\bm\beta_0}}(\breve{\bm\beta})}{\partial\bm\beta}
=&\frac{1}{n}\frac{\partial{L}^{*0}_{\tilde{\bm\beta_0}}(\hat{\bm\beta}_{\rm MLE})}{\partial\bm\beta}
   +\frac{1}{n} \frac{\partial^2{L}^{*0}_{\tilde{\bm\beta_0}}(\hat{\bm\beta}_{\rm MLE})}{\partial\bm\beta\partial \bm\beta^T}(\breve{\bm\beta}-\hat{\bm\beta}_{\rm MLE})
   +O_{P|\mathcal{F}_n}(\|\breve{\bm\beta}-\hat{\bm\beta}_{\rm MLE}\|^2).
\end{align*}
From Lemma \ref{lem:lem1}, it is clear to see that
\[ \frac{1}{n}\frac{\partial L^{*0}_{\tilde{\bm\beta_0}}(\hat{\bm\beta}_{\rm MLE})}{\partial\bm\beta}
   =O_{P|\mathcal{F}_n}(r_0^{-1/2})\]
   for the first step, since $\pi_i^*$ is prespecified and satisfied (H.6), and 
\[\frac{r_0}{r+r_0}\frac{1}{n}\frac{\partial L^{*0}_{\tilde{\bm\beta_0}}(\hat{\bm\beta}_{\rm MLE})}{\partial\bm\beta}
=\frac{r_0}{r}O_{P|\mathcal{F}_n}(r_0^{-1/2})=o_{P|\mathcal{F}_n}(r^{-1/2}),\]
since $r_0/r\to 0$. {
  This step holds due to the fact that $\frac{\sqrt{r_0}}{r}O_{P|\mathcal{F}_n}(1)=\frac{\sqrt{r_0}}{\sqrt{r}}O_{P|\mathcal{F}_n}(1)O_{P|\mathcal{F}_n}(r^{-1/2})=o(1)O_{P|\mathcal{F}_n}(r^{-1/2})$.}
Let
\[\breve{\mathcal{J}}_X:=\frac{r}{r+r_0}\frac{1}{n} \frac{\partial^2{L}^{*}_{\tilde{\bm\beta_0}}(\hat{\bm\beta}_{\rm MLE})}{\partial\bm\beta\partial \bm\beta^T}+\frac{r_0}{r+r_0}\frac{1}{n} \frac{\partial^2{L}^{*0}_{\tilde{\bm\beta_0}}(\hat{\bm\beta}_{\rm MLE})}{\partial\bm\beta\partial \bm\beta^T}.\]
Combine Lemmas \ref{lem:lem1} and \ref{lem:lem3}, we have
\[\begin{split}
\breve{\mathcal{J}}_X-\mathcal{J}_X
=&\frac{r}{r+r_0}\left(\check{\mathcal{J}}_X^{\tilde{\bm\beta}_0}-\mathcal{J}_X\right)+\frac{r_0}{r+r_0}\left(\frac{1}{n} \frac{\partial^2{L}^{*0}_{\tilde{\bm\beta_0}}(\hat{\bm\beta}_{\rm MLE})}{\partial\bm\beta\partial \bm\beta^T}-\mathcal{J}_X\right)\\
=&\frac{r}{r+r_0}O_{P|\mathcal{F}_n}(r^{-1/2})+\frac{r_0}{r+r_0}O_{P|\mathcal{F}_n}(r_0^{-1/2})=O_{P|\mathcal{F}_n}(r^{-1/2}),
\end{split}\]
since $r_0/r\rightarrow 0$.

Hence,
{\begin{align*}\label{eq:13}
  \breve{\bm\beta}-\hat{\bm\beta}_{\rm MLE}
  =& -(\breve{\mathcal{J}}_X)^{-1}\left\{\frac{1}{n}\dot {L}^*_{\tilde{\bm\beta}_0}(\hat{\bm\beta}_{\rm MLE})
    +O_{P|\mathcal{F}_n}(\|\breve{\bm\beta}-\hat{\bm\beta}_{\rm MLE}\|^2)+o_{P|\mathcal{F}_n}(r^{-1/2})\right\},\\
  =&O_{P|\mathcal{F}_n}(r^{-1/2})+o_{P|\mathcal{F}_n}(\|\breve{\bm\beta}-\hat{\bm\beta}_{\rm MLE}\|)
\end{align*}}
as $r_0/r\rightarrow0$, by noting $(\breve{\mathcal{J}}_X)^{-1} =O_{P|\mathcal{F}_n}(1)$ from (H.5).
Therefore, the desired result follows by noting
\[  \breve{\bm\beta}-\hat{\bm\beta}_{\rm MLE}=O_{P|\mathcal{F}_n}(r^{-1/2}).\]
\end{proof}

\subsection{Proof of {Theorem~\ref{thm:CLT-2step}}}
\begin{proof}
For the sake of brevity, we begin with  the case with probability $\tilde{\bm\pi}^{\rm mVc}$.
Denote
\begin{equation}\label{eq:56}
  \frac{\dot L^*_{\tilde{\bm\beta}_0}(\hat{\bm\beta}_{\rm MLE})}{n}
  =\frac{1}{{r}}\sum_{i=1}^{{r}}
  \frac{\{y^*_i-\dot{\psi}(u(\hat{\bm\beta}_{\rm MLE}^T\bm x_i^*))\}\dot{u}(\hat{\bm\beta}_{\rm MLE}^T\bm x_i^*)\bm x^*_i}
  {n{\pi}^*_i(\tilde{\bm\beta}_0)}
  =:\frac{1}{{r}}\sum_{i=1}^{{r}}\eta_i^{\tilde{\bm\beta}_0}.
\end{equation}
It can be shown that given $\mathcal{F}_n$ and ${\tilde{\bm\beta}_0}$, $\eta_1^{\tilde{\bm\beta}_0}, \ldots, \eta_{{r}}^{\tilde{\bm\beta}_0}$ are i.i.d
random variables with zero mean and variance
\begin{align*}
  &\text{var}(\eta_i^{\tilde{\bm\beta}_0}|\mathcal{F}_n,\tilde{\bm\beta}_0)
    ={r}V_c^{\tilde{\bm\beta}_0}
    =\frac{1}{n^2}\sum_{i=1}^n \pi_i(\tilde{\bm\beta}_0)\frac{\{y^*_i-\dot{\psi}(u(\hat{\bm\beta}_{\rm MLE}^T\bm x_i^*))\}^2\dot{u}^2(\hat{\bm\beta}_{\rm MLE}^T\bm x_i^*)
    \bm x_i\bm x_i^T}{{\pi}^2_i(\tilde{\bm\beta}_0)}.
\end{align*}
Meanwhile, for every $\varepsilon>0$,
\begin{align*}
   &\sum_{i=1}^{{r}} E\{\|{r}^{-1/2}\eta_i^{\tilde{\bm\beta}_0}\|^2
    I(\|\eta_i^{\tilde{\bm\beta}_0}\|>{r}^{1/2}\varepsilon)
    |\mathcal{F}_n,\tilde{\bm\beta}_0\}\\
   \le &\frac{1}{{r}^{3/2}\varepsilon}
    \sum_{i=1}^{{r}} E\{\|\eta_i^{\tilde{\bm\beta}_0}\|^{3}
    I(\|\eta_i^{\tilde{\bm\beta}_0}\|>{r}^{1/2}\varepsilon)|
    \mathcal{F}_n,\tilde{\bm\beta}_0\} \\
  \le &\frac{1}{{r}^{3/2}\varepsilon}
    \sum_{i=1}^{{r}} E(\|\eta_i^{\tilde{\bm\beta}_0}\|^{3}|
    \mathcal{F}_n,\tilde{\bm\beta}_0)\\
 \le &\frac{1}{{r}^{1/2}}\frac{1}{n^{3}}
    \sum_{i=1}^n\frac{\{|y_i-\dot{\psi}(u(\hat{\bm\beta}_{\rm MLE}^T\bm x_i))|\}^{3}
    \|\dot{u}(\hat{\bm\beta}_{\rm MLE}^T\bm x_i)\bm x_i\|^{3}}{\pi_i^{2}(\tilde{\bm\beta}_0)}\\
  \le&\frac{1}{{r}^{1/2}}\frac{1}{n}
    \sum_{i=1}^n\frac{\{|y_i-\dot{\psi}(u(\hat{\bm\beta}_{\rm MLE}^T\bm x_i))|\}^{2}
    \|\dot{u}(\hat{\bm\beta}_{\rm MLE}^T\bm x_i)\bm x_i\|}{\delta}\\
    &\times \left(\frac{1}{n}\sum_{j=1}^n\max(|y_j-\dot{\psi}(u(\tilde{\bm\beta}_0^T\bm x_j))|,\delta)\|\dot{u}(\tilde{\bm\beta}_0^T\bm x_j)\bm x_j\|\right)^2\\
  \le&\frac{1}{{r}^{1/2}}\frac{1}{n}
    \sum_{i=1}^n\frac{\{|y_i-\dot{\psi}(u(\hat{\bm\beta}_{\rm MLE}^T\bm x_i))|\}^{2}
    \|\dot{u}(\hat{\bm\beta}_{\rm MLE}^T\bm x_i)\bm x_i\|}{\delta}\\
    &\times \left(\frac{1}{n}\sum_{j=1}^n(|y_j-\dot{\psi}(u(\tilde{\bm\beta}_0^T\bm x_j))|+\delta)\|\dot{u}(\tilde{\bm\beta}_0^T\bm x_j)\bm x_j\|\right)^2.
\end{align*}

From (H.1), (H.3) and (H.5),
\begin{align*}
&\frac{1}{n}
    \sum_{i=1}^n\frac{\{|y_i-\dot{\psi}(u(\hat{\bm\beta}_{\rm MLE}^T\bm x_i))|\}^{2}
    \|\dot{u}(\hat{\bm\beta}_{\rm MLE}^T\bm x_i)\bm x_i\|}{\delta}\\
\le& \delta^{-1}\left(\frac{1}{n}
    \sum_{i=1}^n\{|y_i-\dot{\psi}(u(\hat{\bm\beta}_{\rm MLE}^T\bm x_i))|\}^{4}\right)^{1/2}\left(\frac{1}{n}
    \sum_{i=1}^n\|\dot{u}(\hat{\bm\beta}_{\rm MLE}^T\bm x_i)\bm x_i\|^2\right)^{1/2}\\
    =&O_P(1),
\end{align*}
by Holder's inequality.

Similarly, it can be shown
\[\frac{1}{n}\sum_{j=1}^n(|y_j-\dot{\psi}(u(\tilde{\bm\beta}_0^T\bm x_j))|+\delta)\|\dot{u}(\tilde{\bm\beta}_0^T\bm x_j)\bm x_j\|=O_P(1),\]
from (H.1), (H.3) and (H.5).

Hence
\[\sum_{i=1}^{{r}} E\{\|{r}^{-1/2}\eta_i^{\tilde{\bm\beta}_0}\|^2
    I(\|\eta_i^{\tilde{\bm\beta}_0}\|>{r}^{1/2}\varepsilon)
    |\mathcal{F}_n,\tilde{\bm\beta}_0\}=o_{P|\mathcal{F}_n}(1).\]
This shows that the Lindeberg-Feller conditions are satisfied in probability.
By the Lindeberg-Feller central limit theorem \citep[Proposition 2.27 of][]{Vaart2000Asymptotic}, conditionally on $\mathcal{F}_n$ and $\tilde{\bm\beta}_0$,
\begin{equation*}
  \frac{1}{n}(V_c^{\tilde{\bm\beta}_0})^{-1/2}\dot L^*(\hat{\bm\beta}_{\rm MLE})=
  \frac{1}{{r}^{1/2}}\{\text{var}(\eta_i|\mathcal{F}_n,\tilde{\bm\beta}_0)\}^{-1/2}\sum_{i=1}^{{r}}\eta_i
  \rightarrow N(0,I),
\end{equation*}
in distribution.

The distance between $V_c^{\tilde{\bm\beta}_0}$ and $V_c$ is
{\normalsize{
\begin{align*}
&~~~~\|V_c-V_c^{\tilde{\bm\beta}_0}\|\\&\le \frac{1}{r}\sum_{i=1}^n\left\|\frac{1}{\pi^{\rm mVc}_i}-\frac{1}{\pi_i(\tilde{\bm\beta}_0)}\right\|\frac{\{y_i-\dot{\psi}(u(\hat{\bm\beta}_{\rm MLE}^T\bm x_i))\}^2\dot{u}^2(\hat{\bm\beta}_{\rm MLE}^T\bm x_i)\|\bm x_i\|^2}{n}\\
&= \frac{1}{r}\sum_{i=1}^n\left\|1-\frac{\pi^{\rm mVc}_i}{\pi_i(\tilde{\bm\beta}_0)}\right\|\frac{\{y_i-\dot{\psi}(u(\hat{\bm\beta}_{\rm MLE}^T\bm x_i))\}^2\dot{u}^2(\hat{\bm\beta}_{\rm MLE}^T\bm x_i)\|\bm x_i\|^2}{n\pi^{\rm mVc}_i}\\
&\le \frac{1}{r}\sum_{i=1}^n\left\|1-\frac{\pi^{\rm mVc}_i}{\pi_i(\tilde{\bm\beta}_0)}\right\|\frac{|y_i-\dot{\psi}(u(\hat{\bm\beta}_{\rm MLE}^T\bm x_i))|\|\dot{u}(\hat{\bm\beta}_{\rm MLE}^T\bm x_i)\bm x_i\|}{n}\\
&\le\frac{1}{r}\left(\frac{1}{n}\sum_{i=1}^n\left\|1-\frac{\pi^{\rm mVc}_i}{\pi_i(\tilde{\bm\beta}_0)}\right\|^2\right)^{1/2}\left(\sum_{i=1}^n\frac{\{y_i-\dot{\psi}(u(\hat{\bm\beta}_{\rm MLE}^T\bm x_i))\}^2\dot{u}^2(\hat{\bm\beta}_{\rm MLE}^T\bm x_i)\|\bm x_i\|^2}{n}\right)^{1/2}\\
 & =o_{P|\mathcal{F}_n}({r}^{-1}),
  \end{align*}}}
  where the last equation follows from  the facts that
  \[ \left\|1-\frac{\pi^{\rm mVc}_i}{\pi_i(\tilde{\bm\beta}_0)}\right\|^2\le\frac{(\pi^{\rm mVc}_i-\pi_i(\tilde{\bm\beta}_0))^2}{\delta^2}=o_P(1),\]
  and
  \[\sum_{i=1}^n\frac{\{y_i-\dot{\psi}(u(\hat{\bm\beta}_{\rm MLE}^T\bm x_i))\}^2\dot{u}^2(\hat{\bm\beta}_{\rm MLE}^T\bm x_i)\|\bm x_i\|^2}{n}=O_P(1).\]
Here the first equality in the fact above holds for the continues mapping theorem and the second equality holds from (H.1), (H.3), (H.5) and Cauchy's inequality.

 Utilizing the facts
\begin{align*}
  (\check{\mathcal{J}}_X^{\tilde{\bm\beta}_0})^{-1}-\breve{\mathcal{J}}_X^{-1}
  &=-\breve{\mathcal{J}}_X^{-1}(\check{\mathcal{J}}_X^{\tilde{\bm\beta}_0}-\mathcal{J}_X+\mathcal{J}_X-\breve{\mathcal{J}}_X)
    (\check{\mathcal{J}}_X^{\tilde{\bm\beta}_0})^{-1}
  =O_{P|\mathcal{F}_n}({r}^{-1/2}),
\end{align*}
we have $(\check{\mathcal{J}}_X^{\tilde{\bm\beta}_0})^{-1}-(\breve{\mathcal{J}}_X)^{-1}=O_{P|\mathcal{F}_n}(r^{-1/2})$ from Lemma~\ref{lem:lem3} and~Theorem \ref{thm:asy-2step-alg}.
Thus
\begin{equation}\label{eq:57}
  {\breve{\bm\beta}}-\hat{\bm\beta}_{\rm MLE}{ =}
  -\frac{1}{n}(\check{\mathcal{J}}_X^{\tilde{\bm\beta}_0})^{-1}
  \dot L^*_{\tilde{\bm\beta}_0}(\hat{\bm\beta}_{\rm MLE})+O_{P|\mathcal{F}_n}({r}^{-1})
\end{equation}
Based on Equation (\ref{eq:4}), we further have
\begin{align*}
  (\check{\mathcal{J}}_X^{\tilde{\bm\beta}_0})^{-1}-\mathcal{J}_X^{-1}
  &=-\mathcal{J}_X^{-1}(\check{\mathcal{J}}_X^{\tilde{\bm\beta}_0}-\mathcal{J}_X)
    (\check{\mathcal{J}}_X^{\tilde{\bm\beta}_0})^{-1}
  =O_{P|\mathcal{F}_n}({r}^{-1/2}).
\end{align*}
Therefore
{{\begin{align*}
  &V^{-1/2}({\breve{\bm\beta}}-\hat{\bm\beta}_{\rm MLE})\\
  =&-V^{-1/2}{1\over n}(\check{\mathcal{J}}_X^{\tilde{\bm\beta}_0})^{-1}
    \dot L^*(\hat{\bm\beta}_{\rm MLE})
    +O_{P|\mathcal{F}_n}({r}^{-1/2})\\
  =&-V^{-1/2}\mathcal{J}_X^{-1}{1\over n}\dot L^*(\hat{\bm\beta}_{\rm MLE})
    -V^{-1/2}\{(\check{\mathcal{J}}_X^{\tilde{\bm\beta}_0})^{-1}-\mathcal{J}_X^{-1}\}
    {1\over n}\dot L^*(\hat{\bm\beta}_{\rm MLE})
    +O_{P|\mathcal{F}_n}({r}^{-1/2})\\
  =&-V^{-1/2}\mathcal{J}_X^{-1}(V_c^{\tilde{\bm\beta}_0})^{1/2}
    (V_c^{\tilde{\bm\beta}_0})^{-1/2}{1\over n}\dot L^*(\hat{\bm\beta}_{\rm MLE})
    +O_{P|\mathcal{F}_n}({r}^{-1/2}).
\end{align*}}}
 It can be shown that
 \begin{align*}
 &V^{-1/2}\mathcal{J}_X^{-1}(V_c^{\tilde{\bm\beta}_0})^{1/2}(V^{-1/2}\mathcal{J}_X^{-1}(V_c^{\tilde{\bm\beta}_0})^{1/2})^T\\
=&V^{-1/2}\mathcal{J}_X^{-1}(V_c^{\tilde{\bm\beta}_0})\mathcal{J}_X^{-1}V^{-1/2}\\
=&V^{-1/2}\mathcal{J}_X^{-1}(V_c)\mathcal{J}_X^{-1}V^{-1/2}+o_{P|\mathcal{F}_n}(r^{-1/2}) \\
=&I+o_{P|\mathcal{F}_n}(r^{-1/2}).
 \end{align*}
The desired result follows by Slutsky's theorem.

 As for the case $\pi_i(\tilde{\bm\beta}_0)=\tilde{\pi}_i^{\mathrm{mV}}$ in $L^*_{\tilde{\bm\beta}_0}(\bm\beta)$,  $\tilde{\pi}_i^{\mathrm{mV}}$ has the same expression as $\pi_i^{\mathrm{mV}}$ except that $\hat{\bm\beta}_{\rm MLE}$,  is replaced by $\tilde{\bm\beta}_0$. 
 Also note that
$\pi_i(\tilde{\bm\beta}_0)\ge \kappa({\tilde{\mathcal{J}}}_X)^{-1}\tilde{\pi}_i^{\mathrm{mVc}}$. The rest of the proof is the same as that of $\tilde{\pi}_i^{\mathrm{mVc}}$ with  minor modifications.
\end{proof}

\section{Additional Simulation Results}

In terms of the allocation between $r_0$ and $r$, it is clear to see  that the
two-step approach works the best when $r_0/r$ is around 0.2 from the simulation result in Figure \ref{fig:4} of the main text. To well demonstrate our methods, we also  compare different $r_0+ r$ with fixed $r_0/r = 0.2$.

In each of the settings described in Section \ref{sec:Poisson-sim} of the article, we reevaluated the performance of $\tilde{\pi}_i^{\mathrm{mV}}$ and $\tilde{\pi}_i^{\mathrm{mVc}}$ when $r_0/r$ is fixed at 0.2.
For comparison, the  uniform subsampling, leverage subsampling and adjust leverage subsampling methods are also considered.
Inline with the setting in the main text,  the sample size $r_0+r$ is selected as 500, 700, 900, 1200, 1400 and 1600, respectively.
We report the results for the Poisson regression and negative binomial regression in Figures \ref{fig:fixratio-poisson} and  \ref{fig:fixratio-nb}, respectively.

\begin{figure}[!htp]
  \centering
  \begin{subfigure}{0.49\textwidth}
    \includegraphics[width=\textwidth]{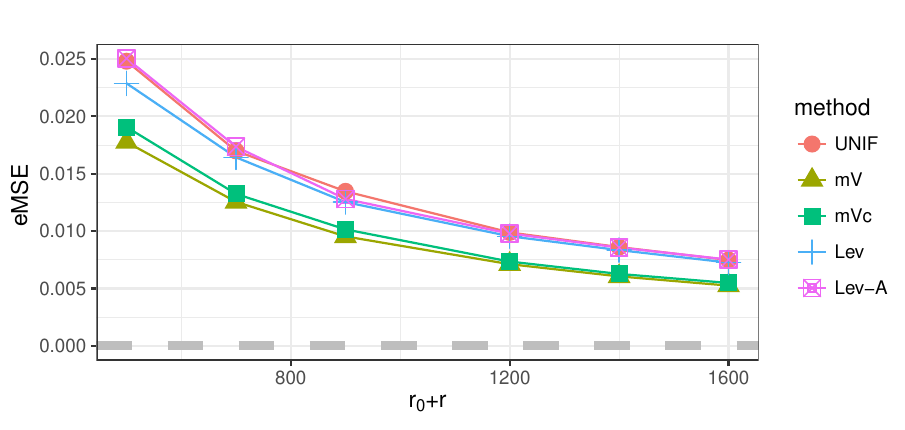}\\[-1cm]
    \caption{Case 1 }
  \end{subfigure}
  \begin{subfigure}{0.49\textwidth}
    \includegraphics[width=\textwidth]{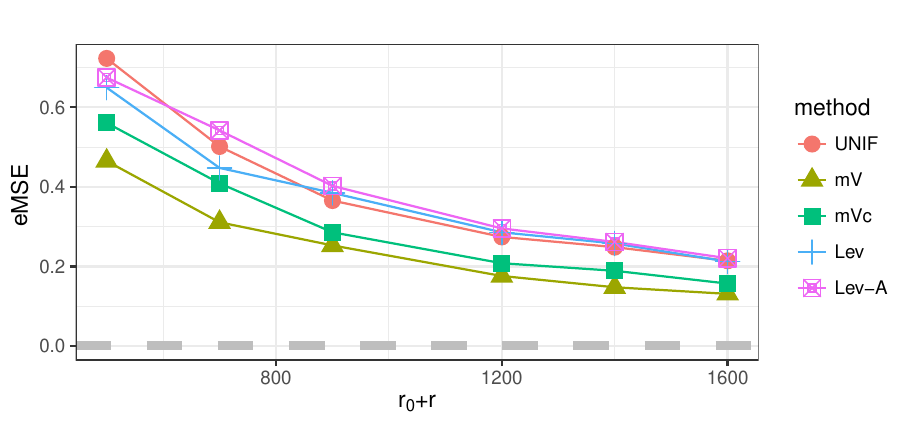}\\[-1cm]
    \caption{Case 2}
  \end{subfigure}\\[5mm]
  \begin{subfigure}{0.49\textwidth}
    \includegraphics[width=\textwidth]{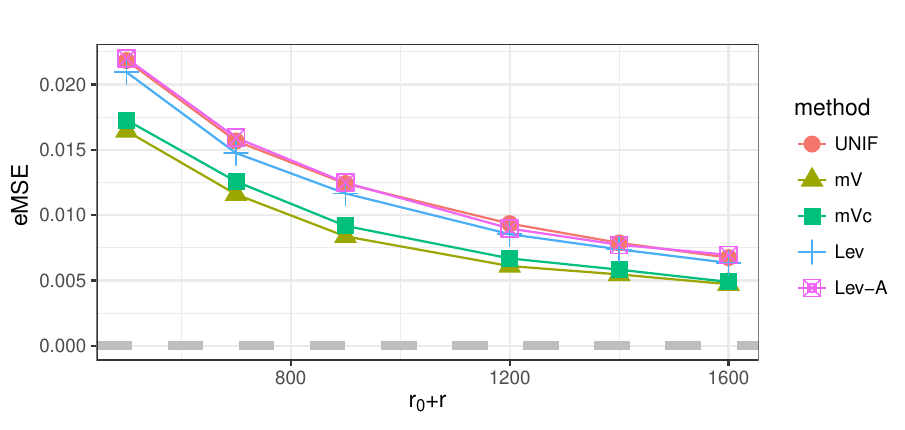}\\[-1cm]
    \caption{Case 3}
  \end{subfigure}
  \begin{subfigure}{0.49\textwidth}
    \includegraphics[width=\textwidth]{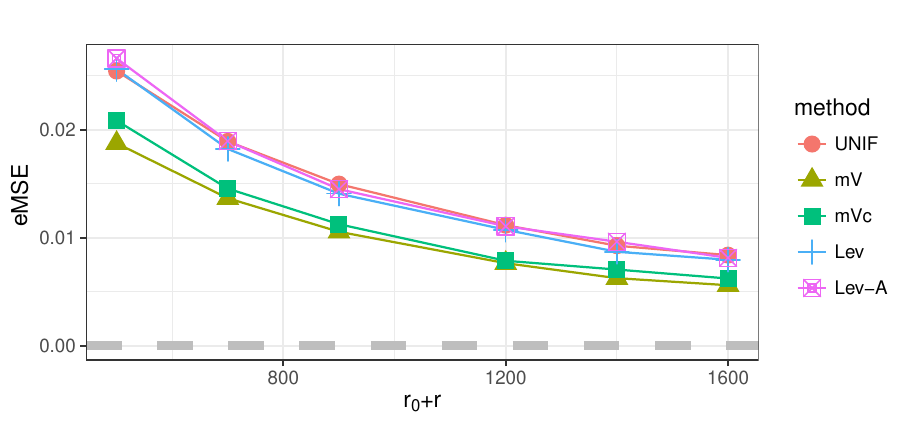}\\[-1cm]
    \caption{Case 4}
  \end{subfigure}\\[5mm]
  \caption{The eMSEs for Poisson regression with different  subsample size ${r_0+r}$ and a fixed  $r_0/r=0.2$.
     The different distributions of covariates are listed in the beginning of Section 5.}
  \label{fig:fixratio-poisson}
\end{figure}

\begin{figure}[!htp]
  \centering
  \begin{subfigure}{0.49\textwidth}
    \includegraphics[width=\textwidth]{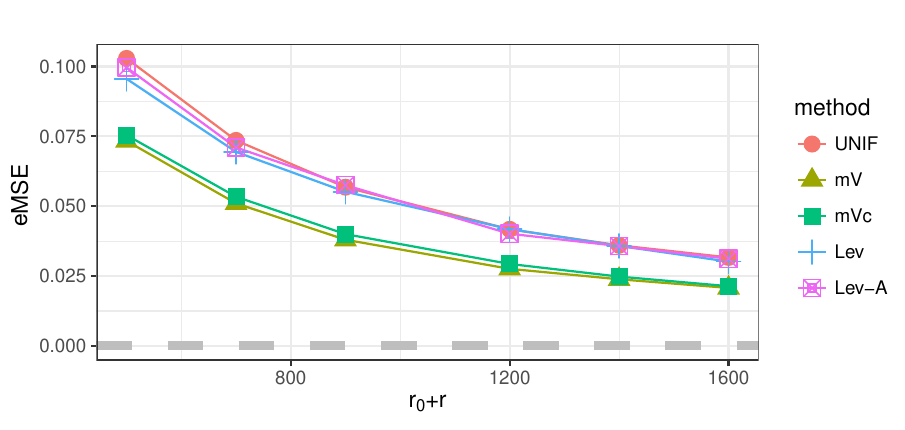}\\[-1cm]
    \caption{Case 1}
  \end{subfigure}
  \begin{subfigure}{0.49\textwidth}
    \includegraphics[width=\textwidth]{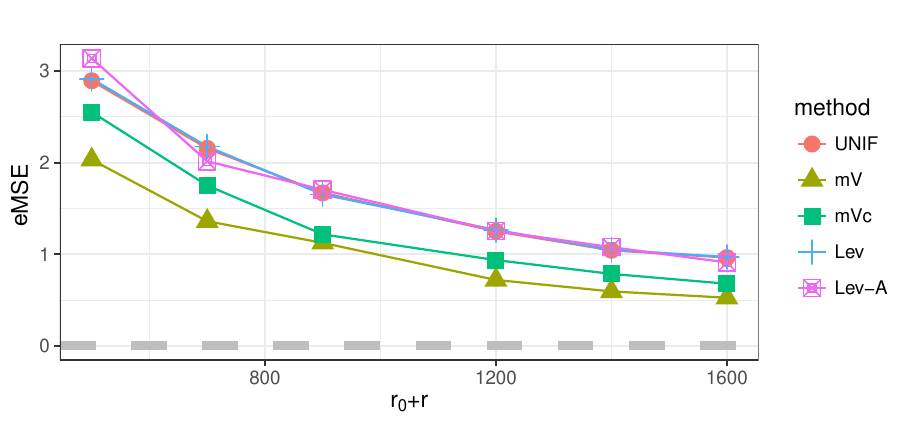}\\[-1cm]
    \caption{Case 2}
  \end{subfigure}\\[5mm]
  \begin{subfigure}{0.49\textwidth}
    \includegraphics[width=\textwidth]{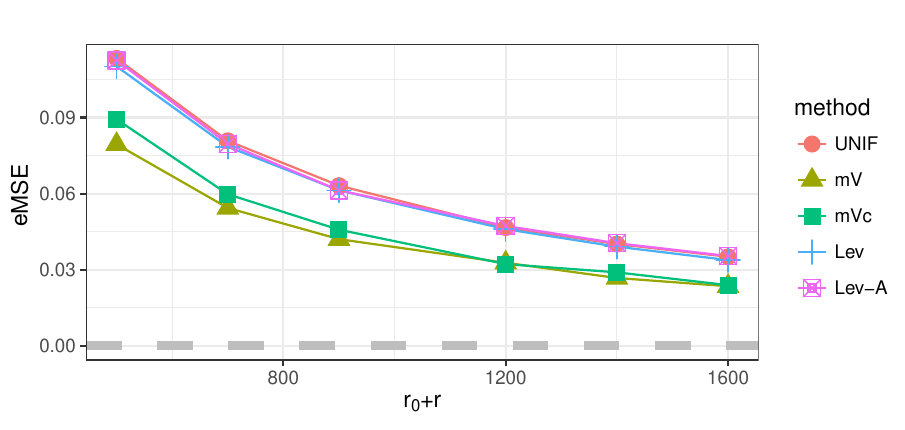}\\[-1cm]
    \caption{Case 3}
  \end{subfigure}
  \begin{subfigure}{0.49\textwidth}
    \includegraphics[width=\textwidth]{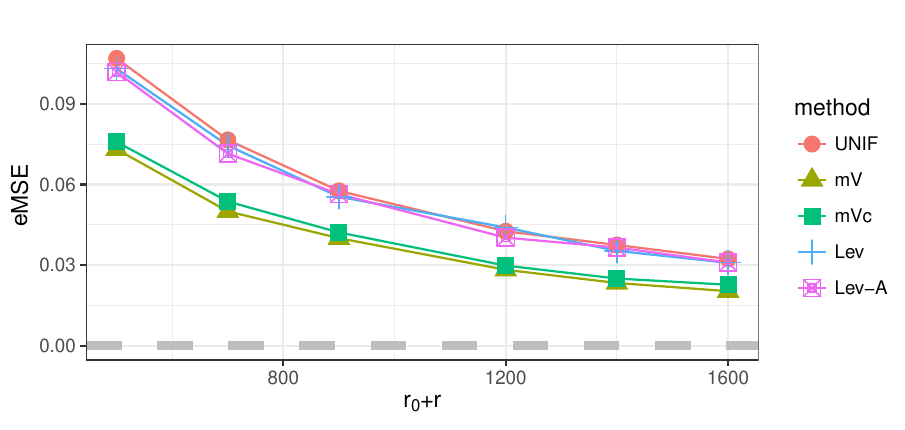}\\[-1cm]
    \caption{Case 4}
  \end{subfigure}\\[5mm]
  \caption{The eMSEs for NBR with different  subsample size ${r_0+r}$ and a fixed  $r_0/r=0.2$.
     The different distributions of covariates are listed in the beginning of Section 5.}
  \label{fig:fixratio-nb}
\end{figure}

From Figures \ref{fig:fixratio-poisson} and  \ref{fig:fixratio-nb},  we can see that our methods are slightly better than the cases that $r_0$ is fixed at 200. However, this improvement is not significant.

To explore influential factors on subsample sizes that have been discussed in Section \ref{sec:r-size} in terms of estimation accuracy, we consider additional four cases for the Poisson regression models listed as below. \begin{enumerate}[{Case S}1:]
\item The true value of $\bm\beta$ is a $7\times1$ vector of 0.5 and the covariates matrix {$X=\Sigma_n^{-1/2}\tilde{X}$. Here $\tilde{X}$ is the centralized version of  a $n\times 7$  matrix whose elements are i.i.d., generated from $U([-1,1])$, and $\Sigma_n$ is the sample covariance matrix of $\tilde{X}$ so that
  $X$ has a sample covariance matrix as $I_p$ and a condition number as 1.

\item The true value of $\bm\beta$ is a $14\times1$ vector whose first seven elements are set to be 0.5 and rest are set to be 0.1. The covariates matrix  $X=\Sigma^{-1/2}_n\tilde{X}$, where $\tilde{X}$ is the centralized version of a $n\times 14$ matrix whose elements are i.i.d. generated from $U([-1,1])$ and $\Sigma_n$ is the sample covariance matrix of $\tilde{X}$ so that the condition number of $X$ is 1 and the signal to noise ratio is nearly the same as that in Case S1.}

\item This case is the same as the Case S2 except that we replace $x_{i2}$ in Case S2 with $x_{i2}=x_{i1}+\varepsilon_{i}$ where $\varepsilon_{i}\overset{\text{i.i.d}}{\sim} U(\left[-0.4,0.4 \right] )$ for $i=1,\ldots,n$. For this setup, the condition number of $X$ is around 5.

\item This case is the same as the Case S2 except that we replace $x_{i2}$ in Case S2 with $x_{i2}=x_{i1}+\varepsilon_{i}$ where $\varepsilon_{i}\overset{\text{i.i.d}}{\sim} U(\left[-0.1,0.1 \right] )$ for $i=1,\ldots,n$. For this setup, the condition number of $X$ is around 26.

\end{enumerate}
To exclude the pilot subsampling effect,  the ideal case that $\hat{\bm\beta}_{\rm MLE}$ is given before conducting the subsampling
strategy is considered. Although this setting is hard to satisfy, the simulation provides some key insights for Theorem \ref{thm:r-size} and it is also valuable for the two step Algorithm.  The sample size $r$ is selected as 10, 15, 20, 25 and 30 times of the dimension  respectively.
For comparison, the uniform subsampling method  is also demonstrated. The empirical MSEs   are reported in Figures \ref{fig:q3_mse}.

\begin{figure}[!htp]
  \centering
  \begin{subfigure}{0.49\textwidth}
    \includegraphics[width=\textwidth]{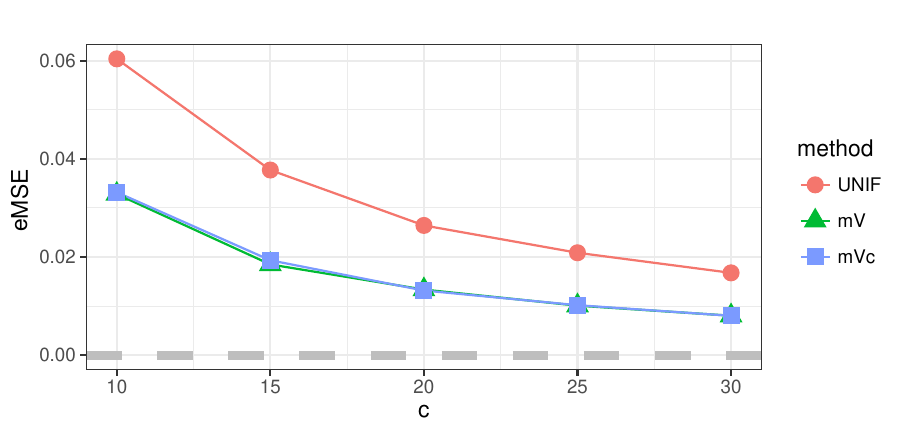}\\[-1cm]
    \caption{Case S1 ($p=7,\kappa=1$)}\label{fig:q34}
  \end{subfigure}
  \begin{subfigure}{0.49\textwidth}
    \includegraphics[width=\textwidth]{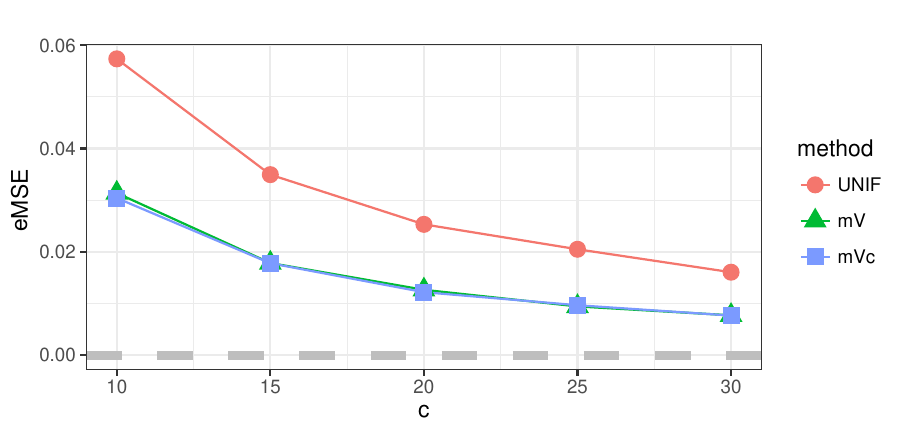}\\[-1cm]
    \caption{Case S2 ($p=14,\kappa=1$)}\label{fig:q31}
  \end{subfigure}
  \\[5mm]
  \begin{subfigure}{0.49\textwidth}
    \includegraphics[width=\textwidth]{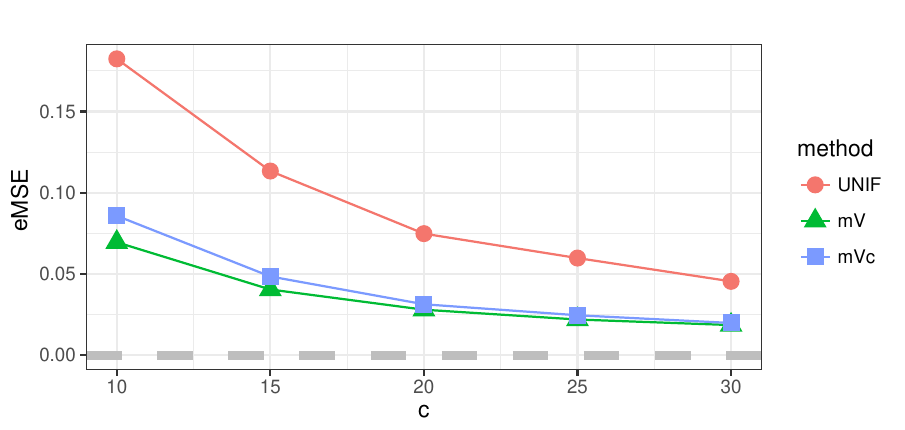}\\[-1cm]
    \caption{Case S3 ($p=14,\kappa\approx5$)}\label{fig:q33}
  \end{subfigure}
  \begin{subfigure}{0.49\textwidth}
    \includegraphics[width=\textwidth]{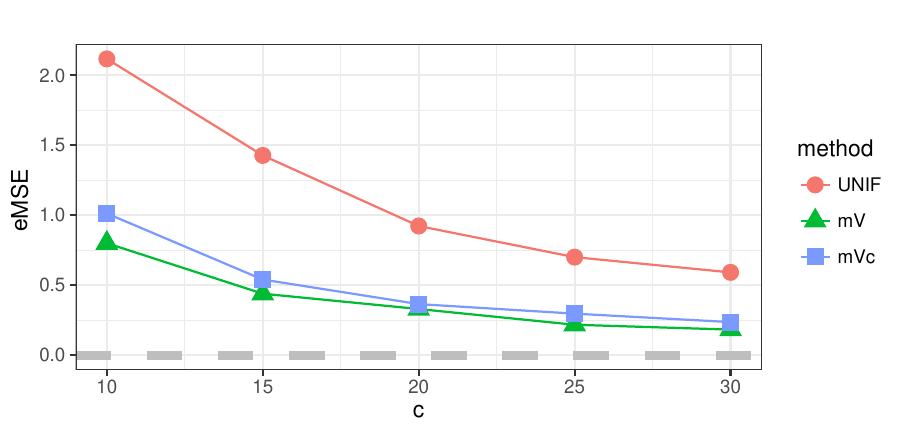}\\[-1cm]
    \caption{Case S4 ($p=14,\kappa\approx26$)}\label{fig:q32}
  \end{subfigure}
  \\[5mm]
  \caption{The eMSEs for Poisson regression with different  subsample size ${r=cp}$.
     The different distributions of covariates are listed in the beginning of Section S2.}
  \label{fig:q3_mse}
\end{figure}

 Through the simulation results reported in Figures S3(a)  and S3(b), we can see that the cases with $r=10p,\ 20p$ have similar performance when  the conditional numbers of the covariate matrix are fixed at one. And for the same dimensional case, the eMSEs become larger as the conditional number of the covariate matrix increasing. These echo the results  discussed in Section \ref{sec:r-size}.


\end{document}